\documentclass[11pt,authoryear]{elsarticle}

\makeatletter
\def\ps@pprintTitle{%
 \let\@oddhead\@empty
 \let\@evenhead\@empty
 \def\@oddfoot{}%
 \let\@evenfoot\@oddfoot}
\makeatother

\usepackage[margin=0.9in]{geometry}
\usepackage{setspace}
\usepackage[toc]{appendix}
\RequirePackage[OT1]{fontenc}

\usepackage{amsmath,amssymb,amsthm,bm,latexsym}
\usepackage{graphicx,psfrag,epsf}
\usepackage{epigraph,xcolor}
\usepackage{enumerate}
\usepackage{subcaption}
\usepackage{amsfonts}
\usepackage{float}
\usepackage{orcidlink}
\usepackage{eurosym}
\usepackage{algpseudocode}
\usepackage{algorithm}
\usepackage{arydshln}
\usepackage{easyReview} 
\usepackage{tikz}
\usepackage{enumitem}
\usepackage{adjustbox}
\usepackage{rotating}
\usepackage{url}
\usepackage{booktabs}
\usepackage{multirow}
\usepackage{hyperref}
\usepackage[capitalise]{cleveref}

\hypersetup{
    colorlinks=true,
    linkcolor=purple,
    citecolor=blue,
    urlcolor=purple,
    filecolor=black,
    menucolor=black,
    anchorcolor=black
  }

\allowdisplaybreaks

\def\I{\mathbf{I}}

\def\X{\mathbb{X}}

\def\F{\mathcal{F}}

\def\D{\mathcal{D}}
\def\var{{\textrm{Var}}\,}

\def\N{\mathbb{N}}
\def\Z{\mathbb{Z}}

\def\R{\mathbb{R}}

\def\W{\mathbb{W}}
\def\w{\bm{w}}
\def\cP{\mathcal{P}}
\def\cC{\mathcal{C}}

\newcommand{\gauss}{\mathcal{N}}

\newcommand{\ind}{\mathbb{I}}

\newcommand{\x}{\bm{X}}
\newcommand{\RNum}[1]{\uppercase\expandafter{\romannumeral #1\relax}}

\newcommand{\abs}[1]{\left\lvert #1 \right\rvert}

\newcommand{\OLS}{\textsf{OLS}}
\newcommand{\proposed}{\textsf{PROPOSED}}
\newcommand{\enet}{\textsf{ENET}}
\newcommand{\lasso}{\textsf{LASSO}}
\newcommand{\tobit}{\textsf{TOBIT}}
\newcommand{\xgb}{\textsf{XGBOOST}}

\newtheoremstyle{general}
{3mm} 
{3mm} 
{\it} 
{} 
{\bfseries} 
{.} 
{.5em} 
{} 

\theoremstyle{general}

\newtheorem{proposition}{Proposition}

\newtheorem{remark}{Remark}

\begin{document}
  
\begin{frontmatter}
\title{Optimising football transfer strategy under budget constraints: A weighted multi-criteria approach}

\author[NU]{Tathagata Basu\orcidlink{0000-0002-6851-154X}}\ead{tathagata.basu@newcastle.ac.uk}

\author[IIMB]{Soudeep Deb\orcidlink{0000-0003-0567-7339}}\ead{soudeep@iimb.ac.in}

\author[UoE]{Rishideep Roy\orcidlink{0000-0003-1883-7848}}\ead{rishideep.roy@essex.ac.uk}

\address[NU]{Newcastle University, Newcastle upon Tyne, NE1 7RU, United Kingdom.}

\address[IIMB]{Indian Institute of Management Bangalore, Bannerghatta Road, Bengaluru, KA 560076, India.}

\address[UoE]{University of Essex, Wivenhoe Park, Colchester, Essex CO4 3SQ, United Kingdom.}


\begin{abstract}
The football transfer market is a complex, dynamic environment in which clubs compete to acquire players who strengthen their squads. While several frameworks estimate a player’s worth, a comprehensive approach that captures both squad optimisation and transfer market dynamics remains limited. In this paper, we propose a quantitative framework for optimising football transfer strategy under budget constraints, integrated with a competitive bidding paradigm. Using data from professional football leagues, we construct player performance and transfer price models using linear mixed-effects frameworks that incorporate player characteristics, recent performance, team context, and league effects. The predicted ratings and estimated transfer prices are then integrated into a weighted multi-criteria constrained optimisation framework that determines a club’s transfer activities at the end of the season. Finally, these optimal transfer decisions are embedded within an independent private-value auction model with a random reserve price to analyse market behaviour when multiple teams compete for the same player. We illustrate our approach using the 2018–19 season of the English Premier League to demonstrate its ability to capture transfer-market dynamics.
\end{abstract}

\begin{keyword}
OR in sports \sep Football analytics \sep Player valuation \sep Squad optimisation \sep Transfer market
\end{keyword} 
\end{frontmatter}

\newpage

\section{Introduction}\label{sec:introduction}

Association football (though known as soccer in Australia, Canada, United States of America etc., for convenience we shall use `football' throughout this paper) has been one of the most celebrated sports globally, generating viewership from all over the world. Among different football leagues, the English Premier League (EPL) is arguably the most popular in the world. At least $8$ of the EPL clubs are worth more than $\text{\euro} 600$ million\footnote{Source: \url{https://en.wikipedia.org/wiki/Forbes_list_of_the_most_valuable_football_clubs}}. Consequently, there is a lot of interest and financial involvement in the selection of squads of these clubs. The squads for these teams are dynamic, and players are bought and sold frequently. For example, in the ongoing 2025-26 season, a total of $167$ players have been bought, and $194$ players have been sold by the EPL clubs, amounting to a total of nearly $\text{\euro} 6.5$ billion\footnote{Source: \url{https://www.transfermarkt.co.uk/}}. \cite{poli2022demographic} showed that the average tenure at a club ranges between $(1.87, \, 2.56)$ years for professional players across Europe, implying that a transaction happens for every player once in less than three years on average. Naturally, an optimum strategy for transfer is paramount in football, and we propose a novel multi-criteria optimisation method under appropriate constraints in this regard. 

The proposed methodology is described in \Cref{sec:methodology}. Before that, we find it imperative to present a short discussion (\Cref{sec:auctions-drafts-transfers}) of auctions, drafts, and transfers, which are three different mechanisms of buying/selling players in sports. It is followed by a relevant review of extant literature (\Cref{sec:lit-review}) and how we contribute to it (\Cref{sec:contribution}). The data and detailed results are illustrated in \Cref{sec:results}. We conclude the paper with important remarks and future directions in \Cref{sec:conclusions}.

\subsection{Auctions, drafts, and transfers}\label{sec:auctions-drafts-transfers}

In the realm of sports, auctions, drafts, and transfers represent three distinct mechanisms for team building and resource allocation, each with unique operational characteristics. Auctions are structured, competitive events where teams bid for players within a fixed budget. It has become quite common in cricket, in competitions like the Indian Premier League in India or the Big Bash League in Australia. These mechanisms operate as combinatorial optimisation problems under time constraints, requiring rapid decision-making and strategic adjustments as bids evolve. \cite{singh2011dynamic, ahmed2013multi} discussed a couple of different auction strategies in this regard; whereas \cite{chandrakar2021measuring, chittibabu2023base} developed various statistical and machine learning techniques to determine appropriate values for the players in the auction. In contrast, drafts are widely used in North American sports, such as American Football and basketball. Draft systems assign player selection rights based on predefined rules, such as the reverse order of league standings, to promote competitive balance. Unlike auctions, drafts are less about financial bidding and more about strategic planning. They can be viewed as sequential decision problems, where the outcome of one team's choice influences subsequent decisions, creating a multi-stage optimisation framework. The reader is referred to \cite{swartz2011drafts} for discussions on drafts and their comparison with auctions. 

Our focus in this paper, however, is on the aspect of ``transfers'' which is common in football and operates in a more decentralised fashion. Transfers involve direct negotiations between clubs about the fee, while salaries and a few other specific clauses are typically negotiated individually. The transfer window imposes a time-bound constraint, but unlike auctions or drafts, the process is not inherently competitive among all teams for the same player. Transfers represent a multi-objective optimisation problem where clubs must balance financial constraints, tactical needs, and player availability. Decision-making here is deeply influenced by market dynamics, scouting analytics, and long-term team strategy, making it a highly flexible yet complex resource allocation problem. We review some of the existing works in this domain in the next subsection. Interestingly, it must be noted that football is the only sport where the aspect of transfer is paramount and holds the key to success. Among other sports, transfers have been observed in rugby \citep{groeneveld2007matters}, though there is no formal transfer market yet. In the recent past, the NCAA or the National Collegiate Athletic Association also introduced the concepts of transfer portals, but it met with strong criticisms due to various reasons \citep{reese2023unintended}.

\subsection{Review of literature on transfer strategies}\label{sec:lit-review}

There are broadly two classes of problems which are of relevance: identification of the right players to be transferred, and appropriate valuation of them to make financially judicious decisions. From the perspective of our main objective, both problems are important, and it is of the essence here to discuss extant literature on these two aspects.

The empirical literature on football player valuation and transfer pricing has expanded rapidly over the past decade, reflecting both the growth of the global transfer market and the increasing availability of detailed performance and financial data. Early contributions modelled player values or transfer fees using player characteristics, age, position, nationality, experience, recent performance, and club-level factors. For example, \cite{tunaru2005option} treated footballers as real options and derived a theoretically grounded financial value using performance indices and club turnover, while \cite{carmichael1999labour} emphasised the role of bargaining, player productivity, club status, and market power in shaping transfer outcomes. Subsequent empirical studies \citep[e.g.,][]{ruijg2015determinants, Mueller2017, Metelski2021} further showed that player attributes, positional roles, recent performance, team characteristics, and league context are important determinants of transfer fees. In parallel, multiple approaches have also been proposed to model player ratings and evaluate their performance. \citet{BAKER2018659} constructed time-varying team ratings for international football using cross-validated shrinkage methods while \citet{KHARRAT2020726} developed plus-minus player ratings for football using regularised regression, both demonstrating that structured statistical rating systems can capture meaningful variation in quality beyond simple performance counts.

In terms of improving predictive accuracy for transfer fees or market values, focus has been given on incorporating richer performance metrics and more flexible statistical or machine learning models. \cite{campa2022exploring} studied systematic differences between transfer fees and estimated inherent values, highlighting the importance of factors such as transfer-window timing, league match, playing position, contract expiry, and age. \citet{McHaleHolmes2023} combined advanced event-based metrics with machine-learning methods to estimate transfer fees, while \citet{Poli2024} proposed a global statistical model that combines sporting, biographical, and contractual variables. \citet{COATES2022523} evaluated crowd-sourced valuations as proxies for unobserved market prices and found systematic biases relative to realised transfer fees, highlighting the importance of modelling actual transaction prices directly. Complementary work by \citet{Thrane2024} showed that composite performance indices can provide additional explanatory power for market values. More broadly, \citet{FranceschiEtAl2024} provide a systematic review of player valuation research, while \citet{HillEtAl2025} placed football player valuation within wider approaches such as intrinsic, relative, real-options, and probabilistic valuation methods. Together, these emphasise that the extant literature primarily treats valuation as a pricing problem, rather than a decision problem embedded in a club's transfer strategy.

On that note, several recent studies have recast transfers as investment decisions under uncertainty. \citet{FollertGleissner2024}, for example, developed a decision-oriented valuation model in which a transfer is treated as an investment project generating uncertain sporting and financial benefits. Related work by \citet{KhalifeEtAl2025, Liu2025} highlighted how career stage, expected future performance, role-specific metrics, and post-prime depreciation patterns influence player valuation. These studies move beyond purely descriptive pricing models and emphasise the decision-making context in which transfer valuations are used. A closely related stream of research studies football squad construction directly as an optimisation problem. \cite{pantuso2017football}, for instance, formulated the football team composition problem as a stochastic programming model, where the objective is to maximise the expected market value of the team while accounting for budget limits, competition regulations, and uncertainty in future player values. Extending this line, \cite{pantuso2021maximizing} developed a chance-constrained model for football transfer-market decisions that seeks to construct a high-performing team while adapting to different budgets and financial-risk profiles. More recently, \cite{dantas2025squad} proposed an integer programming model for squad optimisation that incorporates tactical requirements and evaluates the approach using EPL clubs. 

Despite this progress, the link between valuation models, squad quality and transfer strategy remains relatively weak. The above studies demonstrate the relevance of optimisation-based decision support for football squad construction; however they either take the club’s strategic context as given or treat players independently, without considering positional complementarities, financial constraints or risk diversification. The present paper contributes to this strand of literature by shifting the focus from isolated valuations to transfer strategy as a dynamic portfolio problem. Building on insights of the valuation and investment-based strands, we develop a framework that explicitly links a club’s financial constraints and sporting objectives to its pattern of acquisitions and disposals over time. In doing so, we position transfer decisions within a coherent strategic model that can be used to evaluate alternative policies (e.g.\ focusing on development and resale, buying established prime-age players, or targeting undervalued segments), rather than only benchmarking individual transfer prices ex post.

\subsection{Our contribution}\label{sec:contribution}

It must be evident from the above discussions that transfers in football present a multi-criteria optimisation problem. Clubs aim to maximise squad strength while adhering to financial constraints, such as budget caps or UEFA’s Financial Fair Play regulations\footnote{Link: \url{https://en.wikipedia.org/wiki/UEFA_Financial_Fair_Play_Regulations}}. Uncertainty is inherent in transfers, with factors like fluctuating market values, player injuries, and competition from rival clubs affecting decision outcomes. Advanced techniques, including stochastic optimisation and game theory, can be applied to model these uncertainties and develop robust transfer strategies. By studying transfers through the lens of operational research, clubs can better navigate the complexities of the market and make decisions that optimise both short-term performance and long-term sustainability. We focus on the transfer-fee component of squad reallocation rather than the full accounting cost of a player acquisition. In practice, clubs also account for wages, contract length, signing bonuses, agent fees, amortisation, release clauses, and home-grown registration rules. These variables are central to club-level decision-making, but they are not consistently observable across leagues and seasons at the scale required for our empirical application. Moreover, in most cases, a player's transfer fee is indicative of their wage, and therefore, the wage negotiation and other incurred costs are only discussed after the seller club agrees to sell a player. We therefore treat transfer fees as the principal observable market-clearing cost and develop a modular optimisation framework into which additional financial terms can be incorporated when reliable data becomes available.

In this article, we first handle the questions of expected performance and transfer valuation of various players, should they join a focal club. While these have been explored before, we present comprehensive models that take into account appropriate variables related to the players and have been proven to outperform various other methods from existing studies. Next, we focus on developing an optimal transfer strategy from the perspective of a club, involving the buying and selling of players, to create an efficient squad under suitable constraints. This naturally leads to situations where particular players are targeted by multiple clubs, which in turn leads us to investigate an auction mechanism to summarise transfer market dynamics. We find it imperative to highlight that, to the best of our knowledge, such a comprehensive analysis of transfer strategy has not been done before.

\section{Proposed methodology}\label{sec:methodology}

Throughout this article, we shall use $\Z,\N,\R$ to denote the set of integers, the set of natural numbers, and the set of real numbers, respectively. For any $n \in \N$, we shall denote the set $\{1, \ldots, n \}$ by $[n]$. The notations $\I$ and $\bm{0}$ are used to indicate an identity matrix and a vector of all zeros (of appropriate order). A Gaussian distribution with mean $\bm{\gamma}$ and dispersion $\Gamma$ is denoted by $\gauss(\bm{\gamma},\Gamma)$.

To describe the main methodology, consider a collection of teams $\mathcal{C}$. With the objective of building a strategy for a team $c_0 \in \mathcal{C}$, let $\cP_{c_0}$ denote the index set of players who are already in $c_0$. We consider $\cP = \cup_{c\in\mathcal{C}} \, \cP_c$ as the complete index set of $N$ players ($N \in \N$) who can be considered for $c_0$ (including players already in $\cP_{c_0}$). Then, the objective is to design a strategy which deduces the optimum set of binary decision variables $\{x_i, \; i \in [N]\}$, i.e., whether the $i^{th}$ player should be selected in the team $c_0$, under relevant constraints. Without loss of generality, we assume that at the start, for all $i \in [N]$, there exists $\Tilde{c} \in \cC$ such that the $i^{th}$ player is in the squad for team $\Tilde{c}$. Let $Y_{i, c \to c', s}$ be the selling price of the $i^{th}$ player if they are sold from team $c$ to team $c'$ after the completion of season $s$, or during season $s+1$. This is a random variable, and we adopt a linear mixed-effects modelling strategy for this. Also, associated with the $i^{th}$ player, we let $R_{ics}$ denote the importance or value of the player in team $c$ during season $s$. This term is a metric that can quantify the contribution of the player to the team, and we shall rely on a rating system to model this. The details are elaborated later in the article. Hereafter, we use the terms value and rating interchangeably to indicate the same variable $R$.

From the viewpoint of practical implementation, our optimisation procedure works in three steps. First, we utilise a suitable model to forecast the value of every player in the entire pool if they are part of the team $c_0$ in the next season. This is explicated in \Cref{sec:model-rating}. Second, we use another appropriate regression framework (see \Cref{sec:model-transfer}) to predict the transfer fee that the team $c_0$ needs to pay for players outside their current squad, i.e., in $\cP \setminus \cP_{c_0}$. The same model is also utilised to forecast the minimum expected selling price for any player from $\cP$. Subsequently, in the next step of the procedure, we develop an appropriate optimisation framework to suggest which players to buy and which players to sell, while adhering to the required constraints. This part of the algorithm is explained in \Cref{sec:optimisation}. Finally, to get an insight on the transfer market dynamics, in \Cref{sec:bidding}, we formulate an independent private value auction framework with random reserve price to understand the effect on the final transfer price of a player when multiple clubs are recommended to buy them.

\subsection{Modelling the value/rating of players}\label{sec:model-rating}

To describe our modelling procedure, we use $i$ to index the players, $c$ for the teams, $\ell$ for the leagues, and $s$ for the seasons. For a player $i$ rostered at team $c$ during season $s$, let $R_{ics}\in\R$ denote the rating. As mentioned earlier, the first step of our pipeline constructs the counterfactual forecast for the focal club $c_0$ (denote it as $\hat{R}_{ic_0,s+1}$) by modelling the rating values in a particular season as a function of all information available at the end of the previous season. These forecasts are later going to be used as player-value inputs for the optimisation algorithm in \Cref{sec:optimisation}.

The model we propose is in the form of a linear mixed-effects model that captures (i) systematic effects of age, anthropometrics, recent form, team, and position; and (ii) latent heterogeneity associated with origin-destination club corridors and league environments. Specifically, the model structure is
\begin{equation*}
    R_{ics} = \mu_{ics} + u_{(c_{s-1}\to c_s)} + u^{\mathrm{cur}}_{\ell(c_s)} + u^{\mathrm{last}}_{\ell(c_{s-1})} + \varepsilon_{ics},
\end{equation*}
where $\mu_{ics}$ is the average rating that is expressed in an additive form of the effects of different features, $c_s$ denotes the club the player is part of in season $s$, the term $u_{(c_{s-1}\to c_s)}$ is a random intercept for the origin-destination club corridor corresponding to the specific observation, $u^{\mathrm{cur}}_{\ell(c_s)}$, $u^{\mathrm{last}}_{\ell(c_{s-1})}$ are random intercepts, respectively for the leagues of the current club and the last club the player has played in; and $\varepsilon_{ics}$ is white noise. Note that $c_{s-1}=c_s$ for players who have not been transferred, and that is handled appropriately in our model. We assume mutual independence for the random effects and let
\begin{equation*}
    u_{(c_{s-1}\to c_s)}\sim\mathcal{N}(0,\sigma^2_{\mathrm{club}}),\quad
    u^{\mathrm{cur}}_{\ell}\sim\mathcal{N}(0,\sigma^2_{\mathrm{cur}}),\quad
    u^{\mathrm{last}}_{\ell}\sim\mathcal{N}(0,\sigma^2_{\mathrm{last}}),\quad
    \varepsilon_{ics}\sim\mathcal{N}(0,\sigma^2).
\end{equation*}
The average rating is modelled as the following linear term
\begin{equation}\label{eq:mean-structure-rating}
\begin{split}
    \mu_{ics} &= \beta_0 + \left(\beta_{a1}A_{is} + \beta_{a2}A_{is}^2\right) + \beta_h H_{is} + \beta_w W_{is} + \bm{\beta}_{\mathrm{pos}}^{\top}\ind\{P_i=\cdot\} + \beta_{\mathrm{lr}}\,R_{i c_{s-1},\, s-1} + \beta_{\mathrm{tr}}\,\mathrm{TR}_{c,s} \\
    &\quad + \beta_{\mathrm{tr,sp}}\,\mathrm{TR}^{(\mathrm{pos})}_{c,s} + \beta_{\mathrm{td,sp}}\,\mathrm{TD}^{(\mathrm{pos})}_{c,s} + \beta_{\mathrm{nt}}\,\mathrm{nTrans}_{i, s} + \beta_{\mathrm{st}}\,\mathrm{SameTeam}_{i, s} + \beta_{\mathrm{sn}}\,\mathrm{SameNat}_{i, c}.
\end{split}
\end{equation}
In the above, $A_{is}, H_{is}, W_{is}$ are the age, height and weight, respectively, for the $i^{th}$ player during the season $s$. We consider the quadratic term in the effect of age to match with the idea of diminishing returns, which has been argued in related research works \citep[see][among others]{dendir2016soccer}. The position or role of the player (goalkeeper, defender, midfielder, forward, with goalkeeper serving as the baseline category) is incorporated in the equation via indicators $\ind\{P_i=\cdot\}$ and $\bm{\beta}_{\mathrm{pos}}$ is the associated parameter vector. Next, $R_{i c_{s-1},s-1}$ reflects the rating of the same player in the last season (note that the player may have been in a different club or league). We use $\mathrm{TR}_{c,s}$ and $\mathrm{TR}^{(\mathrm{pos})}_{c,s}$ to denote, respectively, the median rating achieved by all players in the team $c$ and the same by the players in the same position in team $c$. These two features are essentially used to indicate the strength of the overall squad and the strength of the players in the same position. Along the same lines, $\mathrm{TD}^{(\mathrm{pos})}_{c,s}$ is the number of players in the particular position, which reflects the squad depth in the specific role. The variable $\mathrm{nTrans}_{i,s}$ counts the total number of times the player has been transferred before season $s$; $\mathrm{SameTeam}_{i,s}$ is a binary variable indicating if $c_{s-1} = c_s$; and $\mathrm{SameNat}_{i,c}$ is another binary variable denoting if the player's nationality matches with the country the club is in.

Overall, the model can be written in matrix form as
\begin{equation*}
    \bm{R} = \X\bm{\beta} + \mathbf{V}\bm{\eta} + \bm{\varepsilon}, \quad \bm\eta \sim \gauss\left(\bm{0}, \Omega\right),   \;\bm{\varepsilon} \sim 
    \gauss\left(\bm{0},\sigma^2\I\right),
\end{equation*}
where $\bm{R}$ is the vector of all ratings, $\bm{\beta}$ is the vector of parameters mentioned in \eqref{eq:mean-structure-rating} and $\X$ is the corresponding design matrix. The term $\mathbf{V}\bm{\eta}$ is the collection of all random effects with $\Omega$ being a block diagonal matrix with the blocks $\sigma^2_{\mathrm{club}}\I$, $\sigma^2_{\mathrm{cur}}\I$, $\sigma^2_{\mathrm{last}}\I$.

In terms of implementation, the unknown parameters $(\bm{\beta},\sigma^2,\sigma^2_{\mathrm{club}},\sigma^2_{\mathrm{cur}},\sigma^2_{\mathrm{last}})$ are estimated by restricted maximum likelihood (REML) approach, supported by the \texttt{lmerTest} package in \texttt{R} \citep{lmertest}. Once the model is estimated using the training data, we need to obtain the out-of-sample predictions for the values of all players in the next season, if they play for the focal club $c_0$.
To that end, for every candidate player $i$ considered at $c_0$ in season $s+1$, first we form the feature vector $\x_{ic_0,s+1}$ (structured according to the specification of $\X$) based on all information available at the end of the season $s$. Then, the predicted value is
\begin{equation*}
    \hat{R}_{i c_0,\, s+1} = \x_{ic_0,s+1}^{\top}\hat{\bm{\beta}} + \hat{u}_{(c_s\to c_0)} + \hat{u}^{\mathrm{cur}}_{\ell(c_0)} + \hat{u}^{\mathrm{last}}_{\ell(c_s)},
\end{equation*}
where $c_s$ is the team where the player played in the season $s$. It is important to point out that if the corridor $(c_s\to c_0)$ is unobserved historically, we set $\hat{u}_{(c_s\to c_0)}=0$ and use the marginal forecast $\x_{ic_0,s+1}^{\top}\hat{\bm{\beta}}+\hat{u}^{\mathrm{cur}}_{\ell(c_0)}+\hat{u}^{\mathrm{last}}_{\ell(c_s)}$ to ensure coherent predictions for new corridors.

\subsection{Modelling the transfer fee}\label{sec:model-transfer}

The second step of our methodology is to develop a suitable model for the transfer fee of the players. Recall that for any possible move of player $i$ from a selling club $c$ to a buying club $c'$ after the completion of season $s$, we denote the transfer fee by $Y_{i,c\to c',s}$. Since fees are positive and markedly right-skewed, we work with the log-transformed response $Z_{i,c\to c',s}=\log Y_{i,c\to c',s}$ and adopt a linear mixed-effects modelling strategy that (i) captures systematic effects of player attributes, form, market conditions; and (ii) absorbs latent heterogeneity in willingness-to-pay and mark-ups via club-level random effects. Specifically, conditional on observed features for transactions following season $s$, we write
\begin{equation*}
    Z_{i,c\to c',s}  =  \nu_{i,c\to c',s}  +  b^{\mathrm{buy}}_{c'}  +  b^{\mathrm{sell}}_{c}  +  \epsilon_{i,c\to c',s},
\end{equation*}
where $\nu_{i,c\to c',s}$ is the mean structure detailed below, $\epsilon_{i,c\to c',s}$ is white noise, $b^{\mathrm{buy}}_{c'}$ and $b^{\mathrm{sell}}_{c}$ are independent random intercepts corresponding to the buying and the selling clubs respectively. We let
\begin{equation*}
    b^{\mathrm{buy}}_{c'} \sim \gauss\left(0,\sigma^2_{\mathrm{buy}}\right), \qquad
    b^{\mathrm{sell}}_{c} \sim \gauss\left(0,\sigma^2_{\mathrm{sell}}\right), \qquad
    \epsilon_{i,c\to c',s} \sim \gauss\left(0,\tau^2\right).
\end{equation*}
On the other hand, the average log-fee is expressed additively as
\begin{equation}\label{eq:mean-structure-fee}
\begin{split}
    \nu_{i,\,c\to c',\,s}  &= \theta_0  +  \theta_1 t_s  +  \left(\theta_{a1}A_{is} + \theta_{a2}A_{is}^{2}\right) +  \theta_{h}H_{is} + \theta_{w}W_{is} + \bm{\theta}_{\mathrm{pos}}^{\top}\ind\{P_i=\cdot\} + \theta_{\mathrm{cr}} \mathrm{CR}_{is} + \theta_{r}R_{ics} \\
    &\quad + \theta_{\mathrm{tp}} \mathrm{TP}_{is} + \theta_{\mathrm{g}} \mathrm{G}_{is} + \theta_{\mathrm{gc}} \mathrm{GC}_{is}  + \theta_{\mathrm{pen}}\,\mathrm{Pen}_{is} + \theta_{\mathrm{sh}}\,\mathrm{Sh}_{is} + \theta_{\mathrm{pa}} \mathrm{PA}_{is} + \theta_{\mathrm{c}} \mathrm{C}_{is} + \theta_{\mathrm{cl}} \mathrm{Cl}_{is} \\
    & \quad + \theta_{\mathrm{in}}\,\mathrm{Int}_{is} +  \theta_{\mathrm{fs}} \mathrm{Fee}_{\ell(c),s} + \theta_{\mathrm{fb}} \mathrm{Fee}_{\ell(c'),s} + \theta_{\mathrm{td,s}} \mathrm{TD}_{c,s} + \theta_{\mathrm{td,b}} \mathrm{TD}_{c',s}  \\
    &\quad + \theta_{\mathrm{tr,sp,s}} \mathrm{TR}^{\mathrm{(pos)}}_{c,s} + \theta_{\mathrm{tr,sp,b}} \mathrm{TR}^{\mathrm{(pos)}}_{c',s} + \theta_{\mathrm{tr,s}} \mathrm{TR}_{c,s} + \theta_{\mathrm{tr,b}} \mathrm{TR}_{c',s}.
\end{split}
\end{equation}

In the above equation, $t_{s}$ is used as a deterministic time index to capture a linear inflation pattern in the transfer market. The effects of age, height, weight, and position are incorporated in the same way as in model \eqref{eq:mean-structure-rating}, whereas the impact of the player's overall quality is captured through average career rating until the end of season $s$ (denoted as $\mathrm{CR}_{is}$). Next, considering that different aspects of the player's form in the recent season may impact the transfer fee, we include in the model the most recent rating $R_{ics}$, together with total game-time ($\mathrm{TP}_{is}$), goals ($\mathrm{G}_{is}$), goal contribution (total goals and assists, denoted by $\mathrm{GC}_{is}$), penalty-scoring accuracy ($\mathrm{Pen}_{is}$), number of shots ($\mathrm{Sh}_{is}$), passing accuracy ($\mathrm{PA}_{is}$), number of cards (red cards are counted as two yellow cards and the total counts are denoted by $\mathrm{C}_{is}$), number of clearances ($\mathrm{Cl}_{is}$), and the number of interceptions ($\mathrm{Int}_{is}$). The next set of variables in the model are used to capture the effects of market-level conditions at both clubs. The league-season median of realised fees in the seller and buyer leagues are indicated by $\mathrm{Fee}_{\ell(c),s}$ and $\mathrm{Fee}_{\ell(c'),s}$ respectively, where $\ell(c)$ denotes the league in which the club $c$ plays. Finally, the model captures the effects of team depth and overall quality at both clubs using variables akin to model \eqref{eq:mean-structure-rating}. 

Overall, the model can be written in matrix form as
\begin{equation}\label{eq:transfer-model}
    \bm{Z}  =  \W\bm{\theta} + \mathbf{S}\bm{\zeta} + \bm{\epsilon}, 
    \qquad \bm{\zeta} \sim \gauss \left(\bm{0}, \Xi\right), 
    \quad \bm{\epsilon}\sim\gauss\left(\bm{0},\tau^2\I\right),
\end{equation}
where $\bm{Z}$ stacks the observed log-fees, $\W$ is the design matrix obtained from \eqref{eq:mean-structure-fee} with associated parameter vector $\bm{\theta}$, and the term $\mathbf{S}\bm{\zeta}$ collects the buyer/seller random intercepts. Note that the covariance matrix for the random terms is given by a block diagonal matrix with blocks $\sigma^2_{\mathrm{buy}}\I$ and $\sigma^2_{\mathrm{sell}}\I$. We estimate these unknown parameters $(\bm{\theta},\tau^2,\sigma^2_{\mathrm{buy}},\sigma^2_{\mathrm{sell}})$ by REML, as before. Once the model is estimated on historical transfers, we use it to produce out-of-sample forecasts for any prospective move of player $i$ from a current club $c$ to a focal club $c_0$ in the next window. Let $\w_{i,c\to c_0,s}$ denote the corresponding feature vector (structured to match $\W$). The predicted log-fee is then given by
\begin{equation*}
    \widehat{Z}_{i,c\to c_0,s}  =  \w_{i,c\to c_0,s}^{\top}\hat{\bm{\theta}}  +  \hat{b}^{\mathrm{buy}}_{c_0}  +  \hat{b}^{\mathrm{sell}}_{c}.
\end{equation*}

Of course, if either the buying or the selling club has no support in the training data, we set the corresponding estimates to $0$ and use the marginal forecast. Further, to obtain a fee on the original scale, the predicted values are transformed using the exponential function. These values, along with the predicted ratings from \Cref{sec:model-rating}, are used in building the transfer strategy, as elaborated next.

\subsection{Optimisation strategy for transfer}\label{sec:optimisation}

Recall that $\cP$ denotes the entire pool of players and $\cP_{c_0}$ denotes the current squad for a particular club $c_0 \in \mathcal{C}$. In order to design the algorithm of finding the optimal transfer strategy for $c_0$, we primarily rely on two variables: (i) the importance of every player in the proposed squad in the next season, captured through the estimated rating variable $R_i$, and (ii) the predicted transfer fee $Y_i$ for any player if they get transferred to $c_0$. Note that we drop the hat signs and the subscripts indicating clubs and seasons for notational clarity throughout this discussion. As described in the previous section, $Y_i$ is modelled as a log-normal random variable, and we let $\log Y_i \sim \gauss(\mu_i, \sigma_i^2)$. Further, set $r_i$ as the maximum possible selling price for a player who is in the current squad $\cP_{c_0}$ but is going to be excluded from the final squad. Our key objective is to determine the set of decision variables $x_i \in \{0, 1\}$, indicating whether player $i \in \cP$ should be selected for the proposed squad ($x_i = 1$) or not ($x_i = 0$). For this analysis, the total budget available for transfers is assumed to be capped at $B_{\text{max}}$.

At first, we determine the total cost in a transfer window as 
\begin{equation*}
    \mathrm{cost} = \sum_{i \in \cP \setminus \cP_{c_0}} x_i \mathbb{E}\left(Y_i\right) + \sum_{i \in \cP_{c_0}} (1 - x_i) \left(\mathbb{E}\left(Y_i\right) - r_i\right),
\end{equation*}
where the first component in the model comes from buying players from other clubs, whereas the second component comes from selling players from the current squad. Clearly, when the maximum selling price ($r_i$) of a player is less than their expected market value, the club incurs a sure loss, which can be seen as a transfer cost. Next, to determine the risk of the transfer cost, we consider the variability of the cost incurred due to buying players, as any club ideally have complete control over selling a player and can always deny a prospective buyer if the offered value is not deemed suitable. Under that setting, the risk associated with all possible transfers is determined as
\begin{equation*}
    \mathrm{risk} = \sqrt{\sum_{i \in \cP \setminus \cP_{c_0}} x_i \var (Y_i)}.
\end{equation*}
Finally, it is of primary interest to the club to maximise the overall quality of the players in the club. For that, following the earlier mentioned ideas, we define the metric
\begin{equation*}
    \mathrm{quality} = \sum_{i \in \cP} x_i R_i.
\end{equation*}

We treat the optimisation objective as a decision-support criterion rather than an estimated utility function. We introduce mixing parameters $\lambda_1$, $\lambda_2$, and $\lambda_3$ as user-specified preference weights that allow the framework to represent different strategies. A club pursuing immediate sporting success may assign greater weight to projected squad quality, whereas a club operating under tighter financial constraints may assign greater weight to expected cost and risk. The role of the model is therefore conditional: given predicted player ratings, predicted transfer-fee distributions, and a user-specified strategic objective, it identifies the optimum squad configurations. Similarly, several constraints are deliberately specified below as configurable planning parameters rather than fixed assumptions, and the numerical values used in our empirical application are illustrative choices to reflect plausible squad management tactics. Then, with suitable mixing parameters $\lambda_1$, $\lambda_2$ and $\lambda_3$, we define the objective
\begin{equation*}
    \F = -\left(\lambda_1 \; \mathrm{cost} + \lambda_2 \; \mathrm{risk}\right) + \lambda_3 \; \mathrm{quality}
\end{equation*}
which needs to be maximised to ensure that the team's total strength is maximised while minimising the total expected transfer cost and the total risk associated with the transfers (of course, the total cost should be maintained within the maximum budget with high confidence). Additionally, to mimic a realistic transfer scenario, we impose a few other constraints on the variables in the model, motivated by the settings of the game as well as the leagues. These are mentioned below.

\paragraph{Chance constraint}

The price distributions of players are modelled through a log-linear equation. Therefore, to consider the stochasticity of the problem, we impose a chance constraint on the total buying price. With $\alpha>0$ denoting a user-defined threshold for maximum error, this constraint is of the form
\begin{equation*}
    \Pr\left( \sum_{i \in \cP \setminus \cP_{c_0}} x_i Y_i  \leqslant B_{\text{max}} \right) \geqslant 1 - \alpha.
\end{equation*}
To obtain the mean and variance of the total buying cost, we utilise the moment matching approximation formula given by \cite{marlow1967} for the sum of log-normal variables. According to it, the distribution of $\sum_{i \in \cP \setminus \cP_{c_0}} x_i Y_i$ can be approximated with another log-normal distribution whose parameters are given by
\begingroup
\footnotesize
\begin{equation*}
     \mu_{\ast} = \log \left[\sum_{i \in \cP \setminus \cP_{c_0}} x_i\exp \left(\mu_i + \sigma^2_i/2\right)\right] -  \frac{\sigma^2_{\ast}}{2}, \; \;  
     \sigma^2_{\ast} = \log \left[\frac{\sum_{i \in \cP \setminus \cP_{c_0}} \left \{x_i\exp \left(2\mu_i+\sigma^2_i\right) \left(\exp(\sigma^2_i) - 1\right)\right\}} {\left(\sum_{i \in \cP \setminus \cP_{c_0}} x_i\exp\left(\mu_i + \sigma^2_i/2\right) \right)^2} + 1\right].
\end{equation*}
\endgroup
Using the above, and taking $z_{\alpha}$ as the critical value of the standard normal distribution corresponding to the confidence level $(1 - \alpha)$, an alternative representation of the chance constraint can be written as 
\begin{equation}
    \mu_{\ast} + z_{\alpha} \sigma_{\ast} \leqslant \log(B_{\text{max}}). \label{eq:deterministic_chance_constraint}
\end{equation}

\paragraph{Squad specific constraints}

In football, making the squad strength abnormally high defeats the purpose of team selection, as ultimately only the playing 11 and the substitutes participate in the game, and the rest will still have to be paid their salary. The same can be said about restrictions on the number of players in the squad for the different playing positions, which are again based on the dynamics of the game as observed in reality. With that in view, the maximum number of players that can be included in the squad is set to $\overline{k_{\mathrm{tot}}}$, i.e., we take $\sum_{i \in \cP} x_i \leqslant \overline{k_{\mathrm{tot}}}$.   

On the other hand, our dataset $\D$ lists four key positions of the players: Goalkeeper, Defender, Midfielder and Forward. We impose constraints that the minimum number of players in these positions are $\underline{k_{\mathrm{g}}}$, $\underline{k_{\mathrm{d}}}$, $\underline{k_\mathrm{m}}$ and $\underline{k_\mathrm{f}}$, respectively. Moreover, since only one goalkeeper plays at any point in time and they are specialised for that single role, having too many goalkeepers in the squad creates an imbalance. Therefore, we also put a constraint on the maximum number of goalkeepers (denote it as $\overline{k_{\mathrm{g}}}$). For our implementation purposes in this study, we shall take $\overline{k_{\mathrm{tot}}}$ to be 30, $\underline{k_\mathrm{g}}$ is set as $2$,  $\overline{k_\mathrm{g}}$ is $4$, $\underline{k_\mathrm{f}}$ is $4$, $\underline{k_\mathrm{d}}$ and $\underline{k_\mathrm{m}}$ are both $8$. On a related note, during a transfer window, a club may have to buy new position-specific players to meet particular needs, which may occur due to injuries or other unavoidable reasons. To achieve that, we allow a minimum number of position-specific players to be bought during a transfer window and let them be denoted as $\underline{kb_\mathrm{g}}$, $\underline{kb_\mathrm{d}}$, $\underline{kb_\mathrm{m}}$ and $\underline{kb_\mathrm{f}}$, respectively. Since these are specific to every club, for the generality of our implementation, we set these numbers to zero.

Furthermore, during the transfer window, a club may want to build a young and prospective team to ensure a long-term impact as a team while improving the squad's importance. However, to ensure a smooth transition, it is also recommended to retain a certain number of players within the team to match their playing style. So, we impose three constraints related to the quality of the squad: the average age of the squad must not increase, the average value of the squad must not get worse, and the club must retain a minimum of $\underline{k_{\mathrm{retain}}}$ players. We shall set $\underline{k_{\mathrm{retain}}} = 15$ for our illustration.

\paragraph{Brand exposure constraints}

European clubs usually aim to improve their overall brand value in various ways. One such approach is to transfer players between different leagues. To incorporate that, we set a few conditions. First, the minimum number of players to be bought from other continents is denoted as $\underline{kb_{\mathrm{oth}}}$. However, to avoid potentially problematic or expensive bureaucratic issues in buying players from other continents, we also set $\overline{kb_{\mathrm{oth}}}$ to be the maximum number of players bought from other continents. Second, the minimum number of players from the leagues in the same country is set as $\underline{kb_{\mathrm{loc}}}$. And third, the minimum number of transfers from other top European leagues is designated as $\underline{kb_{\mathrm{top}}}$. For illustration purposes in this paper, we shall set $\underline{kb_{\mathrm{top}}}, \; \underline{kb_{\mathrm{oth}}}, \; \underline{kb_{\mathrm{loc}}} = 0$ and $\overline{kb_{\mathrm{oth}}}=2$.

\paragraph{Financial constraints}

We have already noted above that the overall cost must be maintained within the specified budget with high confidence. While that ensures a proper management of the total cost, a club may also want to ensure a certain amount of profit, possibly by selling some players from current squads during the transfer window. We implement this through a constraint so that the minimum profit from selling players is more than $\underline{\mathrm{profit}}$ in the respective currency. Finally, we emphasise that buying or selling too many players can lead to financial fair-play violations discussed in \Cref{sec:contribution}. Therefore, it is important to set a threshold for the total number of transfers ($\overline{k_{\mathrm{transfer}}}$). For our illustration purpose, we set this at 10. 

\paragraph{Additional conditions}

In addition to the aforementioned constraints, we consider four other aspects. First, a club may want to build their gameplay strategy around specific players. Second, a club may want to sell specific players in a transfer window to ensure they do not become free agents. Third, a club may not want to sell their top-performing players unless they are explicitly made available for transfer due to other reasons. And fourth, a club may not want to sell young players as they want to develop the players for a long-term plan. We acknowledge the importance of these four requirements and note that these need not be added as constraints inside the optimisation algorithm, but can be easily managed by appropriately modifying the set of decision variables before running the optimisation routine and by updating the other constraints accordingly. As an example, our illustrations of the proposed algorithm adds a no-sale condition on the topmost goalkeeper, two top-performing defenders, two top-performing midfielders, and the top-performing striker, barring the players who are transfer-listed according to the dataset. Indeed, for such players, it will be interesting to see whether our algorithm suggests selling these players, and if so, what the fee should be. 

Now, for the constrained optimisation, we start by substituting \eqref{eq:deterministic_chance_constraint} into the original problem, and we obtain the deterministic equivalent of the chance constrained optimisation, which, along with the mentioned constraints, can be written in the following manner:
\begin{equation*}
    \text{Maximise} \quad \bigl[-\left(\lambda_1 \; \mathrm{cost} + \lambda_2 \; \mathrm{risk}\right) + \lambda_3 \; \mathrm{quality}\bigr]  
\end{equation*}
subject to decision variables $x_i \in \{0,1\}$ for $i\in \cP$, and the constraints
\begingroup
\small
\setstretch{0.8}
\setlength{\jot}{1pt}
\renewcommand{\theequation}{C\arabic{equation}}
\setcounter{equation}{0}
\newcommand{\csum}[1]{\sum\nolimits_{#1}}
\begin{align}
\mu_{\ast} + z_{\alpha} \sigma_{\ast} &\leqslant \log(B_{\text{max}}), \label{con:C1}\\
\csum{i\in \cP} x_i &\leqslant \overline{k_{\mathrm{tot}}}, \label{con:C2}\\
\csum{i\in \cP_{c_0}} x_i &\geqslant \underline{k_{\mathrm{retain}}}, \label{con:C3}\\
\csum{i \in \cP_{c_0}} (1-x_i) + \csum{i \in \cP \setminus \cP_{c_0}} x_i
&\leqslant \overline{k_{\mathrm{transfer}}}, \label{con:C4}\\
\csum{i \in \cP_{c_0}} (1-x_i)\,\mathbb{E}(Y_i) &\geqslant \underline{\mathrm{profit}}, \label{con:C5}\\
\underline{k_{\mathrm{g}}}\leqslant \csum{i\in \cP} x_i\,\mathbb{I}(P_i=\text{goalkeeper})
&\leqslant \overline{k_{\mathrm{g}}}, \label{con:C6}\\
\csum{i\in \cP} x_i\,\mathbb{I}(P_i=\text{defender}) &\geqslant \underline{k_{\mathrm{d}}}, \label{con:C7}\\
\csum{i\in \cP} x_i\,\mathbb{I}(P_i=\text{midfielder}) &\geqslant \underline{k_{\mathrm{m}}}, \label{con:C8}\\
\csum{i\in \cP} x_i\,\mathbb{I}(P_i=\text{forward}) &\geqslant \underline{k_{\mathrm{f}}}, \label{con:C9}\\
\csum{i\in \cP\setminus \cP_{c_0}} x_i\,\mathbb{I}(P_i=\text{goalkeeper}) &\geqslant \underline{kb_{\mathrm{g}}}, \label{con:C10}\\
\csum{i\in \cP\setminus \cP_{c_0}} x_i\,\mathbb{I}(P_i=\text{defender}) &\geqslant \underline{kb_{\mathrm{d}}}, \label{con:C11}\\
\csum{i\in \cP\setminus \cP_{c_0}} x_i\,\mathbb{I}(P_i=\text{midfielder}) &\geqslant \underline{kb_{\mathrm{m}}}, \label{con:C12}\\
\csum{i\in \cP\setminus \cP_{c_0}} x_i\,\mathbb{I}(P_i=\text{forward}) &\geqslant \underline{kb_{\mathrm{f}}}, \label{con:C13}\\
\underline{kb_{\mathrm{oth}}}\leqslant
\csum{i\in \cP\setminus \cP_{c_0}} x_i\,\mathbb{I}(i\in \text{other continents})
&\leqslant \overline{kb_{\mathrm{oth}}}, \label{con:C14}\\
\csum{i\in \cP\setminus \cP_{c_0}} x_i\,\mathbb{I}(i\in \text{other top leagues})
&\geqslant \underline{kb_{\mathrm{top}}}, \label{con:C15}\\
\csum{i\in \cP\setminus \cP_{c_0}} x_i\,\mathbb{I}(i\in \text{same country})
&\geqslant \underline{kb_{\mathrm{loc}}}, \label{con:C16}\\
\frac{\csum{i\in \cP} x_i A_i}{\csum{i\in \cP} x_i}
&\leqslant \frac{\csum{i\in \cP_{c_0}} A_i}{\abs{\cP_{c_0}}}, \label{con:C17}\\
\frac{\csum{i\in \cP} x_i R_i}{\csum{i\in \cP} x_i}
&\geqslant \frac{\csum{i\in \cP_{c_0}} R_i}{\abs{\cP_{c_0}}}. \label{con:C18}
\end{align}
\endgroup

In the above, with a slight abuse of notation, we use $P_i$ as a generic symbol to denote the players. Note that the last two constraints are used to ensure that the average age of the squad does not increase, and that the average value does not get worse. Now, to solve this constrained optimisation problem, we employ a genetic algorithm-based solver. Specifically, we construct a fitness function 
\begin{equation}\label{eq:multi_objective_criteria}
	\F_{\beta}(x) = \bigl[-\left(\lambda_1 \; \mathrm{cost} + \lambda_2 \; \mathrm{risk}\right) + \lambda_3 \; \mathrm{quality}\bigr]  - \beta \sum_{k=1}^{K} \max\left(0, \text{Constraint}_k\right),
\end{equation}
where $\beta \gg 0$ is a penalty coefficient and $\text{Constraint}_k$ denotes the functional form of the $k^{th}$ individual constraint mentioned in \eqref{con:C1} to \eqref{con:C18} when written in the form $\text{Constraint}_k \leqslant 0$. This ensures that the maximum is not achieved when a constraint is violated. 

A step-by-step summary of our optimisation strategy is provided below, in \Cref{alg:transfer:opt}. 

\begingroup
\begin{algorithm}[h]
\caption{Transfer strategy}
\label{alg:transfer:opt}
\begin{small}
\setstretch{0.85}
\begin{algorithmic}[1]
\State Initiate $\cP$ as the set of all available players and $\cP_{c_0}$ as the set of current players
\State Filter must-buy players outside the decision set
\Statex \hspace{\algorithmicindent}(a) $\cP \gets \cP \setminus [\text{Must Buy}]$
\Statex \hspace{\algorithmicindent}(b) Update all constraint bounds as appropriate
\State Filter must-sell players outside the decision set
\Statex \hspace{\algorithmicindent}(a) $\cP \gets \cP \setminus [\text{Must Sell}]$
\Statex \hspace{\algorithmicindent}(b) $\cP_{c_0} \gets \cP_{c_0} \setminus [\text{Must Sell}]$
\Statex \hspace{\algorithmicindent}(c) Update the minimum-profit constraint and the maximum number of transfers

\State Filter must-keep players outside the decision set
\Statex \hspace{\algorithmicindent}(a) $\cP \gets \cP \setminus [\text{Keep Top Players}] \setminus [\text{Keep Young Players}]$
\Statex \hspace{\algorithmicindent}(b) $\cP_{c_0} \gets \cP_{c_0} \setminus [\text{Keep Top Players}] \setminus [\text{Keep Young Players}]$
\Statex \hspace{\algorithmicindent}(c) Update relevant constraint bounds on squad composition

\State Choose scaling parameters $\lambda_1,\lambda_2,\lambda_3$  and construct the objective function
\State Select a sufficiently large penalty parameter $\beta$ to construct the fitness function $\F_{\beta}(x)$

\State Employ the heuristic optimisation algorithm
\Statex \hspace{\algorithmicindent}(a) Run the heuristic until convergence or until no improvement in feasible fitness is observed
\Statex \hspace{\algorithmicindent}(b) Check the best candidate solution against all original constraints
\Statex \hspace{\algorithmicindent}(c) If infeasible, update convergence settings and rerun

\State Report the best feasible solution found
\end{algorithmic}
\end{small}
\end{algorithm}
\endgroup

In particular, the optimisation problem combines binary decisions with nonlinear financial-risk terms and several rule-based constraints. The chance constraint depends on the distribution of the sum of log-normal transfer costs, the objective includes a nonlinear risk component, and the feasibility set includes appropriate restrictions. Although restricted versions of the model could be approximated and solved using mixed-integer programming or conic optimisation, doing so would require additional modelling choices for the stochastic budget constraint and nonlinear terms. We therefore use a flexible heuristic solver that retains the original decision-support structure and accommodates practical constraints such as must-buy, must-sell, no-sale, positional, and nationality restrictions without substantial reformulation. For selecting the solver, we benchmark between genetic algorithm, simulated annealing, and random restart hill climbing, which we present in the supplementary material (refer to Section S4). Note that the choice of the algorithm is not essential to the modelling framework and is used here as an implementation device rather than as the primary methodological contribution. The penalty term in \eqref{eq:multi_objective_criteria} is used only to guide the search; after convergence, each candidate solution is evaluated against the original constraints, and only feasible solutions are retained. If the best candidate violates any hard constraint, the maximum iteration is increased, and the algorithm is rerun once. If the rerun returns an infeasible solution, we stop our analysis to save time.

\subsection{Competitive bidding framework}
\label{sec:bidding}
As an extension to obtaining the optimised transfer decisions, we wish to understand the market behaviour when multiple clubs are interested in buying the same player. One should note that the valuation of the player can vary for different clubs based on their financial strength. As a result, the market value of the player can increase when a dominant club is interested in buying a player. This motivates us to formulate this problem as a first-price auction with asymmetric bidders \citep{lebrun1999}.

Consider a player $i$ to be sold by a club $c_0$ and let $\mathcal{C}^i$ denote the set of potential clubs interested in buying the player. We assume that each club $c$ has an independent private valuation $S^i_c>0$ for the player, with $\log S^i_c \sim \mathcal{N}\left(\mu_{ic},\sigma_{ic}^2\right)$ for each $c\in\mathcal{C}^i$ independently. Now, if club $c$ pays an amount $b$ for the player against a valuation $s$, its surplus is equal to $s-b$. Clearly, each club would aim to adopt a strictly increasing bidding strategy $\kappa_c(s) = b$ that maximises the surplus. However, in practical situations, the seller club $c_0$ may not have a deterministic valuation of a player. In such cases, taking inspiration from \cite{li2003}, we consider a distribution for the reserve price $\rho^i$ such that $\log \rho^i \sim \mathcal{N}\left(\mu_{ic_0},\sigma_{ic_0}^2\right)$. For notational convenience, we drop the player index $i$ from now on (except for $\mathcal{C}^i$) and denote the cumulative distribution function associated with the random reserve by $H$. Further, introduce a bid-dependent acceptance through a function $p_c(b)\in[0,1]$, representing the probability that the seller accepts a bid from a club $c$. In practice, it has a resemblance to the auction mechanism with asymmetric reserve prices \citep{kotowski2018}. However, \citet{kotowski2018} considered deterministic asymmetric reserve prices for buyers, i.e., instead of using a distribution function $H$, they considered different fixed reserve prices for different buyers. In football, this may not be ideal as a seller club often sells a player at a lower price than their expected fee, which will not be possible if there is a fixed reserve price. Therefore, it is more flexible to have a random reserve price. On the other hand, a club-specific acceptance rule also ensures that direct rivals have less chance of buying a player, or alternatively, they have to pay significantly more than other buyers. Moreover, such an acceptance mechanism also allows a no-sale outcome after negotiation that mimics realistic football transfer scenarios. 

Additionally, we truncate the distribution function of the random reserve price and the acceptance probability at a fixed bid $\upsilon_{\mathsf{thresh}}$. This is particularly important as in football transfers there is a notion of a buyout clause which states that a player has to be sold by the club if another club is ready to match a predetermined amount, and in such situations the seller's preference will not matter. So, in practice, $\upsilon_{\mathsf{thresh}}$ acts as the amount in the buyout clause such that
\begin{align*}
    \hat{H}(b) = \min\left\{\frac{H(b)}{H(\upsilon_{\mathsf{thresh}})}, 1\right\}, \qquad \hat{p}_c(b) = \min\left\{\frac{p_c(b)}{p_c(\upsilon_{\mathsf{thresh}})}, 1\right\}.
\end{align*}
For illustration purposes, we fix this amount to 95\% quantile of the reserve distribution $H$. Now, let $F_c$ and $f_c$ denote the distribution and density functions of $S_c$ and let $\psi_j(\cdot) = \kappa^{-1}_j(\cdot)$. Then for a bid $b$, the probability that club $c$ outbids a club $j\not = c$ is
\begin{equation*}
    \Pr\left(b \geqslant \kappa_j(S_j)\right) = \Pr(\psi_j(b)\geqslant S_j) = F_j(\psi_j(b)).
\end{equation*}
Therefore, under independence assumption, the probability that club $c$ outbids all rivals is $G_c(b)=\prod_{j\neq c}F_j(\psi_j(b))$. Similarly, the probability that bid $b$ exceeds the random reserve price equals $H(b)=\Pr(\rho\leqslant b)$. Hence, the selling probability of the player to club $c$ is
\begin{equation}
q_c(b)=\hat{p}_c(b)\hat{H}(b)G_c(b)
      =\hat{p}_c(b)\hat{H}(b)\prod_{j\neq c}F_j(\psi_j(b)).
\label{eq:alloc-prob}
\end{equation}
With these taken into account, the expected utility function for club $c$ given a valuation $s$ is 
\begin{equation}
U_c(b;s)=(s-b)\,q_c(b).
\label{eq:utility-general}
\end{equation}

To determine the equilibrium bidding behaviour, each club maximises the utility in \eqref{eq:utility-general} taking into account the distribution of rival valuations, reserve distribution and their corresponding bidding strategies. This maximisation occurs at the Bayes--Nash equilibrium, which is obtained by solving the associated first-order conditions of \eqref{eq:utility-general}. This leads to the following two propositions that give us a framework for solving the equilibrium strategy using numerical integration techniques. The proofs of these results are provided in Section S1 of the supplement.

\begin{proposition}
\label{prop:affinity-foc}
Suppose $F$ and $H$ are continuously differentiable with strictly positive densities and $p_c(\cdot)$ and $\psi_c(\cdot)$ are continuously differentiable function on $\left(0, \upsilon_{\mathsf{thresh}}\right]$. Then, in an interior pure-strategy equilibrium under risk neutrality, the inverse bidding functions satisfy
\begin{equation}
\frac{1}{s-b}
=
\frac{p_c'(b)}{p_c(b)}
+
\frac{h(b)}{H(b)}
+
\sum_{j\neq c}\frac{f_j(\psi_j(b))}{F_j(\psi_j(b))}\psi'_j(b) .
\label{eq:foc}
\end{equation}
Moreover, if $p_c(b)\equiv p_c$ is constant, equilibrium bids coincide
with those obtained under $p_c(b)\equiv1$; that is only expected utilities are
scaled by $p_c$.
\end{proposition}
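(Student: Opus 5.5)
The plan is to derive \eqref{eq:foc} as the first-order condition of the interior maximisation of \eqref{eq:utility-general}, working on the open interval $(0,\upsilon_{\mathsf{thresh}})$.

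\textbf{Step 1: remove the truncation.} For $b<\upsilon_{\mathsf{thresh}}$ both minima in the definitions of $\hat H$ and $\hat p_c$ are attained by the ratio. Hence $\hat H(b)=H(b)/H(\upsilon_{\mathsf{thresh}})$ and $\hat p_c(b)=p_c(b)/p_c(\upsilon_{\mathsf{thresh}})$. This gives
\[
q_c(b)=K_c\,p_c(b)H(b)G_c(b),
\]
where $K_c=1/\{H(\upsilon_{\mathsf{thresh}})\,p_c(\upsilon_{\mathsf{thresh}})\}$ is a positive constant not depending on $b$. Multiplying $U_c(\cdot;s)$ by a positive constant does not change its maximiser, so the truncation plays no role in the equilibrium condition. (At the boundary point $\upsilon_{\mathsf{thresh}}$ the objective is only one-sidedly differentiable, which is why the statement restricts to interior equilibria.)

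\textbf{Step 2: take logarithms and differentiate.} At an interior optimum $b=\kappa_c(s)$ we have $q_c(b)>0$ and $s-b>0$; otherwise the utility would be zero and could be improved, or the bid would not be interior. So we may maximise
\[
\log U_c=\log(s-b)+\log K_c+\log p_c(b)+\log H(b)+\sum_{j\neq c}\log F_j(\psi_j(b)).
\]
Under the stated smoothness assumptions (continuous differentiability of $F_j$, $H$, $p_c$, $\psi_j$), the chain rule gives
\[
\frac{d}{db}\log F_j(\psi_j(b))=\frac{f_j(\psi_j(b))\,\psi_j'(b)}{F_j(\psi_j(b))}.
\]
The strictly positive densities, together with interiority, ensure that $F_j(\psi_j(b))$ and $H(b)$ are positive, so every quotient is well defined. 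Setting the derivative of $\log U_c$ to zero yields
\[
-\frac{1}{s-b}+\frac{p_c'(b)}{p_c(b)}+\frac{h(b)}{H(b)}+\sum_{j\neq c}\frac{f_j(\psi_j(b))}{F_j(\psi_j(b))}\psi_j'(b)=0,
\]
which rearranges to \eqref{eq:foc}. Substituting $s=\psi_c(b)$ writes it as a system of ODEs in the inverse bid functions, which is the form the paper will integrate numerically.

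\textbf{Step 3: the constant-acceptance claim.} If $p_c(b)\equiv p_c$, then $p_c'\equiv0$ and \eqref{eq:foc} coincides with the condition obtained under $p_c\equiv1$. Moreover $U_c$ under $p_c$ equals $p_c/p_c(\upsilon_{\mathsf{thresh}})$ times $U_c$ under $p_c\equiv 1$, which is simply $U_c$ under $p_c\equiv1$, since $\hat p_c\equiv 1$. So the argmax, and hence the whole system of inverse-bid equations and its solution, is unchanged.

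I would add the qualifier that this scaling statement refers to the untruncated acceptance $p_c$. With the truncated $\hat p_c$, a constant $p_c$ gives $\hat p_c\equiv1$, so expected utilities are actually identical, not merely scaled.

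\textbf{Main obstacle.} The subtle point is justifying that the interior maximiser satisfies the first-order condition with the rivals' strategies held fixed, while the $\psi_j$ are themselves part of the equilibrium being characterised. The argument needs only the standard Bayes–Nash fixed-point logic: each club best-responds to given $\kappa_j$. Differentiability of $\psi_j$ is assumed in the statement rather than proved, so the proof is a necessary-condition derivation and does not attempt existence or uniqueness of the equilibrium.
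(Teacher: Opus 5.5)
Your proposal is correct and follows essentially the paper's argument: the first-order condition of $(s-b)q_c(b)$, combined with logarithmic differentiation of $q_c=\hat p_c\hat H G_c$, and the observation that a constant $p_c$ factors out of the objective. Two refinements go beyond what the paper spells out: you remove the truncation constants explicitly, and you note that a constant $p_c$ actually gives $\hat p_c\equiv 1$, so utilities are identical rather than merely scaled. The first of these tacitly uses $p_c(b)\leqslant p_c(\upsilon_{\mathsf{thresh}})$ (e.g.\ $p_c$ nondecreasing, as with the logistic affinities); the paper's proof relies on this assumption too, but never states it.
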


\begin{proposition}
\label{thm:asymm-ode}
Suppose $F$ and $H$ are continuously differentiable with strictly
positive densities on
$(\underline{s},\overline{s}]$,
and $p(\cdot)$ and $\psi_c(\cdot)$ are continuously differentiable in the interval $\left[b_{\min}, \upsilon_{\mathsf{thresh}}\right]$ with $p(b), \psi_c(b)>0$. Then, we get the following system of ordinary differential equations for $c\in\mathcal{C}^i$:
\begin{equation*}
    \psi'_c(b) = \frac{1}{\left(\abs{\mathcal{C}^i} - 1\right)}\frac{F_c(\psi_c(b))}{f_c(\psi_c(b))}\left[\sum_{j\not = c}\left(\frac{1}{\psi_j(b)-b} - \frac{p'_j(b)}{p_j(b)}\right)-
    \left(\abs{\mathcal{C}^i}-2\right)\left(\frac{1}{\psi_c(b)-b} - \frac{p'_c(b)}{p_c(b)} \right) -
    \frac{h(b)}{H(b)}
    \right].
\end{equation*}
\end{proposition}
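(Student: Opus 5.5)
The plan is to obtain the system by evaluating the first-order condition \eqref{eq:foc} of Proposition~\ref{prop:affinity-foc} along the equilibrium, for every bidder simultaneously. This gives a linear system in the unknowns $\psi'_j(b)$, which I will then solve explicitly.

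\emph{Step 1: set up the system.} Fix $b\in[b_{\min},\upsilon_{\mathsf{thresh}}]$ and write $n=\abs{\mathcal{C}^i}$, assuming $n\geqslant 2$ (otherwise there is no competition and the system is vacuous). In equilibrium, club $c$ with valuation $s=\psi_c(b)$ bids exactly $b$, so \eqref{eq:foc} holds with $s$ replaced by $\psi_c(b)$. On $(0,\upsilon_{\mathsf{thresh}}]$ we have $\hat p_c=p_c/p_c(\upsilon_{\mathsf{thresh}})$ and $\hat H=H/H(\upsilon_{\mathsf{thresh}})$. Because these are constant rescalings, the logarithmic derivatives of $\hat p_c,\hat H$ coincide with those of $p_c,H$, so the truncation plays no role. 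Define
\begin{equation*}
y_j(b)=\frac{f_j(\psi_j(b))}{F_j(\psi_j(b))}\psi'_j(b),\qquad A_c(b)=\frac{1}{\psi_c(b)-b}-\frac{p'_c(b)}{p_c(b)},\qquad K_c=A_c-\frac{h(b)}{H(b)}.
\end{equation*}
All of these are well defined: $F_j,H$ are strictly positive on the relevant range, $p_c>0$, and $\psi_c(b)>b$ in an interior equilibrium. Proposition~\ref{prop:affinity-foc} then reads
\begin{equation*}
\sum_{j\neq c}y_j=K_c\qquad\text{for all } c\in\mathcal{C}^i,
\end{equation*}
an $n\times n$ linear system whose matrix is $J-\I$, with $J$ the all-ones matrix.

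\emph{Step 2: solve the system.} The matrix $J-\I$ has eigenvalues $n-1$ and $-1$, so it is invertible for $n\geqslant2$. Concretely, summing the equations over $c$ gives
\begin{equation*}
(n-1)\sum_j y_j=\sum_j K_j,
\end{equation*}
hence
\begin{equation*}
y_c=\sum_j y_j-K_c=\frac{1}{n-1}\Bigl[\sum_{j\neq c}K_j-(n-2)K_c\Bigr].
\end{equation*}

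\emph{Step 3: simplify and rearrange.} Substitute $K_j=A_j-h/H$. The reserve terms contribute $-(n-1)h/H$ from the sum over $j\neq c$ and $+(n-2)h/H$ from the $-(n-2)K_c$ term, leaving a single $-h/H$. Therefore
\begin{equation*}
y_c=\frac{1}{n-1}\Bigl[\sum_{j\neq c}A_j-(n-2)A_c-\frac{h(b)}{H(b)}\Bigr].
\end{equation*}
Multiplying by $F_c(\psi_c(b))/f_c(\psi_c(b))$ is legitimate because $f_c>0$ on $(\underline s,\overline s]$. This yields exactly the stated expression for $\psi'_c(b)$.

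\emph{Main obstacle.} The algebra is routine; the delicate point is justifying that \eqref{eq:foc} may be applied for every $c$ at the same bid $b$ with $s=\psi_c(b)$. This requires all bidders to share a common bid range $[b_{\min},\upsilon_{\mathsf{thresh}}]$ on which each $\psi_c$ is differentiable and takes values in $(\underline s,\overline s]$, with $\psi_c(b)>b$. I would take these conditions from the hypotheses: interiority of the equilibrium together with the stated differentiability and positivity assumptions. I would also state explicitly that $n\geqslant2$ is needed for the inversion.
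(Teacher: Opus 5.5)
Your proposal is correct and follows the paper's proof. The paper likewise writes the first-order condition of Proposition~1 at $s=\psi_c(b)$ for every club, sums over $c\in\mathcal{C}^i$ to obtain $(\abs{\mathcal{C}^i}-1)\sum_j y_j$, divides by $\abs{\mathcal{C}^i}-1$ and subtracts the individual equation; this is exactly your inversion of $J-\I$ written out. Your extra remarks are points the paper leaves implicit: the need for $\abs{\mathcal{C}^i}\geqslant 2$, and that the truncation only rescales by constants (which for $\hat p_c$ tacitly requires $p_c(b)\leqslant p_c(\upsilon_{\mathsf{thresh}})$, e.g.\ $p_c$ nondecreasing).
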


\begin{remark}[Maskin-Riley ordering]\label{rem:maskin:riley}
\citet{maskin2000a} established that in an asymmetric first-price auction, clubs with stochastically higher valuations ($\mu_c$ larger) bid more at every valuation and shade more in absolute terms: $\kappa_c(s)>\kappa_j(s)$ and $s-\kappa_c(s)>s-\kappa_j(s)$ pointwise when $\mu_c>\mu_j$. This ordering is not directly preserved, as we have an affinity function that affects the strict ordering. However, we can show with numerical integration that if these affinity functions are similar to each other, then Maskin-Riley ordering holds.
\end{remark}

\begin{remark}[Numerical Scheme]\label{rem:num:ode}
Note that the existence of a Bayes--Nash equilibrium also ensures that we can employ numerical algorithms to solve the system of first-order ordinary differential equations. We provide a brief description of such an algorithm in Section S5 of the supplementary material for multiple rounds of bidding, which can easily be used for a single round of negotiation.
\end{remark}

\section{Results and discussions}\label{sec:results}

For our main analysis, we rely on two sources of data. The information on players' ratings and performances is taken from the WhoScored website\footnote{WhoScored: \url{https://www.whoscored.com/}}, whereas the information on players' characteristics and transfer data is obtained from the Transfermarkt website\footnote{Transfermarkt: \url{https://www.transfermarkt.co.uk/}}. Datasets extracted from these two sources are merged, cleaned, and pre-processed appropriately for our analysis. All computations are carried out in \texttt{R}. The codes and the cleaned dataset will be shared publicly in a GitHub repository. The final dataset we use in our analysis contains information on 20403 players from 500 clubs playing in the leagues of 17 different countries from four continents (11 countries from Europe, 2 countries from North America, 2 countries from South America, and 2 countries from Asia), during the seasons 2009/10 to 2019/20. Note that for every country, different divisions' data are included. For more recent seasons or for other countries, detailed information on all variables were not available, and therefore, we restrict our analysis to these 10 years and the 17 countries. In terms of nationality of the players, there are 167 countries, with a number of players ranging between 1 and 7072 (median number of players per country is 52, and the mean is 404.4). The distribution of this is illustrated in \Cref{fig:nationality_counts}, which confirms that the maximum number of players in the dataset are from the United Kingdom and Brazil, closely followed by Argentina, France, Germany, Spain, and Italy. 

\begin{figure}[!ht]
    \centering
    \includegraphics[width=0.6\textwidth,keepaspectratio]{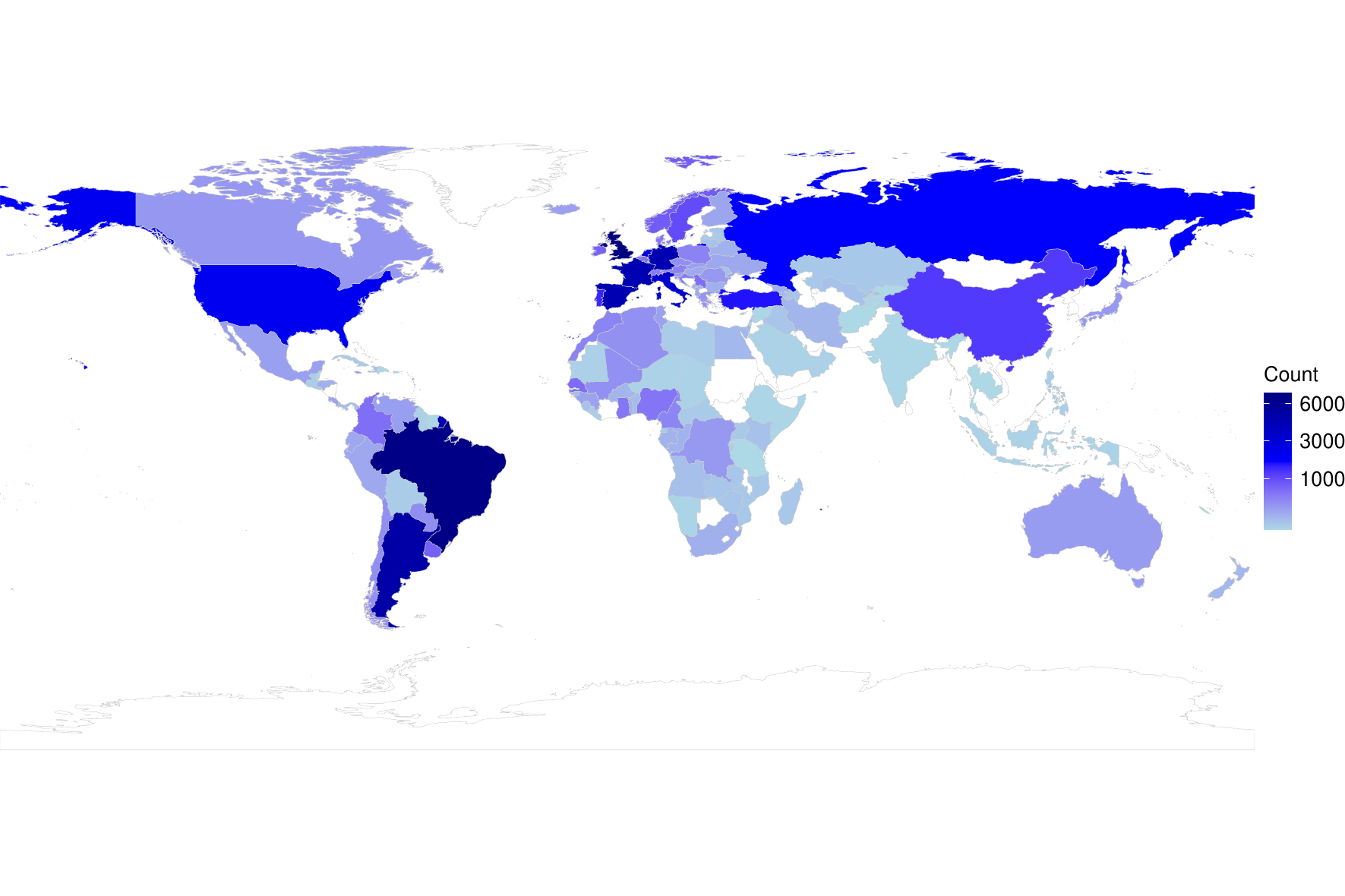}
    \caption{Distribution of players in our dataset according to their nationality.}
    \label{fig:nationality_counts}
\end{figure}

\subsection{Output of the model for player ratings}

The estimates in \Cref{tab:final-model-output:rating} provide a coherent picture of how player ratings evolve from one season to the next, once we control simultaneously for individual attributes, persistence in performance, and the competitive environment in which the player operates, while allowing for unobserved corridor and league heterogeneity through the random-effects structure. 
\begin{table}[ht]
\centering
\caption{Estimated coefficients for the features used in the mixed effects model, for studying the impact on player ratings.}
\label{tab:final-model-output:rating}
{\scriptsize
\begin{tabular}{llcccc}
  \toprule
  Type of feature & Variable & Estimate & Standard error & $t$-score & $p$-value \\
  \midrule
  General & Intercept & $-2.828$$^{***}$ & $0.140$ & $-20.135$ & $0.000$ \\
  \midrule
  Player's & Age (scaled) & $0.570$$^{***}$ & $0.038$ & $15.127$ & $0.000$ \\
  information & Age squared (scaled) & $-0.106$$^{***}$ & $0.007$ & $-15.121$ & $0.000$ \\
  & Position (defender) & $-0.009$ & $0.009$ & $-1.013$ & $0.311$ \\
  & Position (midfielder) & $0.015$ & $0.009$ & $1.655$ & $0.098$ \\
  & Position (forward) & $0.035$$^{***}$ & $0.007$ & $4.738$ & $0.000$ \\
  & Height & $0.024$ & $0.018$ & $1.297$ & $0.195$ \\
  & Weight & $0.002$$^{***}$ & $0.000$ & $7.753$ & $0.000$ \\
  & Last season's rating & $0.303$$^{***}$ & $0.005$ & $65.004$ & $0.000$ \\
  & Number of times transferred & $-0.004$$^{**}$ & $0.001$ & $-2.867$ & $0.004$ \\
  & If same team as last season & $-0.003$ & $0.003$ & $-0.738$ & $0.461$ \\
  & Playing in the birth country & $-0.007$$^{*}$ & $0.003$ & $-2.105$ & $0.035$ \\
  \midrule
  Team's & Overall quality & $0.257$$^{***}$ & $0.021$ & $12.151$ & $0.000$ \\
  information & Quality in same position & $0.723$$^{***}$ & $0.011$ & $67.059$ & $0.000$ \\
  & \#players in same position & $0.004$$^{***}$ & $0.001$ & $5.393$ & $0.000$ \\
  \bottomrule
\end{tabular}}
\end{table}

First, the age effect is found to be strongly nonlinear: the positive linear age term combined with the negative quadratic term indicates an inverted-U relationship between age and predicted rating, consistent with a development phase followed by diminishing returns and eventual decline \citep{dendir2016soccer}. From the standpoint of transfer planning, this curvature is important because it implies that the marginal benefit of recruiting older players depends sharply on where they lie on the age curve; the optimisation step therefore internalises that not all experienced signings are equally valuable, and that prime-age targets yield the largest expected rating uplift per unit cost, all else equal. The position indicators show that forwards are expected to get better ratings than the other three positions due to their higher chances of scoring and directly contributing to decisive match outcomes, which is in line with the arguments proposed by \cite{ball2025comparing}. In the optimisation layer, this motivates the use of position-specific constraints and careful interpretation of cross-position comparisons: selection should be driven by marginal improvements within a role rather than by direct comparison of raw ratings across roles. Anthropometrics play a limited role: height is not strongly informative once the remaining features are included, whereas weight shows a small but robust association with ratings. In applied terms, this implies that body size proxies contribute marginally compared to performance history and team context; consequently, their value in recruitment is likely as secondary screening features rather than primary drivers of selection.

Unsurprisingly, last season's rating is the strongest predictor of next season's rating, indicating substantial persistence in performance after controlling for age, position, and context. At the same time, the relationship is consistent with partial mean reversion. This is useful for transfer planning because it stabilises counterfactual forecasts while discouraging linear extrapolation of peak form, reducing the risk of overvaluing short-lived outliers when balancing expected quality against fee and risk. Transfer history has a small but negative association with subsequent ratings, suggesting that frequent movers underperform as compared to similar peers, plausibly due to adaptation costs or unobserved instability. In contrast, the indicator for staying at the same club adds little once last-season performance and team context are included, implying that continuity is largely captured through recent form and environment rather than providing an independent boost. On the other hand, contrary to popular beliefs, playing in the birth country is found to have a weakly negative impact on performance. Although this finding is in line with earlier research \citep[e.g.,][]{marcen2016bosman}, it should be interpreted cautiously as a conditional association rather than a structural disadvantage of domestic play. It likely reflects selection and composition effects in who becomes an international mover and the leagues or clubs they join. For our main objective, its role is modest: it can shift rankings among similar candidates in cross-border comparisons without driving decisions on its own.

We also find the team context variables to be highly informative, which align with the portfolio nature of squad building. Higher overall team quality is associated with higher individual ratings, consistent with an environment effect. More importantly, stronger same-position units are associated with higher individual ratings, implying complementarities within positional groups: the expected contribution of a signing depends on the quality of the surrounding unit, not only on player attributes. The positive association with the number of players in the same position also points to a depth effect, supporting position-specific depth constraints as performance-relevant rather than purely operational.

In the interest of space, we defer some additional discussions on this model, including a comparison study with other frameworks, to the supplementary material (Section S2). Overall, the results reinforce a key message: expected player contribution is jointly determined by individual trajectory and squad context, thereby justifying why transfer decisions should be optimised at the squad level under constraints, not by ranking players in isolation.  

\subsection{Output of the model for transfer valuation}

Moving on to the second key step of our methodology, we look at the estimated coefficients of the transfer fee model described by equations \eqref{eq:mean-structure-fee} and \eqref{eq:transfer-model}. \Cref{tab:final-model-output:price} shows how the expected transfer fee varies with player attributes, recent form, and market context, after controlling for persistent buyer and seller-club heterogeneity through random intercepts. Since the response is modelled on the log scale, covariates act approximately multiplicatively on fees on the original scale; consequently, statistically meaningful effects translate into economically interpretable percentage shifts in expected fees, which is exactly the quantity required by our optimisation routine. 

\begin{table}[ht]
\centering
\caption{Estimated coefficients for the features used in the mixed effects model, for studying the impact on transfer fees.}
\label{tab:final-model-output:price}
{\scriptsize
\begin{tabular}{llcccc}
  \toprule
  Type of feature & Variable & Estimate & Standard error & $t$-score & $p$-value \\
  \midrule
  General & Intercept & $-17.552$$^{***}$ & $2.156$ & $-8.140$ & $0.000$ \\
  & Linear trend & $0.064$$^{***}$ & $0.012$ & $5.340$ & $0.000$ \\
  \midrule
  Player's & Age (scaled) & $2.361$$^{**}$ & $0.743$ & $3.178$ & $0.002$ \\
  overall & Age squared (scaled) & $-0.668$$^{***}$ & $0.150$ & $-4.459$ & $0.000$ \\
  characteristics & Position (defender) & $-0.323$ & $0.198$ & $-1.635$ & $0.102$ \\
  & Position (midfielder) & $-0.143$ & $0.197$ & $-0.726$ & $0.468$ \\
  & Position (forward) & $0.089$ & $0.167$ & $0.533$ & $0.594$ \\
  & Height & $1.692$$^{**}$ & $0.520$ & $3.251$ & $0.001$ \\
  & Weight & $0.001$ & $0.004$ & $0.181$ & $0.857$ \\
  & Career rating & $1.601$$^{***}$ & $0.129$ & $12.392$ & $0.000$ \\
  \midrule
  Player's & Rating & $-0.373$$^{**}$ & $0.136$ & $-2.745$ & $0.006$ \\
  performance & Game-time & $0.018$$^{***}$ & $0.003$ & $6.036$ & $0.000$ \\
  in last season & Goals & $0.399$ & $0.342$ & $1.166$ & $0.244$ \\
  & Goal contributions & $0.270$ & $0.266$ & $1.015$ & $0.310$ \\
  & Penalty accuracy & $-0.461$ & $0.699$ & $-0.659$ & $0.510$ \\
  & Shots & $0.169$$^{***}$ & $0.043$ & $3.894$ & $0.000$ \\
  & Passing accuracy & $0.023$$^{***}$ & $0.005$ & $4.216$ & $0.000$ \\
  & Cards & $0.108$ & $0.156$ & $0.692$ & $0.489$ \\
  & Clearance & $-0.006$ & $0.020$ & $-0.303$ & $0.762$ \\
  & Interception & $-0.061$ & $0.043$ & $-1.401$ & $0.161$ \\
  \midrule
  Leagues and & Median selling price (seller league) & $0.169$$^{***}$ & $0.019$ & $8.966$ & $0.000$ \\
  teams in & Median buying price (buyer league) & $0.056$$^{***}$ & $0.010$ & $5.665$ & $0.000$ \\
  last season & \#players in same position (seller) & $-0.007$ & $0.014$ & $-0.522$ & $0.602$ \\
  & \#players in same position (buyer) & $-0.003$ & $0.012$ & $-0.219$ & $0.827$ \\
  & Quality in same position (seller) & $-0.127$ & $0.198$ & $-0.644$ & $0.520$ \\
  & Quality in same position (buyer) & $0.140$ & $0.197$ & $0.712$ & $0.477$ \\
  & Overall quality (seller) & $0.748$$^{*}$ & $0.302$ & $2.474$ & $0.013$ \\
  & Overall quality (buyer) & $-0.184$ & $0.198$ & $-0.932$ & $0.351$ \\
  \bottomrule
\end{tabular}}
\end{table}

In the output presented in \Cref{tab:final-model-output:price}, we notice a significant positive linear trend, which confirms systematic fee inflation over time, motivating an explicit time adjustment when forecasting acquisition budgets across windows. Age enters nonlinearly, implying a concave lifecycle pattern: fees increase with age up to a point and then decline, consistent with the trade-off between prime-age readiness and remaining resale horizon \citep[similar findings were confirmed earlier by other researchers, such as][]{depken2021football}. Position effects indicate that forwards command a slight premium relative to the defenders. This finding is in line with the recent study by \cite{metelski2024transfer}, and reinforces the view that fee formation is position-specific and that budget allocation must be planned at the positional level. Among anthropometrics, height has a positive association with fees, whereas weight contributes little once other controls are included.

A persistent quality signal, captured by career rating, is a dominant driver of fees, suggesting that markets price longer-horizon evidence of ability and reputation rather than relying solely on short-term form. The contemporaneous rating enters with a negative partial effect when career rating is included, which should be interpreted as a conditional relationship arising from overlapping information and other market frictions (e.g.\ contract terms and negotiation conditions) rather than as evidence that better recent performance lowers fees. On a related note, among last-season performance statistics, game-time, shots, and passing accuracy are positively associated with fees, indicating that the market puts greater emphasis on overall performance metrics beyond headline outcomes, such as goals and goal contributions, which add a limited incremental signal once other measures are controlled for.

Market context terms are economically important and directly support cross-league planning. The median fee levels in the seller and buyer leagues are positively associated with the transfer fee, capturing league-specific price regimes and purchasing power effects; consequently, comparable player profiles can transact at systematically different price levels depending on the origin and destination leagues. Seller-club strength is also positively associated with fees, consistent with greater bargaining power and weaker sell pressure at stronger clubs, implying that targets sourced from top sellers can be systematically costlier than their on-field profile alone would suggest. More detailed remarks on this model's output, along with comparison against benchmark techniques, are deferred to Section S3 of the supplement in the interest of space. Overall, we may conclude that the model delivers a parsimonious decomposition of fee formation into (i) market inflation and lifecycle pricing, (ii) role and persistent quality premium, (iii) selected stable performance signals, and (iv) league and seller-side bargaining environment, yielding forecast inputs that are well-aligned with the budget and risk components of the subsequent transfer optimisation.

\subsection{Model-recommended transfer strategy}

Following the estimation of player ratings and transfer fees, we use these models to identify high-quality feasible transfer recommendations by solving the optimisation problem described in \cref{sec:optimisation}. Throughout this section, the term ``recommended squad'' refers to the best feasible solution identified by the genetic algorithm under the specified objective function and constraints, interpreted as a near-optimal squad configuration rather than a certified global optimum. This is illustrated for the clubs in the EPL at the end of the 2018/19 season. Specifically, we restrict our analysis to 17 clubs, excluding Cardiff City, Fulham and Huddersfield Town, who were relegated to a lower division at the end of that season. It allows us to compare the recommended strategy with the actual strategy adopted by the clubs and how it led to their performance during the 2019/20 season. For the choice of the maximum transfer budget $B_{\text{max}}$, we assume that a club may spend at most twice its average transfer expenditure over the previous five seasons. The weights $(\lambda_1,\lambda_2,\lambda_3)$ in the multi-criteria formulation \eqref{eq:multi_objective_criteria} are user-defined preference parameters rather than estimated quantities. They are not intended to be universal behavioural parameters; rather, they represent club-specific strategic postures that encode the decision-maker's relative emphasis on expenditure control, financial risk, and squad quality. For the EPL illustration, we specify three broad preference scenarios based on financial capacity. High-budget clubs (above $\text{\euro} 200$ million) are assigned $(\lambda_1,\lambda_2,\lambda_3)=(0.1,0.1,0.8)$, reflecting a strategy that prioritises squad-quality improvement while still accounting for cost and risk. Mid-budget clubs (between $\text{\euro} 100$ million and $\text{\euro} 200$ million) are assigned $(0.2,0.2,0.6)$, representing a more balanced trade-off. Lower-budget clubs (at most $\text{\euro} 100$ million) are assigned $(0.3,0.3,0.4)$, reflecting greater emphasis on cost and risk control. These values should be interpreted as transparent scenario choices used to illustrate the framework across heterogeneous clubs, not as universally optimal parameters. Alternative weights can be used to conduct scenario analysis under different strategic priorities, and the framework can support systematic sensitivity analysis over the weight space. For the minimum profit constraint ($\underline{\mathrm{profit}}$), we consider Transfermarkt data on the average income over the last five seasons. For implementation and illustration in this paper, we use the \texttt{GA} package in \texttt{R} \citep{ga_in_r}, based on our benchmark study presented in Section S4 of the supplementary material.

Before turning to the results, we clarify the validation objective. The empirical exercise is not intended to claim causal superiority over recorded club decisions, since actual transfer outcomes are shaped by private information, contract negotiations, wages, injuries, manager preferences, and availability constraints that are not fully observable in public data. Rather, we evaluate the framework as a public-data decision benchmark and diagnostic plausibility exercise, asking three questions. First, do the counterfactual squads generated by the model produce predicted ratings and transfer costs within realistic ranges? Second, do the recommended buy/sell decisions align with a meaningful subset of observed transfers? Third, when the recommendations differ from observed decisions, are the discrepancies interpretable in terms of omitted contractual, tactical, or strategic factors? This validation logic is appropriate for public-data transfer modelling, where the analyst observes recorded transactions but not the full negotiation set, wage demands, contract clauses, player preferences, or internal club objectives. Accordingly, all recommendations reported below are conditional on the preference weights and planning constraints described above.

\begin{figure}[!ht]
    \centering
    \includegraphics[width=0.77\textwidth,keepaspectratio]{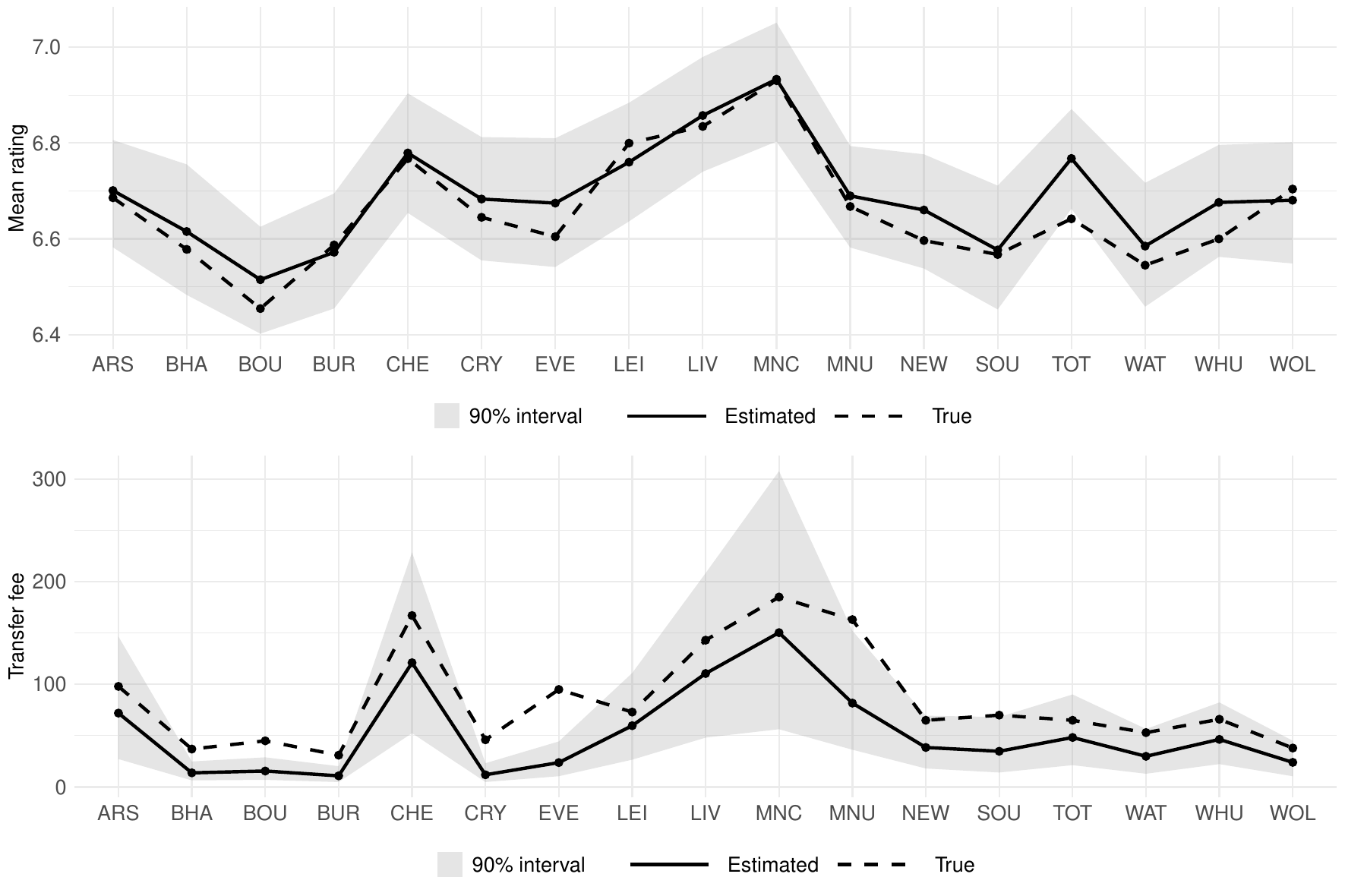}
    \caption{Observed (dashed) versus estimated (solid) club-level means for player ratings (top) and average transfer fees (bottom), with shaded 90\% uncertainty bands around the estimated means. Club codes correspond to Premier League teams in the 2018/19 season.}    
    \label{fig:est_vs_true}
\end{figure}

To begin with, we present an overall comparative summary of our analyses. The aggregated results for the 17 clubs are presented in \Cref{fig:est_vs_true}, which contrasts what each club actually achieved with what it could have achieved under the counterfactual squads suggested by our optimisation algorithm, together with the corresponding counterfactual cost profiles. Here, the clubs are represented by the corresponding three-letter codes. The top panel shows the actual performance of the club in 2019/20 against the estimated performance for the counterfactual squad, along with 90\% prediction interval. In the same fashion, the bottom panel provides a comparison of the actual spending with the estimated spending. Broadly, it can be used as a diagnostic of transfer efficiency: clubs whose realised outcomes in both performance and spending are close to the counterfactual outcomes can be viewed as having conducted efficient business, even if their realised transfers differ from the recommended ones, because they achieved comparable squad strength without requiring materially higher expenditures than those implied by the model-based counterfactual.

A first implication from the figure is that most clubs appear to have had limited headroom for improvement in average on-field performance, in the sense that the counterfactual mean ratings are close to the observed club means. Indeed, the actual performance remains within the confidence band of the predicted performance for all clubs, with an observable difference between the predicted mean performance and the actual for only five clubs -- Bournemouth (BOU), Everton (EVE), Newcastle United (NEW), Tottenham Hotspur (TOT), and West Ham United (WHU). Overall, the top panel suggests that the realised squad construction has been near the predicted frontier defined by our rating model and constraints, so the optimisation algorithm cannot extract large additional gains without relaxing constraints or expanding the feasible player pool. Another possible argument \citep[in line with][]{wanat2022short} is that the clubs, on average, perform at a predictable level and transfer strategies may not materially improve their overall performance. Thus, from a planning perspective, a club may view our framework primarily as a robustness check and a tool for identifying marginal improvements or risk-reduction opportunities rather than as a source of large performance jumps. 

A second implication, based on the bottom panel of \Cref{fig:est_vs_true}, concerns spending discipline. The counterfactual cost estimates may be interpreted as the model-implied expenditure required to implement a squad that achieves the corresponding counterfactual rating. When a club's realised average cost substantially exceeds the counterfactual estimated cost while delivering similar realised performance, this indicates potential overpayment relative to the model's benchmark -- that is, the club may have paid a premium not justified by commensurate squad-level performance. Conversely, when realised spending is at or below the model-implied counterfactual cost for comparable performance, the club can be viewed as having extracted value in the market, either through superior scouting, negotiation, timing, or by exploiting segments where the fee model predicts lower prices for similar projected contributions. Importantly, this inference remains valid even when the realised transfer targets differ from the recommended set, because the comparison is conducted at the level of squad aggregates and therefore evaluates outcomes rather than matching individual transactions. Further on this note, the uncertainty bands are informative for identifying clubs whose realised spending is difficult to rationalise under the counterfactual benchmark. For instance, top clubs like Manchester United and Everton exhibit realised average costs that sit well above the upper end of the counterfactual bands implied by our fee model for the recommended squads, while the corresponding counterfactual mean ratings are not dramatically different from the observed squad averages. Interpreted through our framework, this pattern is consistent with potential overpayment: the clubs appear to have paid a substantial premium relative to what would have been required -- under the model-implied market conditions and the optimisation recommendations -- to attain similar squad-level performance. Clubs such as Brighton, Bournemouth, Burnley, and Crystal Palace also show significantly higher fees paid in transfer, which may be suggestive of systematic inefficiency. For the other clubs, even if the realised transfers differ from our recommended set, the clubs can be viewed as having achieved comparable squad outcomes without spending materially above the benchmark implied by our models. Particularly for Chelsea, Liverpool, and Manchester City, it is critical to note that the predicted fees have great uncertainty (possibly due to targeting high-value players), but the teams managed to maintain their expenditure at a manageable level.

While \Cref{fig:est_vs_true} gives us an overall idea of the performance of the optimisation routine in identifying a balanced squad with budget constraints, we next look at the decision-making aspect of the optimisation routine involving the transfer of individual players and these results are illustrated in \Cref{tab:transfer_comparison}. For brevity of space, we include the list of players who are estimated to be over $\text{\euro} 20$ million by the transfer fee model. The results are separated into three categories to showcase the alignment and the deviation between optimisation recommendations and recorded transfers of the players.

\begin{table}[!ht]
\centering
\caption{Comparison of recommended (according to our approach, after the 2018/19 season) and recorded transfers for players valued above $\text{\euro} 20$ million. The year in the parentheses in Panel B indicates when the player was eventually transferred, `Free' indicates the player left as a free agent, `Loan' indicates the player was loaned out, and a blank suggests the player is still playing for that club.}
\label{tab:transfer_comparison}
{\scriptsize
\begin{tabular}{llccccc}
\toprule
Team & Player & Position & Expected Price & IQR & Actual Price & Rating \\
\midrule
\multicolumn{7}{c}{\textit{Panel A: Recommended and sold}} \\
\midrule
Chelsea & Eden Hazard & Forward & 208 & [58, 243] & 121 & 7.68 \\
Crystal Palace & Aaron Wan-Bissaka & Defender & 24 & [7, 28] & 55 & 7.52 \\
Leicester City & Harry Maguire & Defender & 54 & [15, 63] & 87 & 7.08 \\
Manchester City & Fabian Delph & Midfielder & 29 & [8, 34] & 10 & 7.16 \\
Manchester United & Romelu Lukaku & Forward & 80 & [22, 93] & 74 & 7.29 \\
Manchester United & Chris Smalling & Defender & 31 & [9, 36] & Loan & 7.24 \\
Southampton & Mario Lemina & Midfielder & 32 & [9, 38] & Loan & 6.93 \\
Tottenham Hotspur & Kieran Trippier & Defender & 44 & [12, 51] & 22 & 7.04 \\
West Ham United & Andy Carroll & Forward & 26 & [7, 30] & Free & 6.69 \\
\midrule
\multicolumn{7}{c}{\textit{Panel B: Recommended but not sold}} \\
\midrule
Liverpool & Sadio Mane & Forward & 76 & [21, 89] & 32 (2022) & 7.27 \\
Liverpool & Andrew Robertson & Defender & 28 & [8, 32] & -- & 7.02 \\
Liverpool & Jordan Henderson & Midfielder & 23 & [7, 27] & 14 (2023) & 6.95 \\
Manchester City & Bernardo Silva & Midfielder & 59 & [16, 69] & -- & 6.79 \\
Manchester City & Riyad Mahrez & Forward & 53 & [15, 61] & 35 (2023) & 7.28 \\
Southampton & Pierre Emile Hojbjerg & Midfielder & 27 & [7, 31] & 17 (2020) & 6.61 \\
\midrule
\multicolumn{7}{c}{\textit{Panel C: Not recommended but sold}} \\
\midrule
Arsenal & Alex Iwobi & Midfielder & 24 & [7, 28] & 30 & 6.88 \\
Arsenal & Aaron Ramsey & Midfielder & 20 & [6, 24] & Free & 7.27 \\
Chelsea & David Luiz & Defender & 43 & [12, 50] & 9 & 6.85 \\
Liverpool & Daniel Sturridge & Forward & 30 & [8, 34] & Free & 6.76 \\
Manchester City & Danilo & Defender & 74 & [21, 86] & 37 & 6.91 \\
Manchester United & Alexis Sanchez & Forward & 49 & [14, 58] & Loan & 7.55 \\
Manchester United & Ander Herrera & Midfielder & 35 & [10, 40] & Free & 6.60 \\
Manchester United & Marouane Fellaini & Midfielder & 28 & [8, 33] & 7 & 6.94 \\
Newcastle United & Ayoze Perez & Forward & 26 & [7, 30] & 33 & 6.74 \\
Southampton & Charlie Austin & Forward & 25 & [7, 29] & 4 & 6.53 \\
Southampton & Matt Targett & Defender & 24 & [7, 28] & 16 & 6.96 \\
Tottenham Hotspur & Mousa Dembele & Midfielder & 28 & [8, 33] & 5 & 6.90 \\
West Ham United & Marko Arnautovic & Forward & 32 & [9, 37] & 25 & 7.07 \\
\bottomrule
\end{tabular}
}
\end{table}

In Panel~A, we present the cases where the recommended sales of the players coincided with the realised transfers. These instances indicate that several observed transfer movements for the EPL clubs align with the recommendations from our optimisation routine, thereby highlighting the practical relevance of the constraints incorporated within our optimisation framework. Although there is a considerable difference between the estimated and recorded transfer fees for these players, most recorded transfer prices remain within the corresponding interquartile ranges predicted by the model, suggesting that the transfer-fee framework provides a reasonable approximation of market uncertainty. We however notice an anomalous pattern in the transfers of Harry Maguire and Aaron Wan-Bissaka. Both players were bought by Manchester United for amounts exceeding the upper end of the interquartile range, which may indicate weak negotiation behaviour or overpayment by Manchester United during that transfer window \citep[][reported similar findings for Maguire]{McHaleHolmes2023}. This aligns with our earlier observation as well, wherein \Cref{fig:est_vs_true} we noticed that Manchester United's actual spending was more than the upper confidence limit recommended by our methodology. Interestingly, since both players are defenders, these transfers may indicate that Manchester United strategically prioritised strengthening defensive depth during that season, while operating under relatively flexible financial constraints. On the other hand, it is also interesting that both players are Englishmen by birth -- \cite{mittal2022high} calls similar spending patterns an ``English tax'' which says that the transfer of English players within the EPL is more inflated than other nations. 

As we move on to Panel~B, it shows the players who are recommended to be sold by our algorithm but were retained during that season by their respective clubs. We however note that four of the six players were eventually sold -- one in the immediate next season and the others a few years later. These cases highlight situations where decision-makers may prioritise non-financial objectives, including squad stability or competitive considerations, over immediate economic returns. The recorded delays in eventual transfers for some players, albeit with lower recorded transfer fees than expected fees, further suggest that transfer decisions may often feel unsuitable at times, but declining performance of the players can lead to poor business in the next transfer windows. We also prefer to discuss these results with Panel~C, which includes players who were transferred after the 2018/19 season, but were not recommended for sale by our method. Three of these players left as free agents, indicating that the lack of contract information affected the output of our methods. Specifically, for Liverpool, Daniel Sturridge was preferred by the model to be kept in the squad over Sadio Mane, but since Sturridge's contract was not renewed, he left the club, and Mane possibly became more important considering their similar positions in the lineup, thereby not initiating his transfer out of the club until 2022. Thus, such deviations from model-optimal decisions are potentially driven by external constraints such as wage commitments, contract expiration, or exogenous strategic adjustments. For example, a new management team may no longer prefer certain types of players in their game plans and may choose to transfer out such players as soon as possible to avoid paying their wages.

In general, the results illustrate that the framework provides a useful public data decision benchmark for transfer planning. Recorded transfers reflect a combination of economic optimisation, private institutional constraints, contractual considerations, and dynamic strategic adjustments that are not fully captured by publicly observable variables. The discrepancies between model recommendations and realised transfers are therefore not evidence of model failure; rather, they delineate the boundary of a public-data decision-support model and highlight where club-specific private information -- wages, contract state, tactical plans, player preferences -- would meaningfully refine the recommendations.

\subsection{Competition between clubs}

As the final piece of discussion in this application, to illustrate and understand the effects of multiple clubs interested in buying the same player, we simulate a single-round independent private value auction for two players: Miguel Almiron and Adama Traore. Note that both players are recommended to buy for multiple EPL clubs in our dataset. A complete list of all such players and related discussions is provided in Section S4 of the supplement. In this case, we follow a simulation framework wherein for each player we draw $N=2000$ independent valuations from the estimated log-normal distributions (below denoted as $\mathcal{LN}$). We compute equilibrium bids by inverting the boundary value problem as described in the supplement (provided in Section S5, as an extension to the discussions in \Cref{sec:bidding}), and present bidding strategy, sale outcomes and prices. 

For Almiron, Newcastle United (NEW) expects a transfer fee of $\text{\euro} 8.2$ million, with the reserve price distribution given by $\rho \sim \mathcal{LN}\left(1.5, 1.1^2\right)$. Three clubs have been recommended to buy him: Everton (EVE), with valuation $\mathcal{LN}\left(1.2, 1.1^2\right)$; Southampton (SOU), with valuation $\mathcal{LN}\left(1.7, 1.1^2\right)$; and Watford (WAT), with valuation $\mathcal{LN}\left(1.5, 1.1^2\right)$. Newcastle's affinity toward these clubs is modelled using logistic distributions with scale parameter $1$. The centres of the logistic distributions for EVE, SOU, and WAT are $2.7$, $4.3$, and $3.4$, respectively. In other words, while SOU is the strongest bidder, NEW exhibits the least affinity towards them. Alternatively, the situation can be viewed as a purely financial decision. Since SOU is financially stronger than the other clubs, NEW expects to secure a higher transfer fee from SOU than from the smaller clubs. Similarly, for Traore, his present club Wolverhampton Wanderers (WOL) expects a transfer fee of $\text{\euro} 3.7$ million, with the reserve price distribution given by $\rho \sim \mathcal{LN}\left(0.7, 1.1^2\right)$. Four clubs are recommended to buy him: Manchester City (MNC), with valuation $\mathcal{LN}\left(1.8, 1.1^2\right)$; Tottenham Hotspur (TOT), with valuation $\mathcal{LN}\left(1.3, 1.1^2\right)$; NEW with valuation $\mathcal{LN}\left(1.2, 1.1^2\right)$; and WAT with valuation $\mathcal{LN}\left(1.2, 1.1^2\right)$. The seller's affinity values toward these four clubs are centred at $5.1$, $2.9$, $2.6$, and $2.7$, respectively. Therefore, we may say that MNC is the strongest bidder and WOL has the least affinity towards them. 

\begin{figure}[!h]
    \centering
    \begin{subfigure}{\linewidth}
        \centering
        \includegraphics[width=0.8\textwidth]{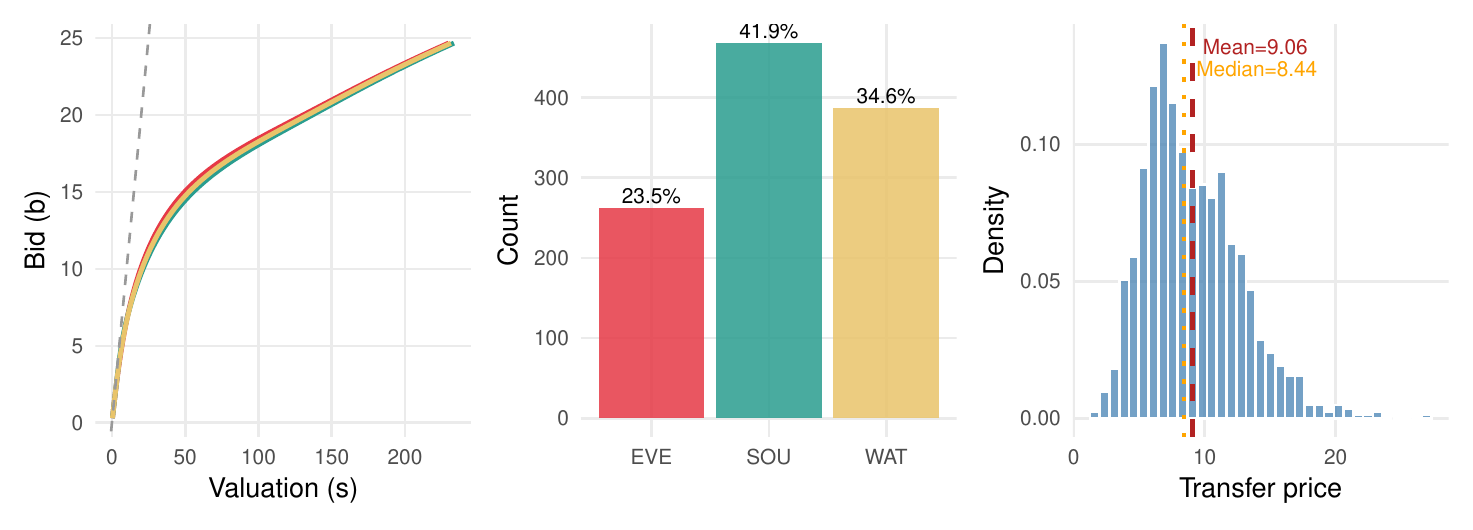}
        \caption{Analysis of auction mechanism for Almiron}
        \label{fig:bid:almiron}
    \end{subfigure}
    \vspace{0.5cm}
    \begin{subfigure}{\linewidth}
        \centering
        \includegraphics[width=0.8\textwidth]{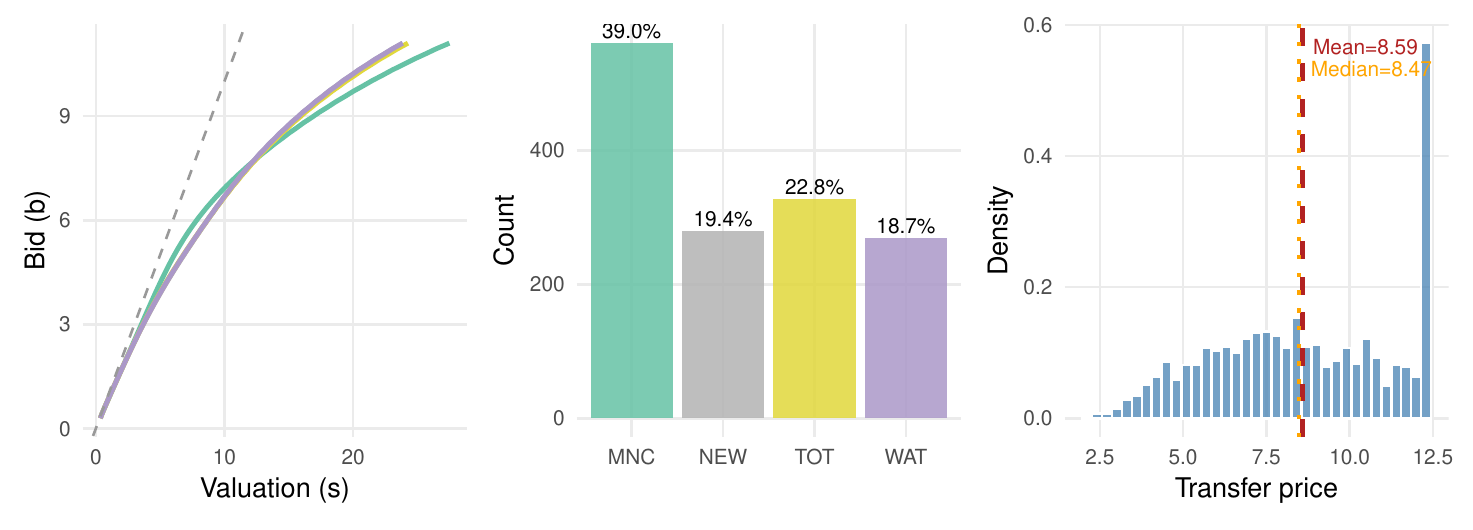}
        \caption{Analysis of the auction mechanism for Traore}
        \label{fig:bid:traore}
    \end{subfigure}
    \caption{Summary of simulation studies for understanding transfer market dynamics. The left panel illustrates the bidding strategy of different clubs; the middle panel illustrates the win probabilities of different clubs; and the right panel illustrates the histograms of transfer fees.}
    \label{fig:bid:combined}
\end{figure}

We present our analyses in \Cref{fig:bid:combined} with three separate panels. In the left-most panel, we present the bidding strategies of different clubs. We notice that, in line with \Cref{rem:maskin:riley}, the Maskin-Riley ordering holds for our analysis with Almiron. However, for Traore, as WOL is less affine towards MNC, the latter are required to bid more aggressively for lower values of bids to increase the chance of winning. MNC can afford to bid more conservatively for higher prices as they are the most dominant bidder in terms of the valuation distribution. In the middle panel, we present the conditional sell probability to each club. Overall, Traore is sold in 71.8\% of the simulated scenarios, as compared to 55.8\% for Almiron. This substantial gap can be attributed to the reserve distributions: while Almiron's seller has $\mu_\rho=1.5$ (expected reserve $\text{\euro} 8.2$ million), Traore's seller has $\mu_\rho=0.7$ (expected reserve $\text{\euro} 3.7$ million). A weaker reserve distribution is dominated by the equilibrium bids more frequently. Within the winning transfers, SOU is the most dominant club to acquire Almiron with almost 42\% winning rate, whereas MNC is the most dominant club to acquire Traore with almost 39\% winning rate. Finally, in the rightmost panel, we present the price distributions. We see that Almiron generates a higher mean price ($\text{\euro} 9.1$ million versus $\text{\euro} 8.6$ million for Traore) despite a lower sale probability. As discussed earlier, the mechanism is right-truncated by an amount mimicking the buyout clause. This value is lower for Traore ($\text{\euro} 12.3$ million), which means many bids clear this amount. As a result, we see a sharp peak for him at this point. For the convenience of readers, we also summarise these results in \Cref{tab:summary:bids:combo}.

\begin{table}[h]
\centering
\caption{Summary of single-round simulation results for two players with multiple interests.}
\label{tab:summary:bids:combo}
{\scriptsize
\begin{tabular}{lcc}
\toprule
& Almiron & Traore \\
\midrule
Bidding clubs        & 3      & 4      \\
Sale probability           & 55.8\% & 71.8\% \\
Expected price ({\euro} million)    &  8.2 &  3.7 \\
Mean selling price ({\euro} million)    &  9.1 &  8.6 \\
Standard deviation of price ({\euro} million)      &  3.7 &  2.7 \\
IQR of price ({\euro} million) &  [6.3, 11.4]  & [6.5, 11]  \\
Dominant club (share)      & SOU (41.9\%) & MNC (39.0\%) \\
\bottomrule
\end{tabular}
}
\end{table}

Both simulation case studies provide useful insights into the mechanisms through which football transfers may generate inflated transfer fees. In the case of Almiron, the simulated outcomes remain closely aligned with the seller club's reserve-price distribution and the final estimated transfer fee, indicating a relatively balanced negotiation environment. In contrast, the analysis of Traore demonstrates that the presence of a highly dominant club in the bidding process can substantially inflate the final transfer fee. In our simulated setting, the resulting transfer price increases to more than twice the level implied by the seller club’s reserve distribution. This highlights how asymmetries in financial strength and valuation across clubs can significantly distort market prices during a transfer window. Such results can further be extended to a multi-round negotiation paradigm, which we briefly discuss in the supplementary material (refer to Section S5).

\section{Concluding remarks}\label{sec:conclusions}

In this paper, we have proposed a unified quantitative framework for football transfer strategy that combines player-performance modelling, transfer-fee estimation, constrained squad optimisation, and competitive bidding dynamics within a single decision-analytic structure. Using mixed-effects models for both player ratings and transfer fees, we incorporated player-level characteristics, recent performance indicators, club and league environments, and latent heterogeneity across transfer corridors. These predictive components were subsequently integrated into a weighted multi-criteria constrained optimisation framework that balances squad quality, transfer expenditure, and financial risk under realistic operational constraints. Finally, motivated by the competitive nature of modern football transfers, we embedded the outcomes of the optimisation framework within an asymmetric independent private-value auction model with random reserve prices to study transfer-market behaviour when multiple clubs compete for the same player.

Our empirical analysis shows several important findings. First, the rating model confirms that player contribution depends jointly on individual trajectory and squad context, thereby supporting the portfolio-based interpretation of squad construction. Second, the transfer-fee model highlights the importance of persistent quality signals, market inflation, league-specific pricing structures, and seller bargaining strength in determining transfer fees. Third, the optimisation framework produces counterfactual squad configurations that often achieve comparable or improved projected squad quality while requiring lower expenditure than recorded transfer activity. This suggests that several clubs may spend inefficiently relative to a model-implied benchmark, even when their sporting outcomes remain competitive. The auction-theoretic extension further illustrates how transfer-market dynamics can generate substantial variation in transfer fees. In particular, the simulations demonstrate that the participation of financially dominant clubs can significantly alter equilibrium bidding behaviour and increase transfer prices beyond baseline reserve-price expectations. At the same time, the introduction of affinity-dependent acceptance probabilities provides a flexible mechanism for capturing strategic seller preferences, including reluctance to sell to direct rivals or willingness to negotiate more with specific clubs. Consequently, the framework offers a probabilistic interpretation of transfer negotiations that more closely resembles the institutional structure of football transfers than classical auction formulations with deterministic reserve prices.

Nevertheless, several limitations remain. The current optimisation framework does not explicitly incorporate wages, contract duration, agent fees, injury histories, tactical compatibility, or dynamic interactions across negotiations for multiple players by the same club. In addition, the choice of mixing parameters is heuristic and intended to reflect stylised differences in financial behaviour across clubs. Future work could address these limitations through dynamic multi-period optimisation, data-driven calibration of objective weights, Bayesian uncertainty propagation, tactical interaction networks, and sequential or multi-round models of transfer negotiations involving multiple players.

Overall, the proposed methodology demonstrates how statistical modelling, stochastic optimisation, and auction theory can be integrated to study football transfers within a unified operational framework. As transfer markets continue to become more data-driven and financially competitive, quantitative decision-support systems of this nature can help clubs evaluate transfer strategies while balancing sporting ambition with long-term financial sustainability.

\section*{Data availability statement}

The data used in this research have been obtained from two sources: Transfermarkt website (link: \url{https://www.transfermarkt.co.uk/}) and WhoScored website (link: \url{https://www.whoscored.com/}). The cleaned data, along with appropriate R scripts, will be made publicly available in a GitHub repository maintained by the first author.

\bibliography{references}
\bibliographystyle{apalike}

\newpage

\begin{center}
	{\LARGE{\bf Supplementary materials}}
\end{center}

\setcounter{table}{0}
\renewcommand{\thetable}{S.\arabic{table}}
\setcounter{figure}{0}
\renewcommand{\thefigure}{S.\arabic{figure}}
\setcounter{section}{0}
\renewcommand{\thesection}{S\arabic{section}}
\setcounter{equation}{0}
\renewcommand{\theequation}{S.\arabic{equation}}

\setcounter{proposition}{0}

\section{Theoretical results and their proofs}

In Section 2.4 of the main paper, we discuss two key results that give us a framework for solving the equilibrium strategy in competitive bidding using numerical integration techniques. Those two results and their proofs are provided below in detail.

\begin{proposition}
\label{prop:affinity-foc}
Suppose $F$ and $H$ are continuously differentiable with strictly positive densities and $p_c(\cdot)$ and $\psi_c(\cdot)$ are continuously differentiable function on $(0, \upsilon_{\mathsf{thresh}}]$. Then in an interior pure-strategy equilibrium under risk neutrality, the inverse bidding functions satisfy
\begin{equation}
\frac{1}{s-b}
=
\frac{p_c'(b)}{p_c(b)}
+
\frac{h(b)}{H(b)}
+
\sum_{j\neq c}\frac{f_j(\psi_j(b))}{F_j(\psi_j(b))}\psi'_j(b) .
\label{eq:foc}
\end{equation}
Moreover, if $p_c(b)\equiv p_c$ is constant, equilibrium bids coincide
with those obtained under $p_c(b)\equiv1$; that is only expected utilities are
scaled by $p_c$.
\end{proposition}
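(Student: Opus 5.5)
Fix a club $c\in\mathcal{C}^i$ and a valuation $s$, and suppose the rival clubs $j\neq c$ use strictly increasing, continuously differentiable bidding strategies $\kappa_j$ with inverses $\psi_j$. In an interior pure-strategy equilibrium, the bid $b=\kappa_c(s)$ must maximise $U_c(b;s)=(s-b)\,q_c(b)$ from \eqref{eq:utility-general}. The allocation probability is given by \eqref{eq:alloc-prob}.

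On the interior region $b<\upsilon_{\mathsf{thresh}}$, the truncations are just rescalings by constants: $\hat{p}_c(b)=p_c(b)/p_c(\upsilon_{\mathsf{thresh}})$ and $\hat H(b)=H(b)/H(\upsilon_{\mathsf{thresh}})$. So $q_c(b)$ is a positive constant times $p_c(b)H(b)\prod_{j\neq c}F_j(\psi_j(b))$.

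First I will note that on the interior region every factor is strictly positive. This holds for $H$ and the $F_j$ because the densities are strictly positive and $b$ exceeds the lower support, and for $p_c$ because an interior equilibrium bid must have positive acceptance probability. Each factor is also $C^1$, by the hypotheses and the chain rule for $F_j\circ\psi_j$. Hence $U_c(\cdot;s)$ is differentiable and positive near the optimum whenever $s>b$, and I can work with $\log U_c$. The first-order condition at an interior maximiser is
\[
\frac{\partial}{\partial b}\log U_c(b;s)=-\frac{1}{s-b}+\frac{p_c'(b)}{p_c(b)}+\frac{h(b)}{H(b)}+\sum_{j\neq c}\frac{f_j(\psi_j(b))\,\psi_j'(b)}{F_j(\psi_j(b))}=0 .
\]
Two points go into this expression:
\begin{itemize}
\item the normalising constants $p_c(\upsilon_{\mathsf{thresh}})$ and $H(\upsilon_{\mathsf{thresh}})$ vanish under the logarithmic derivative;
\item $h=H'$ and $f_j=F_j'$, so $\frac{d}{db}\log F_j(\psi_j(b))=f_j(\psi_j(b))\psi_j'(b)/F_j(\psi_j(b))$.
\end{itemize}
Rearranging gives \eqref{eq:foc}. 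Evaluating at $b=\kappa_c(s)$, equivalently $s=\psi_c(b)$, expresses it as a condition on the inverse bidding functions.

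For the second claim, suppose $p_c(b)\equiv p_c$ is constant. Then $p_c'/p_c=0$, and the first-order condition coincides with the one obtained when $p_c\equiv1$. Moreover $U_c(b;s)$ equals $p_c$ times the utility under $p_c\equiv1$ for every $b$, so the argmax sets are identical, not merely the stationary points. The equilibrium bids therefore coincide, and only the expected utility levels are scaled by $p_c$.

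The main subtlety is justifying the use of the FOC. Interiority and differentiability of $\psi_j$ are assumed, so stationarity is necessary at an interior optimum. I also need $s>b$ at the optimum, which holds because bidding at or above one's valuation yields nonpositive utility while some lower bid yields positive utility. Finally, I will remark that the kink of the truncation at $\upsilon_{\mathsf{thresh}}$ lies outside the interior region, so it does not affect the derivation.
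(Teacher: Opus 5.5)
Your proposal is correct and follows the paper's proof. The paper also writes the first-order condition of $(s-b)q_c(b)$ as $q_c'(b)/q_c(b)=1/(s-b)$, log-differentiates the product in \eqref{eq:alloc-prob}, and handles the constant-$p_c$ case by factoring it out of the objective; your extra care about the normalising constants, positivity and $s>b$ makes explicit what the paper leaves implicit. One small caveat: treating $\hat p_c$ as a pure rescaling for $b<\upsilon_{\mathsf{thresh}}$ tacitly requires $p_c(b)\leqslant p_c(\upsilon_{\mathsf{thresh}})$ (e.g.\ $p_c$ nondecreasing, as with the logistic affinities used in the application), and the paper's proof makes this assumption without comment as well.
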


\begin{proof}
Differentiating $U_c(b;s)$ given by equation (21) in Section~2.4 of the main manuscript yields
\begin{equation*}
    \frac{d}{db}U_c(b;s)=-q_c(b)+(s-b)q_c'(b).
\end{equation*}
At the optimum, $q_c'(b)/q_c(b)=1/(s-b)$. Then, using the identity in equation (20) of the main paper and logarithmic differentiation, we can write
\begin{align*}
    \frac{q_c'(b)}{q_c(b)}
&=
\frac{d}{db}\log \hat{p}_c(b)
+
\frac{d}{db}\log \hat{H}(b)
+
\sum_{j\neq c}\frac{d}{db}\log\left(
F_j(\psi_j(b))\right)\\
&=
\frac{p_c'(b)}{p_c(b)}
+
\frac{h(b)}{H(b)}
+
\sum_{j\neq c}\frac{f_j(\psi_j(b))}{F_j(\psi_j(b))}\psi'_j(b),
\end{align*}
which proves \eqref{eq:foc}.
If $p_k(b)$ is constant, it factors out of the objective and does not
affect the maximiser.
\end{proof}

\begin{proposition}
\label{thm:asymm-ode}
Suppose $F$ and $H$ are continuously differentiable with strictly
positive densities on
$(\underline{s},\overline{s}]$,
and $p(\cdot)$ and $\psi_c(\cdot)$ are continuously differentiable in the interval $[b_{\min}, \upsilon_{\mathsf{thresh}}]$ with $p(b), \psi_c(b)>0$. Then, we get the following system of ordinary differential equations for $c\in\mathcal{C}^i$:
\begin{equation*}
    \psi'_c(b) = \frac{1}{\left(\abs{\mathcal{C}^i} - 1\right)}\frac{F_c(\psi_c(b))}{f_c(\psi_c(b))}\left[\sum_{j\not = c}\left(\frac{1}{\psi_j(b)-b} - \frac{p'_j(b)}{p_j(b)}\right)-
    \left(\abs{\mathcal{C}^i}-2\right)\left(\frac{1}{\psi_c(b)-b} - \frac{p'_c(b)}{p_c(b)} \right) -
    \frac{h(b)}{H(b)}
    \right].
\end{equation*}
\end{proposition}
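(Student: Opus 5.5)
The plan is to obtain the ODE system by evaluating the first-order condition of Proposition~\ref{prop:affinity-foc} along each club's own inverse bidding function, and then solving the resulting linear system for the derivatives $\psi'_j(b)$.

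\textbf{Step 1: specialise the first-order condition.} Fix a bid $b\in[b_{\min},\upsilon_{\mathsf{thresh}}]$. For each $c\in\mathcal{C}^i$, the bid $b$ is optimal for the type $s=\psi_c(b)$, since $\psi_c=\kappa_c^{-1}$ and $\kappa_c$ is strictly increasing. Substituting $s=\psi_c(b)$ into \eqref{eq:foc} is legitimate under the stated regularity: $\psi_c(b)>0$, $p_c(b)>0$, and densities are positive on $(\underline{s},\overline{s}]$. Below the threshold, $\hat p_c/p_c$ and $\hat H/H$ are positive constants, so the logarithmic derivatives of the hatted and unhatted functions agree. Define
\[
g_j(b)=\frac{f_j(\psi_j(b))}{F_j(\psi_j(b))}\psi'_j(b),\qquad
A_c(b)=\frac{1}{\psi_c(b)-b}-\frac{p'_c(b)}{p_c(b)}-\frac{h(b)}{H(b)}.
\]
The first-order condition then becomes, for every $c$,
\[
\sum_{j\neq c} g_j(b)=A_c(b).
\]
With $n=\abs{\mathcal{C}^i}$, this is the linear system $(\mathbf{J}-\I)\bm g=\bm A$, where $\mathbf{J}$ is the $n\times n$ all-ones matrix.

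\textbf{Step 2: invert the system.} Summing the $n$ equations gives $(n-1)\sum_j g_j=\sum_j A_j$. Hence
\[
g_c=\frac{1}{n-1}\sum_j A_j-A_c=\frac{1}{n-1}\Bigl[\sum_{j\neq c}A_j-(n-2)A_c\Bigr].
\]
The inversion needs $n\ge 2$, which is automatic when there is any competition; the matrix $\mathbf{J}-\I$ has eigenvalues $n-1$ and $-1$ and is therefore nonsingular.

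\textbf{Step 3: collect the reserve terms and solve for $\psi'_c$.} The reserve hazard $h/H$ appears $(n-1)$ times with a minus sign in $\sum_{j\neq c}A_j$. It appears with a plus sign and coefficient $(n-2)$ in $-(n-2)A_c$. Its net coefficient is therefore $-(n-1)+(n-2)=-1$, which gives exactly the single $-h(b)/H(b)$ inside the bracket. Finally, multiply by $F_c(\psi_c(b))/f_c(\psi_c(b))$, which is well defined because the density is strictly positive. This yields the stated expression for $\psi'_c(b)$.

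\textbf{Main obstacle.} The algebra is routine. The delicate point is justifying that the first-order condition holds simultaneously for all clubs at the same $b$. This requires the equilibrium bids of all clubs to share a common interior range $[b_{\min},\upsilon_{\mathsf{thresh}}]$, on which each $\psi_j$ is differentiable and $\psi_c(b)>b$. I would take this as part of the interior pure-strategy equilibrium hypothesis inherited from Proposition~\ref{prop:affinity-foc}, and note that the truncation at $\upsilon_{\mathsf{thresh}}$ only rescales $p_c$ and $H$ by constants on this range.
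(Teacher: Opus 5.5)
Your proof is correct and follows essentially the same route as the paper: evaluate the first-order condition of the preceding proposition at $s=\psi_c(b)$ for every club, sum the $\abs{\mathcal{C}^i}$ equations to obtain $(\abs{\mathcal{C}^i}-1)\sum_j g_j$, and subtract club $c$'s own equation to isolate $g_c$. The only difference is presentational. You fold $h/H$ into $A_c$ and phrase the step as inverting $\mathbf{J}-\I$, whereas the paper keeps the hazard term separate and lets its coefficient $\abs{\mathcal{C}^i}/(\abs{\mathcal{C}^i}-1)$ reduce to $1/(\abs{\mathcal{C}^i}-1)$ after the subtraction.
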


\begin{proof}
At Bayes Nash equilibrium, we have $\psi_c(b) = s$. Then, we can write \eqref{eq:foc} as
\begin{equation}\label{eq:equib:one}
\frac{1}{\psi_c(b)-b}
=
\frac{p_c'(b)}{p_c(b)}
+
\frac{h(b)}{H(b)}
+
\sum_{j\neq c}\frac{f_j(\psi_j(b))}{F_j(\psi_j(b))}\psi'_j(b).
\end{equation}
Summing over all values of $c\in\mathcal{C}^i$, we get:
\begin{equation*}
    \sum_{j}\frac{1}{\psi_j(b)-b} = \sum_j \frac{p'_j(b)}{p_j(b)} + \abs{\mathcal{C}^i}\frac{h(b)}{H(b)} 
    + \left(\abs{\mathcal{C}^i} - 1\right)\sum_j \frac{f_j(\psi_j(b))}{F_j(\psi_j(b))}\psi'_j(b),
\end{equation*}
or, equivalently,
\begin{equation}\label{eq:equib:all}
    \frac{1}{\left(\abs{\mathcal{C}^i} - 1\right)}\sum_{j}\frac{1}{\psi_j(b)-b}
    = \frac{1}{\left(\abs{\mathcal{C}^i} - 1\right)}\sum_j \frac{p'_j(b)}{p_j(b)} +
    \frac{\abs{\mathcal{C}^i}}{\left(\abs{\mathcal{C}^i} - 1\right)}\frac{h(b)}{H(b)}  +
    \sum_j \frac{f_j(\psi_j(b))}{F_j(\psi_j(b))}\psi'_j(b).
\end{equation}
Then, subtracting \eqref{eq:equib:one} from \eqref{eq:equib:all}, we get
\begin{equation*}
    \frac{1}{\left(\abs{\mathcal{C}^i} - 1\right)}\left[\sum_{j}\frac{1}{\psi_j(b)-b} - \sum_j \frac{p'_j(b)}{p_j(b)}\right] - 
    \left[\frac{1}{\psi_c(b)-b} - \frac{p'_c(b)}{p_c(b)}\right] -
    \frac{1}{\left(\abs{\mathcal{C}^i} - 1\right)}\frac{h(b)}{H(b)} = \frac{f_c(\psi_c(b))}{F_c(\psi_c(b))}\psi'_c(b),
\end{equation*}
which leads to the following equation to complete the proof:
\begin{equation*}
    \psi'_c(b) = \frac{1}{\left(\abs{\mathcal{C}^i} - 1\right)}\frac{F_c(\psi_c(b))}{f_c(\psi_c(b))}\left[\sum_{j\not = c}\left(\frac{1}{\psi_j(b)-b} - \frac{p'_j(b)}{p_j(b)}\right)-
    \left(\abs{\mathcal{C}^i}-2\right)\left(\frac{1}{\psi_c(b)-b} - \frac{p'_c(b)}{p_c(b)} \right) -
    \frac{h(b)}{H(b)}
    \right].
\end{equation*}
\end{proof}

\section{Additional discussions on the model for player ratings}

In Section 3.1 of the main paper, we discussed the key results of the linear mixed effects model used for analysing the player ratings. Here, we take a look at the estimated variance components, which are helpful to obtain a concise summary of the magnitude of latent heterogeneity captured by the random intercepts. 

Overall, the residual standard deviation is $\hat{\sigma}=0.2996$, which dominates the decomposition and indicates that most unexplained variation remains at the player-season level after accounting for observed covariates and random intercepts. The league-level random intercepts are non-negligible: the current-league effect has standard deviation $0.0929$ while the last-league effect has standard deviation $0.0846$. Relative to the total variance $\hat{\sigma}^2+\hat{\sigma}^2_{\mathrm{cur}}+\hat{\sigma}^2_{\mathrm{last}}+\hat{\sigma}^2_{\mathrm{club}}$, these correspond to variance shares of approximately $8.2\%$ (current league) and $6.8\%$ (last league), with the residual accounting for about $84.8\%$. In contrast, the origin-destination club-corridor random intercept has a standard deviation of $0.0122$, contributing only about $0.14\%$ of the total variance. This indicates that, conditional on the fixed effects (including lagged rating and team context) and league environments, there is little additional persistent corridor-specific intercept shift left to explain. 

From the perspective of our transfer strategy pipeline, the league random effects therefore play a substantive role in producing coherent counterfactual forecasts across league environments, while the corridor term mainly serves as a robustness layer that can adjust a small subset of historically atypical corridors without materially affecting most predictions. The empirical distributions of the best linear unbiased predictors (BLUPs) are presented in \Cref{fig:blup_rating}. This is also consistent with this decomposition: the league BLUPs exhibit visibly wider dispersion around zero than the club-corridor BLUPs, which are tightly concentrated near the origin.

\begin{figure}[!ht]
    \centering
    \includegraphics[width=0.9\textwidth]{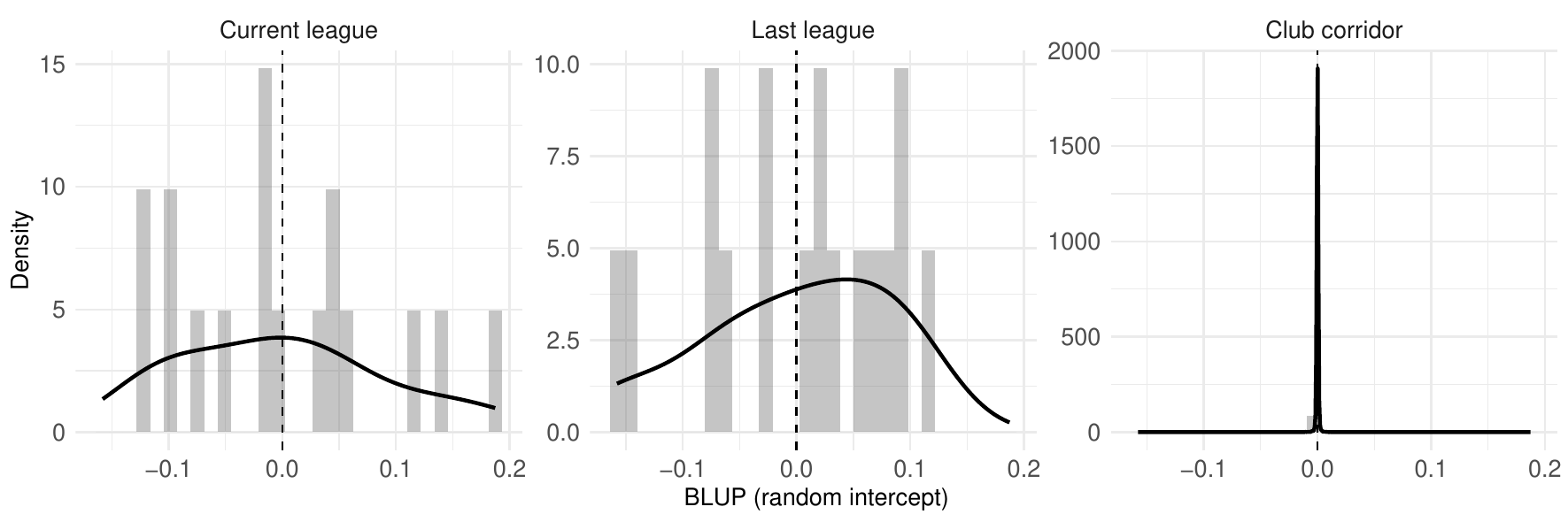}
    \caption{For the rating model, empirical densities of the BLUPs (random intercept estimates for current league, last league, and origin-destination club corridor) are presented left to right.}
    \label{fig:blup_rating}
\end{figure}

Next, we find it imperative to benchmark the \proposed\ rating model against other commonly used alternatives. The first comparator is a log-linear ordinary least squares regression model (hereafter denoted as \OLS) that includes the same fixed effects as the \proposed\ model but does not incorporate any random effects, and therefore treats the rating scale as homogeneous across league environments. The second comparator \lasso\ is a purely penalised regression model that performs variable selection and shrinkage through an $\mathcal{L}_1$ penalty, serving as a representative high-dimensional baseline. The third comparator \enet\ improves upon \lasso, where we take a two-stage procedure in which the fixed-effect feature set is first reduced using the elastic-net regularisation and the resulting predictors are then used in a mixed-effects model, aiming to trade interpretability for parsimony. We note that these are common methods used for predicting future performances of football players \citep[see, e.g.,][]{bhatnagar2024rating, klaiber2026quantifying}. We also examined other techniques such as Ridge regression, random forest, XGBoost, and beta regression, but their efficacy was markedly worse than the proposed model, and the results are omitted from here for brevity. The following comparison is intended as a predictive plausibility check rather than a claim of universal superiority. All methods are evaluated on the same held-out test set, and differences in performance across rating strata should be interpreted in light of the downstream optimisation context, where ranking accuracy among high-rated players is most consequential for squad-selection decisions.

The above-mentioned four methods are compared in terms of fit and predictive accuracy. The first aspect is judged based on the root mean squared error (RMSE) of the fitted values on the training set, and the Akaike information criterion \citep{stoica2004model}. The second aspect is assessed based on the RMSE of the predicted values with respect to the truth, and the mean absolute error (MAE) of the same for the entire test set. To further understand the accuracy in different rating groups, we also compute the MAE for four different brackets, as shown in \Cref{tab:model-comparison-rating}.

\begin{table}[!ht]
    \centering
    \caption{Comparison between different models for rating prediction in training and test set.}
    \label{tab:model-comparison-rating}
    {\scriptsize
    \begin{tabular}{llccccc}
    \toprule
    Evaluation set & Metric & \proposed\ & \OLS\ & \lasso\ & \enet\ \\
    \midrule
    Training set & RMSE & 0.299 & 0.299 & 0.380 & 0.344 \\
     & AIC & 16864.56 & 16912.20 & 52101.91 & 31040.34 \\
     \midrule
    Test set & MAE & 0.229 & 0.230 & 0.286 & 0.283 \\
     & RMSE & 0.296 & 0.298 & 0.364 & 0.361 \\
    \midrule
    Test set with rating $<$ 6 & MAE & 0.584 & 0.584 & 0.728 & 0.766 \\
    Test set with rating 6--6.5 & MAE & 0.243 & 0.242 & 0.310 & 0.354 \\
    Test set with rating  6.5--7 & MAE & 0.160 & 0.162 & 0.157 & 0.135 \\
    Test set with rating  $>$ 7 & MAE & 0.402 & 0.402 & 0.615 & 0.532 \\
    \bottomrule
    \end{tabular}}
\end{table}

The results demonstrate that the \proposed\ model provides the best overall balance between goodness-of-fit and out-of-sample accuracy. On the training set, it records the best in-sample RMSE while achieving the smallest AIC among the likelihood-based candidates, indicating that the additional random-effect structure improves model fit without overfitting relative to \OLS. In the test set, overall the lowest RMSE and MAE are recorded by the \proposed\ model. The results are similar for \OLS\ as well. These conclusions remain competitive across rating strata. In contrast, the regularisation-based alternatives exhibit noticeably larger test errors, suggesting that aggressive shrinkage and/or misalignment between selection and the hierarchical structure leads to loss of predictive accuracy in this application. The stratified MAE results are particularly relevant for downstream squad optimisation, where ranking errors among high or low-rated players can materially affect selection decisions under budget constraints. Our \proposed\ model maintains stable performance across the rating groups, including the tails, whereas the penalised baselines display larger errors for the extreme segments. Consequently, our approach provides more reliable counterfactual forecasts of player value for the focal club, which strengthens the quality input to the optimisation module and reduces the risk that transfer recommendations are driven by modelling artefacts rather than substantive differences in expected contribution.

\section{Additional discussions on the model for transfer valuation}

Akin to the previous section, we now move on to the discussions on the random effects of the linear mixed effects model used to analyse the transfer valuation. Note that the main insights have been added in Section 3.2 of the main manuscript. 

In this case, there are two random effects, corresponding to the buyer club and the seller club. These random intercepts capture persistent heterogeneity in willingness-to-pay and selling premiums that are not explained by observed player characteristics, recent performance, or the league-level price controls. The estimated standard deviations are $0.4498$ for the buyer-club effect and $0.2983$ for the seller-club effect, while the residual standard deviation on the log-fee scale is $1.0608$. If we translate these numbers to the calculation of total variance, we find that buyer and seller club effects account for approximately $14.3\%$ and $6.3\%$, respectively, with the remaining $79.4\%$ attributable to transaction-level idiosyncratic variation. Interpreting magnitudes on the original fee scale, a one-standard-deviation increase in the buyer effect corresponds to an approximate multiplicative premium of $56.8\%$ in expected fees (keeping all else equal), while a one-standard-deviation increase in the seller effect corresponds to a premium of $34.7\%$ in the transfer fee. 

The empirical BLUP distributions in \Cref{fig:blup_transfer} align with these estimates: the buyer-club BLUPs exhibit visibly wider dispersion around zero than the seller-club BLUPs, indicating stronger cross-club heterogeneity on the demand side than on the supply side. From the standpoint of transfer planning, these random effects provide a parsimonious correction for club-specific pricing regimes: they allow the model to penalise systematically expensive buyers and systematically strong sellers when forecasting acquisition costs, which directly improves budget feasibility and risk control in the downstream chance-constrained optimisation.

\begin{figure}[!ht]
    \centering
    \includegraphics[width=0.7\textwidth]{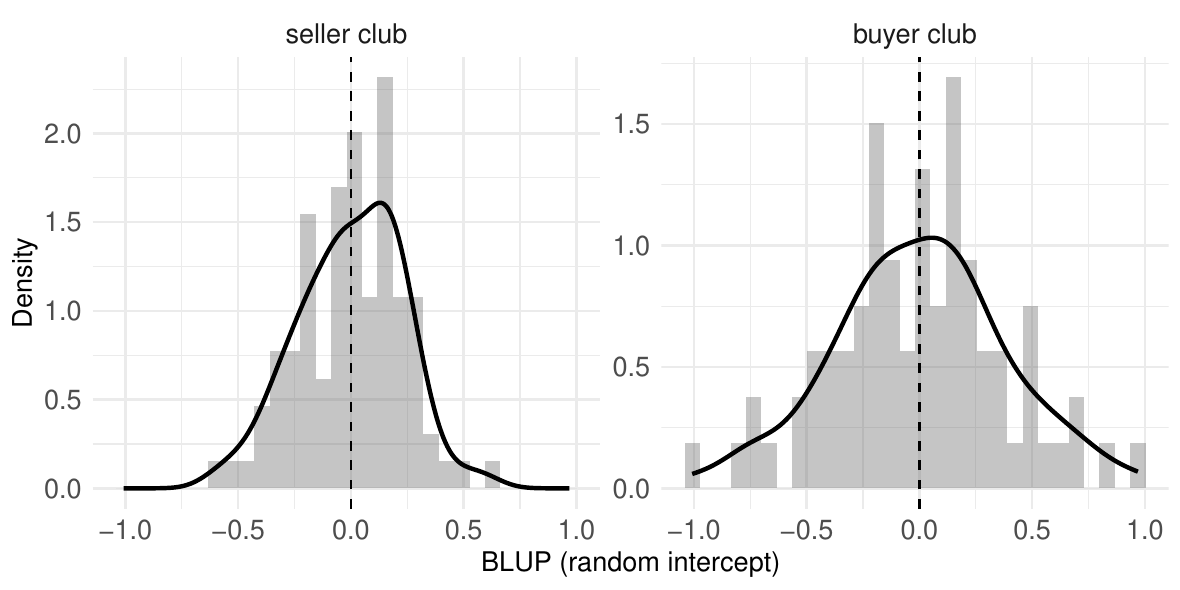}
    \caption{For the transfer-fee model, empirical densities of the BLUPs (random intercept estimates for seller-club and buyer-club effects on the log-fee scale) are presented left to right.}
    \label{fig:blup_transfer}
\end{figure}

It is important to note that the large residual component highlights the importance of maintaining explicit uncertainty handling (as done via the log-normal assumption and the chance constraint), since a substantial fraction of fee variation remains transaction-specific even after accounting for rich covariates and club-level heterogeneity. The shared buyer-club random intercept $b^{\mathrm{buy}}_{c_0}$ induces positive correlation across all transfer fees paid by the same club in a given window. The Marlow approximation used in the deterministic equivalent of the chance constraint (Section~2.3 of the main paper) treats these fees as conditionally independent given the fixed-effect predictions, which leads to a slight underestimate of $\sigma^2_T$ and therefore a marginally less conservative budget constraint. Given that the buyer random effect accounts for approximately $14.3\%$ of total fee variance while the residual transaction-level component dominates ($79.4\%$), we treat this as a reasonable working approximation. A conservative analyst can absorb the underestimation by choosing a slightly tighter $\alpha$ (e.g.\ $\alpha = 0.025$ instead of $0.05$) or by adding a fixed buffer to $B_{\max}$ equal to $z_\alpha \cdot \hat\sigma_{\mathrm{buy}} \cdot \sqrt{n_{\mathrm{buy}}} \cdot \bar{\mu}_Y$, where $n_{\mathrm{buy}}$ is the number of incoming players and $\bar\mu_Y$ is the average predicted fee on the original scale.

As the last piece of discussion here, we conduct a comparative discussion of the \proposed\ fee model against five alternatives that span standard econometric baselines and flexible machine-learning approaches. Some of these methods were considered in a similar comparative discussion by \cite{McHaleHolmes2023}. The first comparator in our study is a log-linear ordinary least squares regression (once again, we use \OLS\ to denote this below) fitted using the same covariate set but without club-level random effects, hence treating buyer and seller heterogeneity as fully explained by observables. Second, we consider the \tobit\ regression framework which has been used by multiple researchers \citep[e.g.,][]{carmichael1999labour, ruijg2015determinants} in similar contexts. The next two methods are two penalised regression models (using the same notations as before, denote them as \lasso\ and \enet) that apply shrinkage and variable selection to improve parsimony, at the expense of explicitly modelling hierarchical market structure. Finally, we use the \xgb\ method that provides a non-linear benchmark trained on the same feature set, representing a flexible prediction-focused alternative without an explicit likelihood-based structure for inference. As with the rating model, the comparison below is a predictive benchmark exercise rather than a claim of universal dominance. We note in particular that the \proposed\ model's advantage is concentrated in the upper fee segment ($\geqslant 50$ million euros), which is the region most consequential for budget feasibility and risk in the downstream chance-constrained optimisation; the modest underperformance in the lowest-fee group is less material for squad planning, since these transfers contribute relatively little to aggregate spending.

Here also, we compare the methods using RMSE, AIC and MAE (for the entire test set as well as for different price groups), and the results are presented in \Cref{tab:model-comparison-transfer}. Note that AIC is not computed for \xgb\ as it is not a likelihood-based model.

\begin{table}[!ht]
    \centering
    \caption{Comparison between different models for transfer valuation in training and test set (RMSE and MAE are presented in millions of euros).}
    \label{tab:model-comparison-transfer}
    {\scriptsize
    \begin{tabular}{llcccccc}
    \toprule
    & & \proposed\  & \OLS\ & \tobit & \lasso\ & \enet\ & \xgb\ \\
    \midrule
    Training set & RMSE & 1.029 & 1.151 & 1.151 & 1.541 & 1.551 & 1.469 \\
     & AIC & 7392.79 & 7607.77 & 7595.24 & 9060.72 & 9051.19 & -- \\
     \midrule
    Test set  & MAE & 7.799 & 8.680 & 8.845 & 11.407 & 11.406 & 11.415 \\
     & RMSE & 14.156 & 16.285 & 16.697 & 22.303 & 22.306 & 22.316 \\
    \midrule
    Test set with fees $<$ 10m & MAE & 2.915 & 2.592 & 2.535 & 2.309 & 2.333 & 2.296 \\
    Test set with fees 10m--20m & MAE & 7.851 & 8.563 & 8.309 & 11.156 & 11.171 & 11.231 \\
    Test set with fees 20m--50m & MAE & 15.505 & 18.082 & 18.909 & 24.650 & 24.651 & 24.781 \\
    Test set with fees $\geqslant$ 50m & MAE & 41.924 & 52.404 & 53.952 & 79.667 & 79.705 & 79.565 \\
    \bottomrule
    \end{tabular}}
\end{table}

We observe that the \proposed\ mixed-effects model achieves the best overall predictive accuracy and in-sample fit among the competing specifications. On the training set, it attains the lowest RMSE and also the smallest AIC among likelihood-based candidates, indicating that incorporating buyer and seller club random intercepts improves fit beyond the fixed-effect baseline without undue complexity. In terms of fit, the \tobit\ model is the second best, closely followed by \OLS, whereas the other approaches are much worse. More importantly, the predictive accuracy of the \proposed\ method outperforms all others when we look at the overall accuracy. Interestingly, the penalised regression models and \xgb\ should be preferred for low-valued players as the accuracy is substantially higher for them in such cases. The gap between the \proposed\ model and the remaining approaches widens markedly as transfer fees increase, with especially large gains for transfers above $\text{\euro}$ 50 million. 

The above pattern is consistent with the model structure. Club-level random effects and market-level covariates help capture persistent willingness-to-pay and selling premium, which are most consequential in the upper tail where a small number of financially dominant buyers and strong sellers drive realised prices. From the perspective of downstream optimisation, this is particularly valuable because large-fee targets dominate budget feasibility and risk. The proposed model's improved performance for expensive transfers therefore reduces the likelihood of systematic budget underestimation and leads to more reliable chance-constrained decisions. At the same time, the modest underperformance in the lowest-fee group is less consequential for squad planning, since these transfers contribute relatively little to aggregate spending and can be accommodated through robustness in the optimisation stage.

\section{Additional results related to transfer strategy optimisation}

\subsection{Benchmarking for optimiser}
\label{sec:benchmark}
 
To assess the robustness of the transfer recommendations to the choice of optimisation algorithm, we compare the genetic algorithm (GA) \citep{ga_in_r} against two structurally similar search strategies: simulated annealing (SA) \citep{kirkpatrick} implemented via the base-\texttt{R} \texttt{optim} routine with \texttt{method = "SANN"}, and random-restart greedy hill climbing (HC) \citep{russel2010}. All three methods operate on the same penalised fitness function, constraint set, and the data used in the main illustration. Therefore, the only dimension of variation is the search strategy itself. We evaluate three clubs -- Arsenal (ARS), Manchester City (MNC), and Manchester United (MNU) -- across nine preference-weight values $\lambda_3 \in \{0.1, 0.2, \ldots, 0.9\}$, yielding up to $9 \times 3 = 27$ runs per club.
 
We summarise our analysis in \cref{tab:club_methods}. It is evident that GA achieves full feasibility (9/9 runs) across all three clubs, whereas SA fails to find a feasible solution in 8 of 9 runs for MNC and produces 2 infeasible runs for MNU. HC also appears to be reliable for ARS and MNU but gives 2 infeasible runs for MNC. Further, GA achieves the lowest mean expected transfer cost for ARS ($\text{\euro} 38.9$ million, while HC and SA achieve $\text{\euro} 39.1$ million and $\text{\euro} 47.7$ million respectively) and MNU ($\text{\euro} 55.7$ million, while HC achieves $\text{\euro} 58.1$ million and SA achieves $\text{\euro} 61.3$ million). For MNC, HC gives us the lowest mean cost ($\text{\euro} 113$ million while GA reaches $\text{\euro} 141$ million), but this comparison is unreliable because HC produced only 7 feasible solutions compared to all 9 for GA. In terms of squad quality, GA achieves the highest mean improvement for the two Manchester clubs. For ARS, HC gives us a slightly better squad in comparison to GA. 

\begin{table}[h]
\centering
\caption{Comparison of optimisation methods across clubs}
\label{tab:club_methods}
\renewcommand{\arraystretch}{1.2}

{\scriptsize
\begin{tabular}{lccc|ccc|ccc}
\toprule

& \multicolumn{3}{c|}{{Panel A: Arsenal}}
& \multicolumn{3}{c|}{{Panel B: Manchester City}}
& \multicolumn{3}{c}{{Panel C: Manchester United}} \\

\cmidrule(lr){2-4}
\cmidrule(lr){5-7}
\cmidrule(lr){8-10}

& GA & HC & SA
& GA & HC & SA
& GA & HC & SA \\

\midrule

Feasible runs (out of 9)
& 9 & 9 & 9
& \textbf{9} & 7 & 1
& \textbf{9} & \textbf{9} & 7 \\

Mean Cost
& \textbf{38.9} & 39.1 & 47.7
& 141.0 & \textbf{113.0} & 140.0
& \textbf{55.7} & 58.1 & 61.3 \\

Mean Rating
& 6.66 & \textbf{6.67} & 6.63
& \textbf{7.19} & 7.18 & 7.18
& \textbf{6.50} & 6.49 & 6.47 \\

Mean Run time (mins)
& \textbf{2.10} & 7.97 & 16.40
& \textbf{3.23} & 6.91 & 27.40
& 19.30 & \textbf{6.11} & 18.70 \\

Mean Evaluations (000s)
& \textbf{125} & 730 & 1389
& \textbf{188} & 619 & 2278
& 1109 & \textbf{551} & 1611 \\

\bottomrule
\end{tabular}}
\end{table}

Computationally, GA is substantially more efficient than both alternatives on all three clubs when measured either by wall-clock time or by the number of fitness function evaluations, which is a hardware-independent measure of computational effort. For ARS, GA requires a mean of 125{,}000 fitness evaluations per run compared with 730{,}000 for HC (5.8$\times$ more) and 1{,}389{,}000 for SA (11.1$\times$ more). Similar computational cost holds for MNC (GA: 188{,}000; HC: 619{,}000; SA: 2{,}278{,}000) as well. However, for MNU, GA (1{,}109{,}000) underperforms slightly (HC: 551{,}000; SA: 1{,}611{,}000), but median performance still remains very good, which can be understood from the runtime curves presented in the lower panel of \Cref{fig:comparison_combined}. We also notice that both GA and HC are in agreement with each other in terms of sensitivity analysis with respect to $\lambda_3$. Note that the rating presented in the middle row of \Cref{fig:comparison_combined} is adjusted by subtracting the minimum achieved rating by HC for the sake of scaling. We notice improvement in the rating for both ARS and MNU. However, MNC shows a compressed improvement range for both GA and HC. This can be attributed to the club's good pre-transfer window squad rating (7.168), leaving limited room for improvement. This is also a reason that overall MNC tends to spend more to ensure squad quality is improved, unlike the other two clubs who spend close to zero for lower values of $\lambda_3$. Lastly, in \Cref{fig:pareto_dominance_heatmap}, we present the pairwise dominance for these three methods. We see that SA is dominated by both GA and HC across nearly all comparisons. On the contrary, GA and HC are roughly competitive with each other, with neither consistently dominating the other across the three clubs simultaneously. 
  
\begin{figure}[!htbp]
    \centering
    \includegraphics[width=0.8\textwidth]{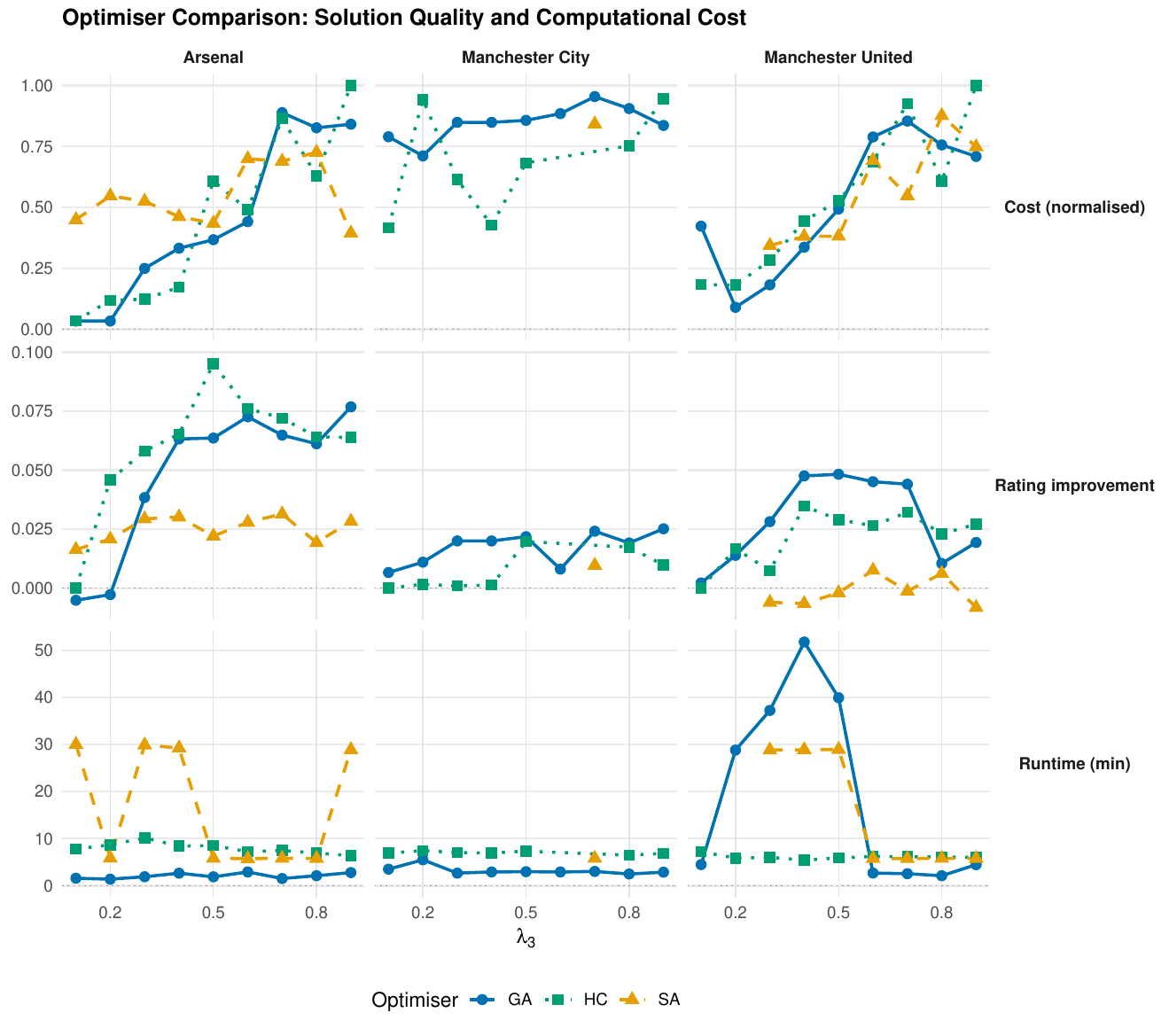}
    \caption{Normalised transfer cost and squad rating improvement by optimiser and quality weight $\lambda_3$, for Arsenal, Manchester City, and Manchester United. Cost is scaled by the maximum observed cost per club; rating improvement is the difference between the recommended new squad mean rating and the minimum squad mean obtained by HC for each time. Runtime is measured in wall-clock minutes and includes any rerun triggered by an infeasible initial solution.}
    \label{fig:comparison_combined}
\end{figure}
 
\begin{figure}[!htbp]
    \centering
    \includegraphics[width=0.8\textwidth]{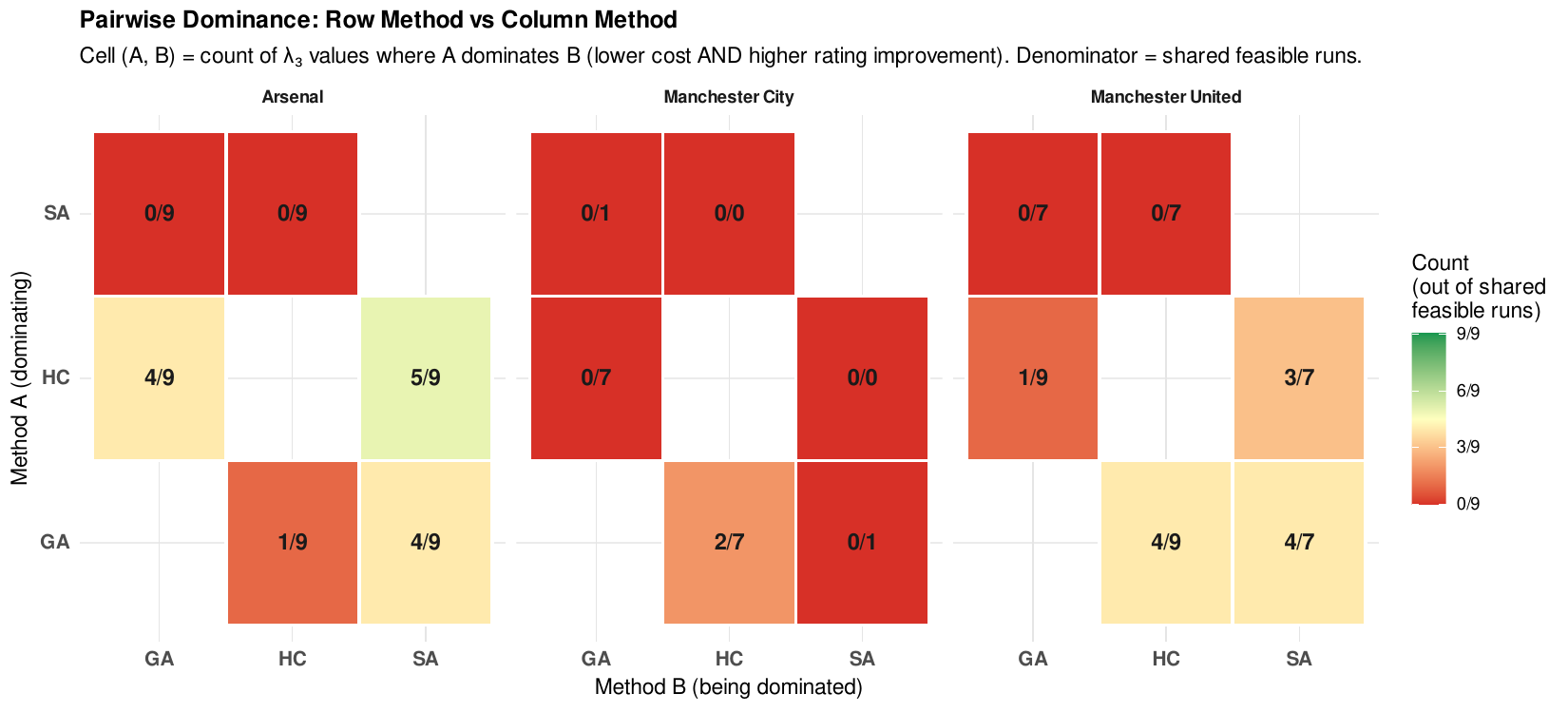}
    \caption{Pairwise dominance matrix for GA, HC, and SA across Arsenal, Manchester City, and Manchester United. Cell $(A, B)$ shows the number of $\lambda_3$ values (out of the shared feasible runs between the two methods) at which method $A$ simultaneously achieves lower normalised transfer cost and higher squad rating improvement than method $B$. A value of $k/n$ indicates that $A$ dominates $B$ on $k$ out of $n$ shared $\lambda_3$ values with a feasible solution. Red cells indicate near-zero dominance; green cells indicate high dominance.}
    \label{fig:pareto_dominance_heatmap}
\end{figure}

Based on these benchmarking results we see that GA achieves full feasibility on all clubs and gives competitive or superior solution quality relative to HC. Besides that, GA requires substantially lower computational time relative to both alternatives. Moreover, the pairwise dominance analysis confirms that GA and HC give us similar Pareto frontiers, indicating the convergence of the GA solution towards the optimal. Therefore, we conclude that GA recommendations can be used for transfer strategies.

\subsection{Additional results on the recommended transfer strategy}

In Section 3.3 of the main paper, we discuss key findings and recommendations from the optimised transfer strategy. This section is dedicated to presenting more detailed results about the same. First, in \Cref{fig:multi:bid}, we present the list of players who were simultaneously identified as attractive transfer targets for multiple English Premier League (EPL) clubs under the proposed framework. The entries correspond to the estimated transfer expenditure (in million euros) associated with acquiring the player for the respective club. The figure illustrates that competition for transfer targets can vary substantially across players. While some players, such as A.~Jahanbakhsh or K.~de Bruyne, are identified as suitable targets by only one club, some players attract interest from multiple clubs simultaneously. In particular, A.~Traore appears to be the most demanded player in the optimisation output, being targeted by Manchester City (MNC), Watford (WAT), Tottenham Hotspur (TOT), and Newcastle United (NEW). Such overlap creates competitive pressure in the transfer market and motivates the auction-theoretic extension discussed in the manuscript. We note that the estimated transfer fees in \Cref{fig:comparison_combined} are model-implied fee benchmarks based on publicly observable variables. In competitive auction settings, the realised prices may exceed these benchmarks substantially, as the equilibrium analysis in Section~3.4 of the main paper and the multi-round extension in \Cref{sec:mult-bid} below show. 

\begin{figure}[h]
    \centering
    \includegraphics[width=0.9\textwidth]{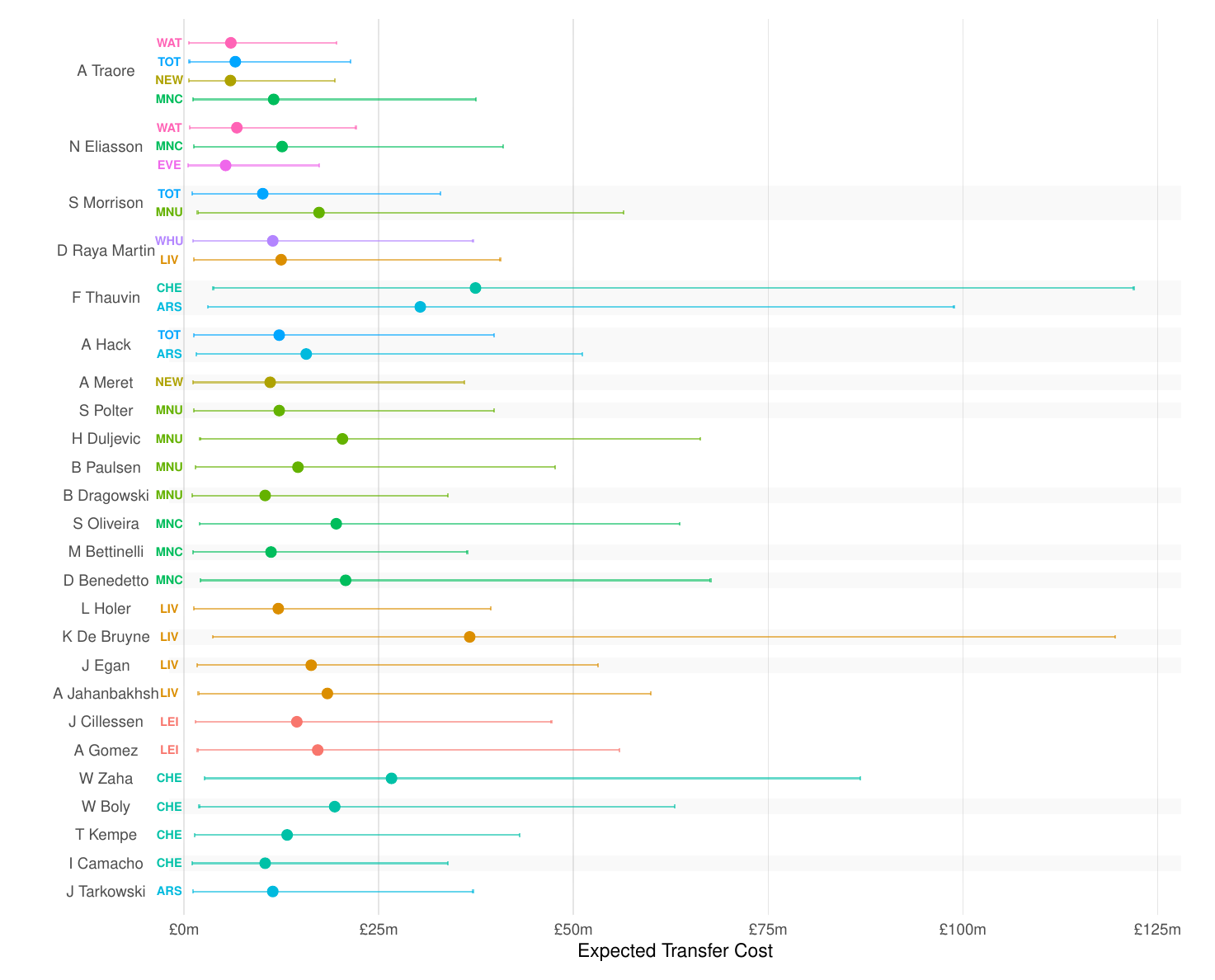}
    \caption{Most expensive players and corresponding bidders according to our optimisation strategy. For each player, the expected price presented by solid points and the 95\% confidence bounds are presented with lines.}
    \label{fig:multi:bid}
\end{figure}

We also notice that the estimated transfer fees vary considerably across clubs for the same player, reflecting the disparity in football transfers. For instance, the projected fee associated with Traore ranges from $\text{\euro} 4.1$ million for TOT to $\text{\euro} 11.5$ million for MNC. Similarly, N.~Eliasson is valued differently by Everton (EVE), MNC, and WAT. These discrepancies arise because the transfer-fee model explicitly incorporates buyer-side effects, league-level financial structures, and team-specific contextual variables. Consequently, the same player may generate different economic valuations depending on the financial strength, squad composition, and tactical requirements of the acquiring club. Moreover, several of the recommendations reveal evidence of strategic differentiation across clubs. Top clubs such as Manchester City, Chelsea, Liverpool, and Arsenal tend to target substantially more expensive players, including F.~Thauvin, K.~de Bruyne, and I.~Pussetto (not presented in the figure due to the large value of the estimated transfer fee). In contrast, clubs operating under tighter budget constraints appear to focus on lower-cost or medium-cost acquisitions that still improve squad balance under the imposed optimisation constraints. This behaviour is consistent with the weighting structure in the optimisation problem, where wealthier clubs place greater emphasis on squad quality while financially constrained clubs place relatively higher emphasis on expenditure and transfer risk.

Another important observation is that the optimisation framework does not merely recommend globally elite players. Instead, many suggested targets are relatively undervalued or niche players whose projected contribution-to-cost ratio fits the structural needs of a particular club. This demonstrates that the framework captures transfer efficiency rather than relying purely on star-player acquisition. In practical terms, the model favours players whose expected marginal contribution to squad quality justifies the associated transfer expenditure under the club-specific financial and positional constraints.

Overall, \Cref{fig:multi:bid} provides additional evidence that football transfers should not be viewed as isolated bilateral negotiations. Instead, transfer outcomes emerge from an interconnected competitive market in which multiple clubs may simultaneously identify the same player as an optimal acquisition target. This interaction between optimisation-based squad construction and strategic market competition provides a key motivation for integrating transfer planning with the auction theoretic framework developed in this paper.

\section{Competitive bidding in multiple rounds}\label{sec:mult-bid}

In football transfer, transfer talks between the seller club and buyer clubs can happen over multiple rounds, as a single round of bidding may not lead to a satisfactory price agreement. So, we extend the single-round auction model to a dynamic game of up to $T$ rounds, in which the buyers' belief about the reserve price evolves after each rejection. Note that, in reality, buyer clubs negotiate with multiple seller clubs for multiple players to achieve their goal of signing a specific type of player, and there is an obvious interaction between the bidding strategies. In this discussion, we refrain from those aspects of the multi-round negotiation paradigm for ease of understanding.

\subsection{Negotiation across rounds}
\label{sec:dynamic}

In general, in the simplest case, we assume that each buyer club submits a bid for their desired player and the seller club evaluates the submitted bids to decide on the transfer of the player. Since, in football, it is customary to have a transfer deadline, we consider that the transfer negotiation proceeds over at most $T$ rounds. The negotiation within round $r$ proceeds as follows. The notations used in this section bear the same meaning as in the main paper.

\begin{enumerate}

    \item Each club $c\in\mathcal{C}^i$ privately observes its valuation $S_c$ and submits a (undisclosed/sealed) bid $b_c = \kappa_c^{(r)}(S_c)$, where $\kappa_c^{(r)}$ is the round-$r$ equilibrium bidding function derived below. Valuations are persistent private information and do not change across rounds.
    
    \item The seller observes the highest bid $\tau^r = \max_{c\in\mathcal{C}^i} b_c$ (set $\tau^0 = 0$ as convention) and the identity of the corresponding club $c^* = \arg\max_c b_c$. The seller then accepts the bid with probability
    \begin{equation}
        a^{(r)} = \widehat{p}_{c^*}(\tau^r)\cdot \frac{H(\tau^r)-H(\tau^{r-1})}{H(\upsilon_{\mathsf{thresh}})-H(\tau^{r-1})},
    \label{eq:accept-prob}
    \end{equation}
    where $\upsilon_{\mathsf{thresh}}$ is the truncation amount (akin to the buyout clause discussed in the paper).
    
    \item If the seller accepts, the player is sold to $c^*$ at price $\tau^r$ and the negotiation ends. If rejected, the seller announces the rejected bid $\tau^r$ publicly, and the bidding moves to round $r+1$. 
    
    \item If no sale occurs after $T$ rounds, the player is not transferred.
    
\end{enumerate}

Clearly, a rejection in round $r-1$ at threshold $\tau^{r-1}$ is informative in nature. It reveals publicly that the seller's reserve satisfies $\rho > \tau^{r-1}$. All buyer clubs will update their beliefs accordingly. Moreover, for the seller, it becomes imperative not to sell at lower prices in the next round. Therefore, the seller's effective reserve distribution in round $r$ is the left-truncated CDF
\begin{equation*}
    \widehat{H}^{(r)}\left(b\right)
    = \Pr \left(\rho \leqslant b \mid \tau^{r-1} < \rho \leqslant \upsilon_{\mathsf{thresh}}\right)
    = \frac{H(b)-H(\tau^{r-1})}{H(\upsilon_{\mathsf{thresh}})-H(\tau^{r-1})},
    \quad b \in \left(\tau^{r-1}, \upsilon_{\mathsf{thresh}}\right],
\end{equation*}
so that $\widehat{H}^{(r)}(\tau^{r-1})=0$ and $\widehat{H}^{(r)}(\upsilon_{\mathsf{thresh}})=1$ for all $r\geqslant 2$. We confirm that the same in round 1 corresponds to $\tau^0=0$, giving $\widehat{H}^{(1)}(b) = H(b)/H(\upsilon_{\mathsf{thresh}})$, which reduces to the buyout-truncated distribution $\widehat{H}(b)$ as described in the manuscript. The hazard rate of this is
\begin{equation}
    \lambda^{(r)}(b) = \frac{h(b)}{H(b)-H(\tau^{r-1})},
\label{eq:hazard-update}
\end{equation}
which is the only structural change to the equilibrium ODE between rounds. The fixed hazard $h(b)/H(b)$ is replaced by $\lambda^{(r)}(b)$ from round $r$ onwards.

Similarly, while club valuations $S_c$ are persistent private information and do not change across rounds, participating in round $r\geqslant 2$ implies truncated support. Any club still participating must have submitted a bid of at least $\tau^{r-1}$ in round $r-1$, which requires a valuation of at least
\begin{equation}
    \underline{s}_c^{(r)} = \psi_c^{(r-1)}\left(\tau^{r-1}\right),
\label{eq:lower-support}
\end{equation}
the minimum valuation that would induce club $c$ to bid $\tau^{r-1}$ in equilibrium. This is a screening condition and not a distributional change. The valuation $S_c$ has not changed, but the club's bidding strategy in round $r$ must be consistent with the fact that $S_c \geqslant \underline{s}_c^{(r)}$. Accordingly, the round-$r$ equilibrium ODE is solved conditional on $S_c \geqslant \underline{s}_c^{(r)}$, which yields the left-truncated distribution
\begin{equation*}
    \widehat{F}^{(r)}_c(s) =
    \frac{F_c(s)-F_c\left(\underline{s}_c^{(r)}\right)}
         {1-F_c\left(\underline{s}_c^{(r)}\right)},
    \qquad
    \widehat{f}^{(r)}_c(s) =
    \frac{f_c(s)}
         {1-F_c\left(\underline{s}_c^{(r)}\right)}.
\end{equation*}
The normalising constant $1-F_c(\underline{s}_c^{(r)})$ cancels in the ratio $\widehat{F}_c^{(r)}/\widehat{f}_c^{(r)}$ entering the ODE, so the only change to the equilibrium condition is replacing $F_c(\psi_c(b))$ with $F_c(\psi_c(b))-F_c(\underline{s}_c^{(r)})$ in the numerator. The lower supports $\underline{s}_c^{(r)}$ are club-specific as the Maskin-Riley ordering implies $\psi_1^{(r-1)}(\tau^{r-1}) < \psi_j^{(r-1)}(\tau^{r-1})$ for $\mu_1>\mu_j$. Thus, stronger clubs require a lower valuation to justify any given bid, implying a lower screening threshold. Since in round~1 there is no prior rejection, for all $c$ we set $\underline{s}_c^{(1)} = \underline{s}$, a common lower bound below which club valuations have negligible probability mass. Now, substituting the updated reserve hazard \eqref{eq:hazard-update} and truncated valuation ratio into the Bayes-Nash first-order condition of \cref{thm:asymm-ode} yields the equilibrium system for the $r^{th}$ round, for $c\in\mathcal{C}^i$, $b\in[b_{\min}^{(r)}, \upsilon_{\mathsf{thresh}}]$:
\begingroup
\begin{small}
\begin{equation}
    \begin{split}
        \psi_c^{(r)\prime}(b) = \frac{1}{\abs{\mathcal{C}^i}-1}\cdot \frac{F_c\left(\psi_c^{(r)}(b)\right)-F_c\left(\underline{s}_c^{(r)}\right)} {f_c\left(\psi_c^{(r)}(b)\right)} &\left[\sum_{j\neq c}\left(\frac{1}{\psi_j^{(r)}(b)-b}-\frac{p_j'(b)}{p_j(b)}\right)\right. \\
        & \left.-(\abs{\mathcal{C}^i}-2)\left(\frac{1}{\psi_c^{(r)}(b)-b} -\frac{p_c'(b)}{p_c(b)}\right) -\frac{h(b)}{H(b)-H(\tau^{r-1})}\right],
    \end{split}
\label{eq:ode-round-r}
\end{equation}
\end{small}
\endgroup

Akin to the arguments to the single-round case of \cref{thm:asymm-ode}, the existence of a monotone Bayes-Nash equilibrium for each round $r$ is satisfied when the truncated reserve hazard $h(b)/\bigl(H(b)-H(\tau^{r-1})\bigr)$ and the truncated valuation ratio $F_c(\psi_c(b))-F_c(s_c^{(r)})$ remain strictly positive on $[b^{(r)}_{\min}, \upsilon_{\mathsf{thresh}}]$. We introduce the bid-gap parameter $\Delta_B^{(r)}$ in \eqref{eq:deltaB} below to ensure this condition holds numerically at the lower boundary. Starting the bid grid at $b^{(r)}_{\min} = \tau^{r-1} + \Delta_B^{(r)}$ ensures the bracket in \eqref{eq:ode-round-r} to be non-negative at the initial grid point. So that the BVP (boundary value problem) collocation procedure begins in a region where a monotone solution exists.

To define the bid-gap parameter, we note that the truncated reserve hazard $h(b)/(H(b)-H(\tau^{r-1}))$ diverges as $b\to\tau^{r-1+}$. Then, applying the first-order Taylor expansion, we get 
\begin{equation*}
    H(\tau^{r-1}+\delta)-H(\tau^{r-1}) = h(\tau^{r-1})\,\delta+O(\delta^2),
\end{equation*}
implying that the hazard is $O(1/\delta)$ near $\tau^{r-1}$. Starting the bid grid at $b_{\min}^{(r)}=\tau^{r-1}+\varepsilon$ for small $\varepsilon$ makes the bracket in \eqref{eq:ode-round-r} potentially negative, signalling that no monotone equilibrium exists at that grid point. With $\mathcal{B}_c^{(r)}$ denoting the bracket in \eqref{eq:ode-round-r} evaluated at $b=\tau^{r-1}+\delta$ with all clubs at their lower supports, we define the bid-gap parameter $\Delta_B^{(r)}$ such that
\begin{equation}
    \Delta_B^{(r)} = \min\left\{\delta>0 :
        \mathcal{B}_c^{(r)}\left(\tau^{r-1}+\delta,\,
        \underline{\boldsymbol{s}}^{(r)}\right)>0
        \;\;\forall\,c\in\mathcal{C}^i
    \right\}.
\label{eq:deltaB}
\end{equation}
This depends on $\tau^{r-1}$ and the reserve parameters $(\mu_\rho,\sigma_\rho)$. At $b_{\min}^{(r)}$, all gap terms $1/(\underline{s}_c^{(r)}-b_{\min}^{(r)})$ enter the bracket symmetrically and the $(\abs{\mathcal{C}^i}-1)$ denominator cancels when checking the sign. Economically, we can argue that $\Delta_B^{(r)}$ represents the minimum bid premium above $\tau^{r-1}$ that a club must offer for the acceptance probability to be non-negligible in round $r$. Then, the system of ODE in \eqref{eq:ode-round-r} is solved on $[b_{\min}^{(r)},\,\upsilon_{\mathsf{thresh}}]$ where $b_{\min}^{(r)} = \tau^{r-1} + \Delta_B^{(r)}$. The lower boundary condition imposes
\begin{equation}
    \psi_c^{(r)}\left(b_{\min}^{(r)}\right) = \underline{s}_c^{(r)},
    \qquad c\in\mathcal{C}^i.
\label{eq:lower-bc}
\end{equation}
No upper boundary conditions are imposed as log-normal distributions have unbounded support, so there is no common finite upper valuation at which all $\psi_c^{(r)}$ will converge to.

\subsection{Numerical solution}
\label{sec:algorithm}

We solve the equilibrium ODE in \eqref{eq:ode-round-r} as a BVP using a collocation approach \citep{ascher1979} rather than an initial value problem since the forward integration from $b_{\min}^{(r)}$ is numerically unstable. The gap term $1/(\psi_c-b)$ is large near $b_{\min}^{(r)}$ where $\psi_c \approx b + \underline{s}_c^{(r)} - b_{\min}^{(r)}$, and small perturbations in the initial condition can lead to degeneracy \citep{fibich2011}. Our proposed BVP approach, therefore, avoids this issue by solving for all grid values simultaneously. For that, we discretise the bid space with a uniform grid of $N=80$ points $b_1 < \cdots < b_N$ over $[b_{\min}^{(r)},\upsilon_{\mathsf{thresh}}]$ with step $h = (b_N -b_1)/(N-1)$. For each club $c$ and each grid point $b_n$, the derivative $\psi_c^{(r)\prime}(b_n)$ is approximated by
\begin{equation}
    \psi_c^{(r)\prime}(b_n) \approx
    \begin{cases}
        \dfrac{\psi_c(b_{n+1})-\psi_c(b_{n-1})}{2h} & n=2,\ldots,N-1,\\[6pt]
        \dfrac{\psi_c(b_N)-\psi_c(b_{N-1})}{h} & n=N.
    \end{cases}
\label{eq:finite-diff}
\end{equation}
Substituting \eqref{eq:finite-diff} into the scaled residual form of \eqref{eq:ode-round-r} we get
\begin{equation*}
    \frac{f_c(\psi_c(b))}{F_c(\psi_c(b))-F_c(\underline{s}_c^{(r)})}\,
    \psi_c^{(r)\prime}(b)
    - \frac{\mathcal{B}_c^{(r)}(b,\boldsymbol\psi)}{C-1} = 0.
\end{equation*}
Together with the lower boundary condition $\psi_c(b_1) = \underline{s}_c^{(r)}$, the above yields a system of $C\times N$ nonlinear algebraic equations in the $C\times N$ unknown grid values $\{\psi_c^{(r)}(b_n)\}_{c,n}$. This system is solved by Broyden's quasi-Newton method \citep{Broyden1965} using the \texttt{R} package \texttt{nleqslv} \citep{nleqslv}. The warm start $\psi_c^{(0)}(b) = b + (\underline{s}_c^{(r)} - b_{\min}^{(r)})$ satisfies the lower boundary condition exactly and ensures $\psi_c>b$ at every grid point, which is required for the gap term $1/(\psi_c-b)$ to remain finite throughout. A solution is accepted only if $\psi_c^{(r)}$ is non-decreasing at all $N$ nodes (tolerance $10^{-5}$), which verifies that a monotone Bayes-Nash equilibrium has been recovered.

However, solving the BVP at every possible $\tau$ encountered during simulation is computationally expensive. Instead, we pre-compute BVP solutions at a grid of $n_\tau$ representative rejection thresholds and interpolate between them during simulation.

We summarise our computation scheme in the list below.

\begingroup
\begin{small}
\setstretch{0.7}
\begin{enumerate}
    \item At first, we set $\tau^0=0$ and $\underline{s}_c^{(1)}=\underline{s}$ for all $c$. Compute $\Delta_B^{(1)}$ via \eqref{eq:deltaB} and solve the round~1 BVP on $[b_{\min}^{(1)},\upsilon_{\mathsf{thresh}}]$.

    \item Afterwards, we simulate 300 independent valuation draws, compute equilibrium bids using $\psi_c^{(1)}$, and record the distribution of $\tau = \max_c\kappa_c^{(1)}(S_c)$. Take the 20\% quantile to 90\% quantile of this distribution as candidate grid points $\{\tau_g\}$.

    \item Finally for each candidate $\tau_g$, in increasing order:
    \begin{enumerate}
        \item Compute the club-specific lower supports $\underline{s}_c^{(r)} = \psi_c^{(r-1)}(\tau_g)$ by evaluating the most recently computed BVP solution at $\tau_g$.
        
        \item Check feasibility: $\underline{s}_c^{(r)} < 0.95\,\upsilon_{\mathsf{thresh}}$ for all $c$, and $\tau_g + \Delta_B^{(r)} < \upsilon_{\mathsf{thresh}}$.
        
        \item Compute $\Delta_B^{(r)}$ using \eqref{eq:deltaB} and solve the round-$r$ BVP on $[\tau_g+\Delta_B^{(r)},\,\upsilon_{\mathsf{thresh}}]$ with lower BC given by \eqref{eq:lower-bc}.
        
        \item Accept the solution only if monotone; otherwise, skip this grid point.
    \end{enumerate}
    Stop when any candidate $\tau_g$ is infeasible.
\end{enumerate}

Once we have our precomputed look-up data, for illustration purposes and uncertainty evaluation, we sample $N_\text{sim}$ number of replicates and for each of $N_\text{sim}$ auctions:
\begin{enumerate}
    \item Draw independent valuations $S_c \sim F_c$ for all
    $c\in\mathcal{C}^i$.

    \item For each round $r = 1,\ldots,T$:
    \begin{enumerate}
        \item Locate the nearest feasible monotone grid point $\tau_g \leqslant \tau^{r-1}$ in the lookup table. Retrieve the corresponding BVP solution $\psi_c^{(r)}$.
        
        \item Compute bids by inverting via linear interpolation: $b_c = \kappa_c^{(r)}(S_c) = [\psi_c^{(r)}]^{-1}(S_c)$ for all $c$.
        
        \item Identify the winner $c^* = \arg\max_c b_c$ and the winning bid $\tau^r = \max_c b_c$.

        \item Compute the acceptance probability from \eqref{eq:accept-prob} and draw $U\sim\mathrm{Uniform}(0,1)$. If $U \leqslant a^{(r)}$, record a sale at price $\tau^r$ and stop. Otherwise record rejection, set $\tau^{r-1} \leftarrow \tau^r$, and continue to round $r+1$.
    \end{enumerate}

    \item If no sale occurs after $T$ rounds, record the path as unsold.
\end{enumerate}
\end{small}
\endgroup

\subsection{Results}

Following our discussions of single-round transfers for two players (M.~Almiron and A.~Traore) in the main manuscript, we extend the analyses with multi-round negotiations for the same players. For that, we keep our initialisation the same as a single round transfer. We set a minimum bid-gap of $\Delta_B=0.7$ to bound the truncated reserve hazard at $O(1)$ throughout. In \cref{tab:hazard-multi}, we report the truncated reserve hazard $h(b_{\min})/\bigl(H(b_{\min})-H(\tau)\bigr)$ at representative rejection thresholds for both players. Without the bid-gap adjustment, the reference hazard $h/H$ (final column) decays toward zero as $\tau$ increases, making the ODE bracket ill-conditioned. The adjusted hazard remains bounded between 1.2 and 2.6 throughout, ensuring numerical stability across all grid points used in the simulation. 

\begin{table}[!h]
\centering
\caption{Reserve hazard at adjusted values and true values for the two players targeted by multiple clubs.}
\label{tab:hazard-multi}
{\scriptsize
\begin{tabular}{lcccc}
\toprule
Player & $\tau$ & $H(b_{\min})-H(\tau)$ &
  $h/[H-H(\tau)]$ & $h/H$ \\
\midrule
\multirow{6}{*}{Almiron}
  & 0.000 & 0.0481 & 2.591 & 2.591 \\
  & 4.207 & 0.0587 & 1.255 & 0.131 \\
  & 5.535 & 0.0441 & 1.261 & 0.085 \\
  & 6.442 & 0.0367 & 1.266 & 0.066 \\
  & 8.233 & 0.0262 & 1.274 & 0.043 \\
  & 9.636 & 0.0205 & 1.280 & 0.032 \\
\midrule
\multirow{6}{*}{Traore}
  & 0.000 & 0.1772 & 1.843 & 1.843 \\
  & 4.207 & 0.0447 & 1.192 & 0.064 \\
  & 5.535 & 0.0283 & 1.212 & 0.038 \\
  & 6.442 & 0.0214 & 1.223 & 0.028 \\
  & 8.233 & 0.0131 & 1.240 & 0.017 \\
  & 9.636 & 0.0093 & 1.250 & 0.012 \\
\bottomrule
\end{tabular}}
\end{table}

After ensuring the stability of the numerical scheme, we perform our analyses of the multi-round negotiation. We present the price distribution across rounds in \cref{tab:price-round-multi} and the top left panels in Figures \ref{fig:almiron:multi} and \ref{fig:traore:multi}. The round~2 price dynamics diverge sharply across the two players. For Almiron, the round~2 mean price ($\text{\euro} 12.8$ million) exceeds round~1 ($\text{\euro} 8.8$ million) by $\text{\euro} 4$ million and the round~2 median ($\text{\euro} 11.6$ million) also exceeds the round~1 mean. For Traore, the round~2 mean ($\text{\euro} 9.2$ million) lies only $\text{\euro} 0.7$ million above round~1 ($\text{\euro} 8.6$m). This contrast is explained by the reserve distributions. Almiron's seller has $\mu_\rho=1.5$, placing substantial probability mass in the range $\text{\euro} 10$ million--$\text{\euro} 16$ million. Bidding paths reaching round~2 for Almiron have experienced a round~1 rejection at $\tau^1\approx\text{\euro} 8$ million--$\text{\euro} 10$ million, so the posterior reserve is concentrated in a high range, and clubs respond with substantially elevated bids. For Traore ($\mu_\rho=0.7$, median reserve $\approx\text{\euro} 2$ million), the posterior mass above round~1 bids is more diffuse, producing only a modest price increase in round~2. The round~3 and round~4 prices for Almiron remain elevated relative to round~1 (means of $\text{\euro} 11.1$ million and $\text{\euro} 10.6$ million), confirming persistent upward price pressure from the reserve-updating mechanism throughout the auction.

\begin{table}[!h]
\centering
\caption{Transfer price (in million euros) conditional on sale occurring in the given round, and rate of sale by rounds conditional on sale not happening before.}
\label{tab:price-round-multi}
{\scriptsize
\begin{tabular}{crrrrrrrr}
\toprule
& \multicolumn{4}{c}{Almiron (3 competing clubs)} &
  \multicolumn{4}{c}{Traore (4 competing clubs)} \\
\cmidrule(lr){2-5}\cmidrule(lr){6-9}
Round & $n$ & Mean & Median & Rate & $n$ & Mean & Median & Rate \\
\midrule
1 & 1,184 &  8.84 &  8.42 & 59.2\% & 1,622 &  8.59 &  8.57 & 81.1\% \\
2 &   126 & 12.83 & 11.59 & 15.4\% &    63 &  9.28 &  9.93 & 16.7\% \\
3 &    34 & 11.13 & 10.19 & 4.9\% &    17 &  8.77 &  8.43 & 5.4\% \\
4 &     8 & 10.62 & 12.14 & 1.2\% &    3 & 10.12 &  9.94 & 1.0\% \\
\midrule
Overall & 1,352 & 9.28 & 8.69 & 67.6\% & 1,705 & 8.62 & 8.57 & 85.3\% \\
\bottomrule
\end{tabular}}
\end{table}

However, it is worth noting that since majority of the sales happen in the first round of negotiations, overall mean transfer price is $\text{\euro} 9.3$ million for Almiron and $\text{\euro} 8.6$ million for Traore, which we illustrate in the top right panels of Figures \ref{fig:almiron:multi} and \ref{fig:traore:multi}. Moreover, for the latter, due to the low buyout clause amount, we can see a spike at $\upsilon_{\mathsf{thresh}} = \text{\euro} 12.3$ million, as we have seen earlier in our analysis with a single round of negotiations (refer to Section 3.4 of the main paper).

We also present the conditional sale rates in \cref{tab:price-round-multi} and the bottom left panels of Figures \ref{fig:almiron:multi} and \ref{fig:traore:multi}. We see that these rates are strikingly similar for both players despite their structural differences. The conditional sale rates are approximately 59--81\% in round~1, 15--17\% in round~2, 5\% in round~3, and 1\% in round~4. This convergence arises because, conditional on reaching round $r$, both auctions face qualitatively similar posterior reserve distributions relative to the surviving bid levels. The sharp decline from round~1 to 2 ($\sim80\%\rightarrow17\%$) reflects the reserve-update screening mechanism: paths reaching round~2 have already experienced a round~1 rejection, revealing that $\rho>\tau^1$ and concentrating the posterior reserve in a region where equilibrium bids remain unlikely to clear it. Across 5 rounds, Traore is sold in 85.3\% of the negotiations versus 67.6\% for Almiron, a gap driven by the difference in reserve distributions ($\mu_\rho=0.7$ versus $\mu_\rho=1.5$). Both figures exceed the corresponding single-round sale rates (71.8\% and 55.8\%), confirming that the multi-round structure raises sale probability by 13--14 \% by offering additional rounds of negotiations to reach a common ground.

\begin{figure}[!htbp]
    \centering
    \includegraphics[width=0.7\linewidth]{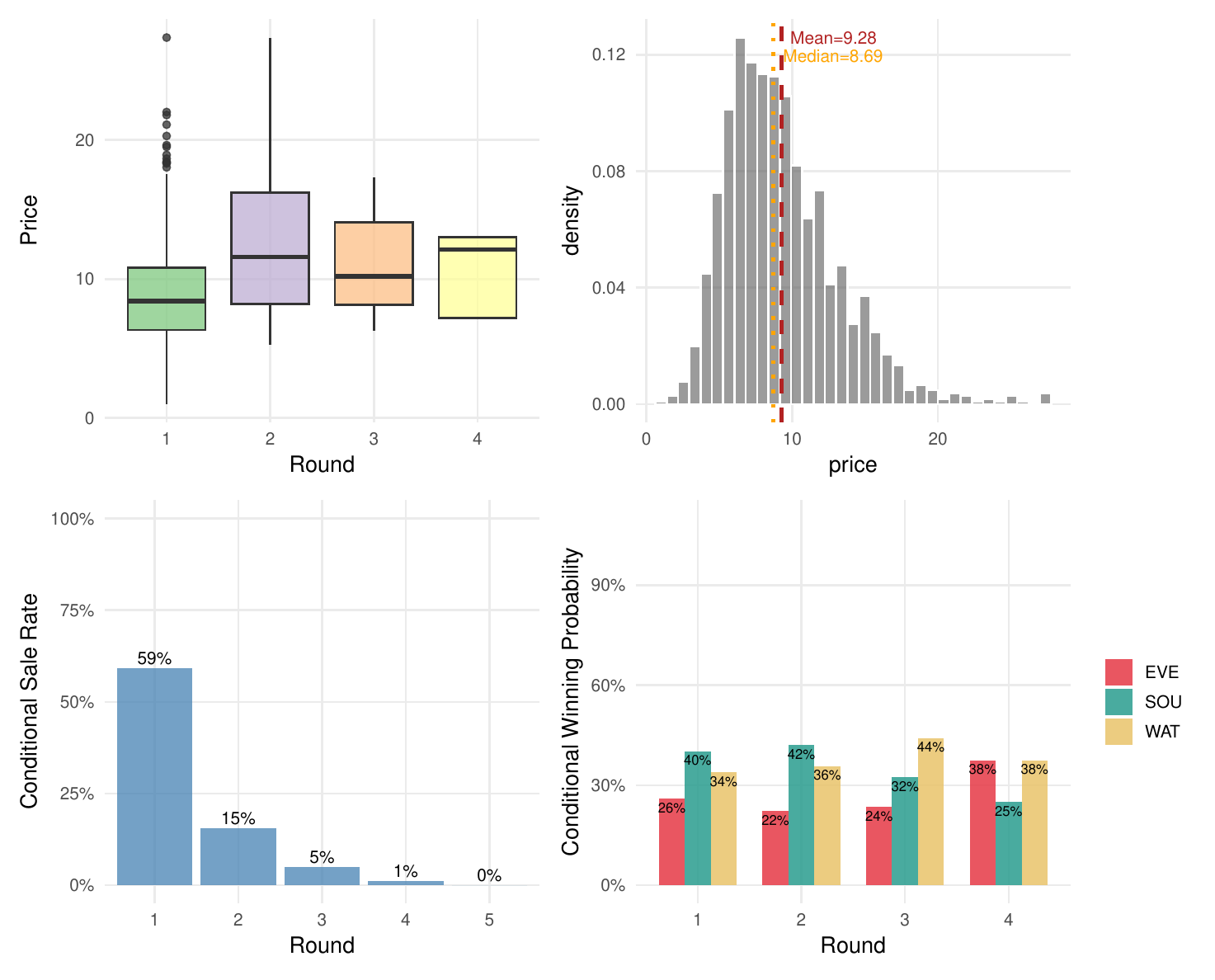}
    \caption{Summary of multi-round negotiation for Almiron: (top-left) round-wise price distribution; (top-right) overall price distribution; (bottom-left) round-wise conditional sale rates; (bottom-right) round-wise win \% for different teams.}
    \label{fig:almiron:multi}
\end{figure}

\begin{figure}[!htbp]
    \centering
    \includegraphics[width=0.7\linewidth]{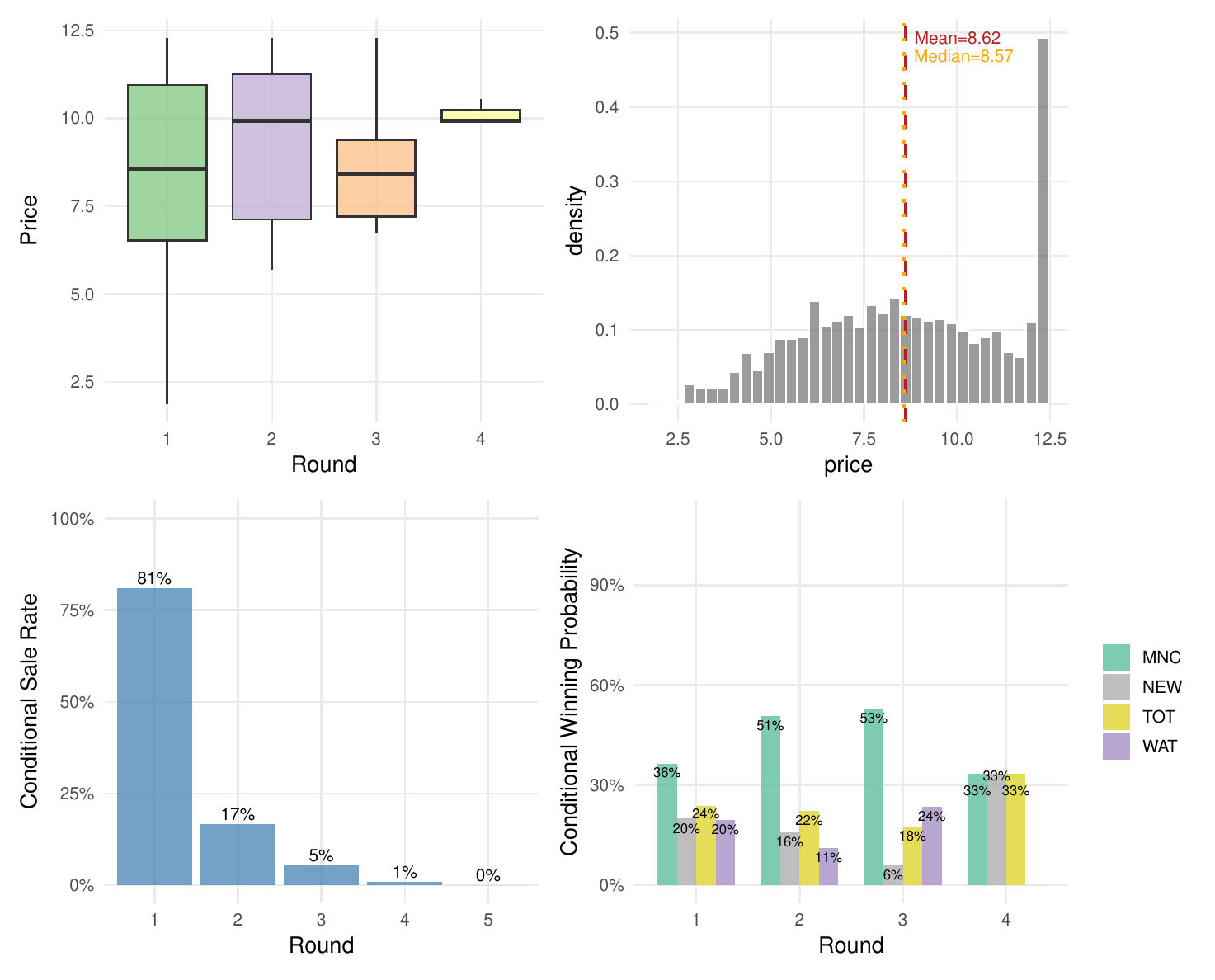}
    \caption{Summary of multi-round negotiation for Traore: (top-left) round-wise price distribution; (top-right) overall price distribution; (bottom-left) round-wise conditional sale rates; (bottom-right) round-wise win \% for different teams.}
    \label{fig:traore:multi}
\end{figure}

Finally, we conclude our illustration by looking at the conditional win probabilities on the bottom right panels of Figures \ref{fig:almiron:multi} and \ref{fig:traore:multi}. We see that the competitive dynamics for Almiron is more or less stable across rounds. Southampton maintains 40.2\%--42.1\% through rounds~1 and 2 before declining to 32.4\% and 25.0\% in rounds~3 and 4. Watford gains progressively: 33.9\% in round~1 rising to 44.1\% in round~3 and 37.5\% in round~4. This partial reversal arises because the three clubs have similar $\mu_c$ values (1.2, 1.5, 1.7) and no single club has an overwhelming bidding advantage over others. Everton's share is remarkably stable across rounds~1--3 (25.9\%, 22.2\%, 23.5\%), suggesting that its lower $\mu_c=1.2$ does not prevent it from competing in hard auctions given sufficient valuation draws. On the contrary, we do not see an equal competition in bidding for Traore. Even though it may seem that Manchester City's conditional win probability for Traore is lower than that of Southampton for Almiron, it is worth mentioning that for the former, four clubs are participating in the competition and MNC dominates the next club by 17\% in the 1st round of bidding which is just 6\% for SOU when they are bidding for Almiron against two competitors. This disparity keeps growing in the next 2 rounds as MNC's conditional win probability reaches 53\% in round~3. The apparent parity in round~4 is not conclusive as only three sales occur in that round, making the computed conditional probabilities unreliable.

\begin{table}[!h]
\centering
\caption{Summary comparison of single-round and multi-round auction outcomes ($N=2{,}000$ simulations per player).}
\label{tab:comparison}
{\scriptsize
\begin{tabular}{lcccc}
\toprule
& \multicolumn{2}{c}{Almiron (3 competing clubs)} &
  \multicolumn{2}{c}{Traore (4 competing clubs)} \\
\cmidrule(lr){2-3}\cmidrule(lr){4-5}
& Single & Multi & Single & Multi \\
\midrule
Sale probability         & 55.8\% & 67.6\% & 71.8\% & 85.3\% \\
Mean price ($\text{\euro} $ million)  &  9.1 &  9.3 &  8.6 &  8.6 \\
Price SD ($\text{\euro} $ million)    &  3.7 &  3.9 &  2.7 &  2.6 \\
Dominant club (share)      & SOU (41.9\%) & SOU (40.1\%) & MNC (39.0\%) & MNC (37.2\%) \\
\bottomrule
\end{tabular}}
\end{table}

Overall, the results suggest that introducing a multi-round negotiation structure can generate meaningful benefits for both sellers and smaller competing clubs. As illustrated in \cref{tab:comparison}, the probability of a successful transfer increases by 11.8 percentage points for Almiron and by 13.5 percentage points for Traore under the multi-round mechanism. This indicates that allowing repeated negotiations improves the likelihood of reaching an agreement between the seller and at least one buyer. For Almiron, the multi-round structure also increases the average transfer fee, implying a higher expected revenue for the seller club, Newcastle United. In contrast, while the average transfer fee for Traore remains broadly unchanged, the higher sale probability still benefits the seller through a greater likelihood of completing the transaction.

Another interesting outcome of the simulation is the competitive balance within the transfer market. Although financially dominant clubs with stronger valuation distributions continue to outperform smaller clubs in both settings, the extent of their dominance tends to decrease under multi-round negotiations. Specifically, the winning share of the dominant bidder falls from 41.9\% to 40.1\% for Southampton and from 39.0\% to 37.2\% for Manchester City. It is rather difficult to make a conclusion with just two case studies, but part of this can be explained by the structure of the affinity function. Seller clubs may tend to be more affine towards the weaker clubs as they are not a direct threat to their aspirations. This, in turn, helps weaker clubs in later rounds of negotiations. Further, it naturally motivates several interesting future directions, including extending the mechanism to simultaneous multi-player/multi-seller negotiations with endogenous entry and drop-out of bidders across rounds, and estimating the affinity function $p_c(\cdot)$ structurally from observed transfer histories between club pairs. A further extension is to endogenise the buyout threshold and contract-related frictions, which would allow the model to jointly explain observed round-by-round bargaining dynamics and the resulting transfer price dispersion across leagues.

\end{document}